%% file: main-arxiv.tex
\documentclass[a4paper,UKenglish,cleveref,autoref,thm-restate]{lipics-v2021}

\pdfoutput=1
\hideLIPIcs

\input{./extras/metadata}

\input{./extras/macros}

\newtheoremstyle{empty}
  {\topsep}   %
  {\topsep}   %
  {\itshape}  %
  {}       %
  {\bfseries} %
  {}         %
  {5pt plus 1pt minus 1pt} %
  {\textcolor{darkgray}{$\blacktriangleright$}}          %

\newtheorem*{theorem*}{Theorem}

\theoremstyle{empty}
\newtheorem*{statement*}{}

\begin{document}
	
	\maketitle

\input{./extras/abstract}

	\section{Introduction}\label{sec:introduction}
	\input{./sections/introduction.tex}

\input{./sections/relatedwork.tex}

	\section{Preliminaries} \label{sec:preliminaries}

\input{./sections/preliminaries-new}

	\section{The automata disambiguation framework}
	\input{./sections/framework}\label{sec:framework}
	
	\section{Disambiguation schemes for different levels of ambiguity} \label{sec:disambiguation-conditions}
	\input{./sections/disambiguationConditions-new.tex}

	\section{Disambiguation of weighted automata} \label{sec:weighted}
	\input{./sections/weighted.tex}

	\section{Future work}

\input{./sections/conclusions-esa.tex}\label{sec:conclusion}
	
	\newpage
	
	\bibliography{./extras/bib2doi}

	\newpage
	\appendix
	
	\section{A counterexample to Mohri's disambiguation algorithm} \label{app:mohri}
	
	\input{./sections/app-mohri-counterexample.tex}

	\section{Proofs of Section~\ref{sec:framework}} \label{app:framework}

\input{./sections/app-framework.tex}

	\section{Proofs of Section~\ref{sec:disambiguation-conditions}} \label{app:disambiguation-conditions}

\input{./sections/app-disambiguationConditions.tex}

	\section{Proofs of Section~\ref{sec:weighted}} \label{app:weighted}

\input{./sections/app-weighted.tex}

\end{document}

%% file: extras/metadata.tex
\hideLIPIcs

\title{A simple algorithmic framework for disambiguation of finite automata} 
\titlerunning{A simple algorithmic framework for disambiguation of finite automata} 

\author{Mauricio Cari}{Pontificia Universidad Católica de Chile, Chile \\ 
	Millennium Institute for Foundational Research on Data, Chile}{mauricio.cari@uc.cl}{https://orcid.org/0009-0005-5059-8068}{}
\author{Martín Muñoz}{Univ. Artois, CNRS, UMR 8188, Centre de Recherche en Informatique de Lens (CRIL), F-62300 Lens, France; Pontificia Universidad Católica de Chile \& Millennium Institute for Foundational Research on Data (IMFD), Chile}{munoz@cril.fr}{https://orcid.org/0009-0003-3294-6159}{}
\author{Cristian Riveros}{Pontificia Universidad Católica de Chile, Chile \\ 
	Millennium Institute for Foundational Research on Data, Chile}{cristian.riveros@uc.cl}{https://orcid.org/0000-0003-0832-116X}{}

\authorrunning{Mauricio Cari, Martín Muñoz, and Cristian Riveros}
\Copyright{Mauricio Cari, Martín Muñoz, and Cristian Riveros}

\ccsdesc[500]{Theory of computation~Automata extensions}%

\keywords{Algorithmic automata theory, unambiguous automata models, degree of ambiguity, determinization, disambiguation, weighted automata.}

\nolinenumbers %

\funding{The work of Cari and Riveros was supported by ANID Fondecyt Regular project 1230935 and ANID – Millennium Science Initiative Program – Code ICN17\_002.}

%% file: extras/macros.tex
\usepackage[utf8]{inputenc}
\usepackage{xspace}
\usepackage{algorithm}
\usepackage[noend]{algpseudocode}
\usepackage{amsthm}
\usepackage{textcomp}
\usepackage{stmaryrd}
\usepackage{varwidth}
\usepackage{graphicx}
\usepackage{thmtools} %

\usepackage{tikz}
\usetikzlibrary{automata, graphs, fit, shapes, positioning,chains,arrows,snakes,decorations.pathmorphing, calc}

\makeatletter
\@fleqnfalse
\@mathmargin\@centering
\makeatother

\usepackage
[disable]
{todonotes}

\newcommand{\daa}{\mathsf{da}_{\cA}}
\newcommand{\da}{\mathsf{da}}
\newcommand{\degg}{\mathsf{deg}}

\newcommand{\dom}{\mathsf{dom}}
\newcommand{\detr}{\mathsf{det}}

\newcommand{\runs}{\mathsf{runs}}
\newcommand{\supp}{\mathsf{supp}}

\newcommand{\CF}{\mathsf{CF}}
\newcommand{\ICF}{\mathsf{ICF}}
\newcommand{\ECF}{\mathsf{ECF}}

\newcommand{\reach}{\operatorname{Reach}}

\newcommand{\oracleSingle}{\cO_{\mathsf{single}}}
\newcommand{\oracleFull}{\cO_{\mathsf{full}}}

\newcommand{\disDFA}{\cD_{\mathsf{det}}}
\newcommand{\disUFA}{\cD_{\CF}}

\newcommand{\cA}{\mathcal{A}}

\newcommand{\cL}{\mathcal{L}}
\newcommand{\cO}{\mathcal{O}}
\newcommand{\cD}{\mathcal{D}}
\newcommand{\cC}{\mathcal{C}}

\newcommand{\bbN}{\mathbb{N}}

\newcommand{\bbZ}{\mathbb{Z}}

\newcommand{\set}[1]{\{#1\}}

\newcommand{\bbM}{\mathbb{M}}
\newcommand{\bbS}{\mathbb{S}}
\newcommand{\zero}{\overline{0}}
\newcommand{\one}{\overline{1}}
\newcommand{\bbT}{\mathbb{T}}

\newcommand{\cW}{\mathcal{W}}
\newcommand{\vI}{\vec{I}}
\newcommand{\vF}{\vec{F}}
\newcommand{\cF}{\mathcal{F}}

\newcommand{\DeltaFA}{\Delta_{\operatorname{FA}}}
\newcommand{\IFA}{I_{\operatorname{FA}}}
\newcommand{\FFA}{F_{\operatorname{FA}}}

\newcommand{\res}{\mathsf{res}}
\newcommand{\fac}{\mathsf{fac}}

\newcommand{\resMohri}{\res_{\operatorname{M}}}
\newcommand{\facMohri}{\fac_{\operatorname{M}}}
\newcommand{\cFMohri}{\cF_{\operatorname{M}}}
\newcommand{\cFK}{\cF_{\operatorname{K}}}
\newcommand{\PiId}{\Pi_{\operatorname{id}}}

\renewcommand{\paragraph}[1]{\smallskip
	\noindent\textbf{#1}.}

\renewcommand{\operatorname}[1]{\mathsf{#1}}

\algrenewcommand\algorithmicindent{1.3em}%
\algnewcommand\algorithmicforeach{\textbf{for each}}
\algdef{S}[FOR]{ForEach}[1]{\algorithmicforeach\ #1\ \algorithmicdo}

\newcommand{\sem}[1]{\llbracket #1 \rrbracket}
\newcommand{\eps}{\varepsilon}

\makeatletter
\renewcommand\section{\@startsection{section}{1}{\z@}%
	{-0.8\baselineskip plus -0.2\baselineskip minus -0.1\baselineskip}%
	{0.35\baselineskip plus 0.1\baselineskip}%
	{\normalfont\bfseries\large\raggedright}}

\renewcommand\subsection{\@startsection{subsection}{2}{\z@}%
	{-0.6\baselineskip plus -0.2\baselineskip minus -0.1\baselineskip}%
	{0.25\baselineskip plus 0.1\baselineskip}%
	{\normalfont\bfseries\normalsize\raggedright}}

\def\thm@space@setup{%
	\thm@preskip=3pt plus 1pt minus 1pt  %
	\thm@postskip=3pt plus 1pt minus 1pt %
}

\makeatother

\AtBeginDocument{%
	\setlength{\abovedisplayskip}{0.5\abovedisplayskip}%
	\setlength{\belowdisplayskip}{0.5\belowdisplayskip}%
	\setlength{\abovedisplayshortskip}{0.5\abovedisplayshortskip}%
	\setlength{\belowdisplayshortskip}{0.5\belowdisplayshortskip}%
}

%% file: extras/abstract.tex
\begin{abstract}
We study the task of disambiguation of finite state automata, namely, converting an automaton into an equivalent, unambiguous one. We do this by developing a novel and simple algorithmic framework that generalizes the subset construction for determinization, and that satisfies some desirable properties: (1) it preserves the original automaton if it was already unambiguous, (2) it computes the successor states on-the-fly and (3) computes each new state in polynomial time---this last point is crucial as it guarantees that the running time is polynomial in the size of the output automaton. Then, we show how to apply this framework for {\em partial} disambiguation: by changing the criterion that builds the new states, we develop algorithms for different levels of ambiguity, namely, finitely ambiguous, and polynomially ambiguous automata. These algorithms also satisfy condition (1) for their respective levels, and also (2) and (3). Finally, we show that the disambiguation framework can easily be extended to other models of automata like weighted automata. %

\end{abstract}

%% file: sections/introduction.tex
Automata theory is today one of the main areas of theoretical computer science, studying computational models with restricted resources and having applications in different areas of computer science.
For these applications, finding automata models with good algorithmic properties is a crucial task, and \emph{determinism} is probably the most well-known condition for reaching them. For example, deterministic finite automata are efficiently closed under several operations (i.e., with a polynomial-size output); the evaluation problem can be efficiently computed; equivalence and containment problems are tractable (and intractable in general)~\cite{StockmeyerM73}; the number of states can be efficiently minimized~\cite{hopcroft1971n}; and learning can be efficiently performed in an active setting~\cite{angluin1987learning}, among other results~\cite{handbook-automata}. 

In practice, systems employ \emph{non-deterministic} automata models that are common in user applications. These are, in general, exponentially more succinct than their deterministic counterparts, but unfortunately do not possess such good algorithmic properties. 
To address this problem, systems typically rely on {\em determinizing} non-deterministic models; that is, converting a non-deterministic finite automaton into a deterministic one. 
The {\em subset construction}, introduced by Rabin and Scott~\cite{RabinS59}, is the most used and best-known determinization procedure, in which non-determinism is simulated by maintaining subsets of states. 
Although the output of determinizing the automata could have up to exponential size~\cite{meyer1971economy}, the subset construction has the advantage that it is simple and easy to implement, and can be computed on-the-fly (i.e., only the necessary part of the determinization is expanded).
Indeed, this is probably the reason why the procedure is used so extensively (see, e.g.,~\cite{holzmann1997model,cox2007regular,AllauzenRSSM07,riveros2023rematch,BucchiGQRV22}). As an example, in regular expression (regex) evaluation, a regex engine (without backtracking)~\cite{cox2007regular} usually evaluates a deterministic finite automaton on-the-fly: the running of the automaton is simulated by maintaining the current set of active states, caching the previously computed sets, and computing the next set for each new arriving character (if it was not already computed). This procedure allows for exploring only the sets of states reached by the data, and the next state can be easily computed from the previous state.

An alternative notion to determinism are \emph{unambiguous} automata models, which are non-deterministic machines with at most one successful run per input. Unambiguous finite automata generalize deterministic finite automata, offering a good balance between efficiency and succinctness~\cite{Colcombet15}.
Specifically, they usually admit efficient algorithms; for example, the equivalence and containment problems are tractable~\cite{stearns1985equivalence}, one can compute the number of accepted words (of a fixed length) in polynomial time~\cite{ArenasCJR21}, or efficiently enumerate them with constant delay~\cite{AmarilliBJM17}. %
Sometimes, these properties also extend to more general notions of ambiguity, like \emph{finitely ambiguous} models (i.e., each input is accepted by at most a finite number of accepting runs) or \emph{polynomially ambiguous} models (i.e., each input is accepted by at most a polynomial number of accepting runs in the size of the input) for which researchers have also found interesting results~\cite{weber1991degree,FijalkowRW22,Colcombet15}. 

The disambiguation of finite automata has been stated as a relevant algorithmic problem~in automata theory~\cite{colcombet2022unambiguity,Colcombet12}.
Unlike the determinization procedure for finite automata, there is no standard or well-known procedure that converts any given non-deterministic automata into an equivalent, unambiguous one (besides the determinization itself). 
Like determinization, one would also desire a disambiguation procedure that is simple and easy to implement, and can be efficiently computed on-the-fly (i.e., computing the next set of states in polynomial time). In the past, researchers have proposed procedures for disambiguating finite automata and other models~\cite{Sakarovitch98,mohri13,MohriWFAdisambiguation,Roche,HarjuKL92}; however, these procedures are often cumbersome, cannot be computed on-the-fly, and may even encounter technical difficulties (see Appendix~\ref{app:mohri}). Indeed, today there~is no well-known disambiguation procedure that is used in practice. Furthermore, none of these procedures can be extended to other levels of ambiguity (e.g., finitely ambiguous) or other automata models (e.g., weighted automata). An important exception to this rule is the \emph{unambiguous subset construction} of Weber and Klemm~\cite{WeberEconomy} which, as far as we know, has been largely overlooked by the community.

In this work, we generalize the approach of Weber and Klemm and propose an algorithmic framework for disambiguation of NFA that is arguably simple and can be computed on-the-fly (Section~\ref{sec:framework}). This framework allows for the presentation of efficient disambiguation algorithms for unambiguous, finitely ambiguous, and polynomially ambiguous automata. Furthermore, we show that these constructions preserve the original automata if it already had the target ambiguity, and are optimal in some precise sense (Section~\ref{sec:disambiguation-conditions}). Finally, we show that this framework can be extended to other automata models by presenting general disambiguation algorithms for the model of weighted automata (Section~\ref{sec:weighted}).

%% file: sections/relatedwork.tex
\paragraph{Related work}  Unambiguous, finitely ambiguous, and polynomially ambiguous finite state automata have been extensively studied in the literature for various automata models (see, e.g., \cite{Colcombet15,weber1991degree,handbook-automata}). Our work complements this literature, where we seek useful algorithms for constructing such models.

 Algorithms for disambiguating finite state automata, transducers or weighted automata have been proposed in the literature~\cite{Sakarovitch98,mohri13,MohriWFAdisambiguation,Roche,HarjuKL92}. In general, all of them rely on maintaining the active state and a subset of competing states (i.e., runs) plus some conditions to prune undesirable runs. 
Some of them even require some additional intermediate steps. For example, \cite{Sakarovitch98} needs to compute first the determinization of the automaton, and \cite{MohriWFAdisambiguation, Roche} require a pre-disambiguation step.
As discussed above, these conditions can be rather obscure and their correctness is hard to prove. %
Indeed, in Appendix~\ref{app:mohri} we show that the construction presented in~\cite{mohri13} does not work in general. As we mentioned, our approach is inspired by the construction of Weber and Klemm~\cite{WeberEconomy}. To the best of our knowledge, previous approaches have not studied the disambiguation of finite automata as a general framework, and their construction does not extend to other levels of~ambiguity. 

In the context of weighted automata, ambiguity can define a strict hierarchy of classes of functions, and researchers have studied the problem of deciding whether a given weighted automaton (e.g., over the tropical semiring) is equivalent to a deterministic or unambiguous one~\cite{KirstenOnDeterminizationWeighted,Kirsten08Clones,jecker2024determinisation, almagor2025WFATropicalDecidability, almagor2026WFATropicalUDecidability}.
These articles studied only the decision problem of determinization, and the constructions are usually non-trivial. In contrast, the present work aims for a simple and modular algorithm that
could lead to useful solutions in practice.

\paragraph{Outline of the paper} We start with some preliminaries in Section~\ref{sec:preliminaries}. We present the disambiguation framework in Section~\ref{sec:framework} and use it for finding disambiguation schemes for different levels of ambiguity in Section~\ref{sec:disambiguation-conditions}. We extend this framework to the disambiguation of weighted automata in Section~\ref{sec:weighted}. Finally, we discuss some future work in Section~\ref{sec:conclusion}. Due to space constraints, the proofs of the results can be found in the extended version~\cite{arxivversion}.

%% file: sections/preliminaries-new.tex
\paragraph{Non-deterministic finite automata} A \emph{non-deterministic finite automaton} (NFA) is a tuple $\cA = (Q, \Sigma, \Delta, I, F)$ where $Q$ is a finite set of states, $\Sigma$ is a finite alphabet, $\Delta \subseteq Q \times \Sigma \times Q$ is the transition relation, $I \subseteq Q$ is the set of initial states, and $F \subseteq Q$ is the set of final states. A \emph{partial run} of $\cA$ over a word $w = a_1\ldots a_n$ is a sequence:
\[
\rho \ := \ p_0 \, \xrightarrow{a_1} \, p_1 \, \xrightarrow{a_2} \ \cdots \ \xrightarrow{a_n} \, p_n \tag{$*$}
\] 
where $(p_{i-1}, a_i, p_i)\in \Delta$ for every $i\in [n]$. We say that $\rho$ is a \emph{run} of $\cA$ over $w$ if additionally $p_0 \in I$. Further, we say that a (partial) run $\rho$ is \emph{accepting} if $p_n\in F$.  
The language of $\cA$, denoted as $\cL(\cA)$, is the set of all $w \in \Sigma^*$ such that there exists an accepting run of $\cA$ over $w$.

One can see the transition relation $\Delta$ as a function $\Delta: Q \times \Sigma \rightarrow 2^Q$ where $\Delta(p, a) = \{q \mid (p,a,q) \in \Delta\}$. 
We extend $\Delta$ from letters to words (i.e., $\Delta: Q \times \Sigma^* \rightarrow 2^Q$) as usual: $\Delta(p,\varepsilon) = \{p\}$ and $\Delta(p, a\cdot w) = \bigcup_{q \in \Delta(p,a)} \Delta(q, w)$. 
Furthermore, we can extend $\Delta$ from single states to sets of states $\Delta\colon 2^Q \times \Sigma^* \rightarrow 2^Q$ defined as $\Delta(S, w) = \bigcup_{p\in S} \Delta(p, w)$. One can note that $w \in \cL(\cA)$ iff $\Delta(I, w) \cap F \neq \emptyset$. For the sake of presentation, we will use $\Delta(q, a)$, $\Delta(q, w)$, or $\Delta(S, w)$ with the same $\Delta$ when the input is clear from the context.

We define the size of $\cA$ as $|\cA| = |Q| + |\Delta|$. Further, we say that two NFAs $\cA_1$ and $\cA_2$ are \emph{isomorphic} if $\cA_1$ and $\cA_2$ are identical up to renaming the states.

\paragraph{Trimming} We say that $q \in Q$ is \emph{reachable from $p \in Q$} if there exists a partial run $\rho$ like ($*$) of $\cA$ over some word $w$ such that $p_0 = p$ and $p_n = q$ (i.e., $q \in \Delta(p,w)$ for some~$w$). A state $q \in Q$ is \emph{reachable} if $q$ can be reached from a state in $I$ (i.e., $q \in \Delta(I,w)$ for some~$w$) and is \emph{co-reachable} if some state in $F$ is reachable from $q$ (i.e., $\Delta(q,w) \cap F \neq \emptyset$ for some~$w$). In this work, we assume all NFAs are \emph{trimmed}, namely, that every $q \in Q$ is (co-)reachable. %

\paragraph{Ambiguity} We say that a finite automaton $\cA = (Q, \Sigma, \Delta, I, F)$ is \emph{deterministic} (DFA) iff $|I| = 1$ and for every $p\in Q$ and $a\in \Sigma$ there exists \emph{exactly one} state $q\in Q$ such that $(p,a,q)\in\Delta$. In other words, $\Delta$ forms a function $\Delta: Q \times \Sigma \rightarrow Q$. 
We say that $\cA$ is \emph{unambiguous} (UFA) iff for every $w \in \Sigma^*$ there exists at most one accepting run of $\cA$ over $w$. One can check that if $\cA$ is deterministic then $\cA$ is unambiguous, whereas the converse does not necessarily~hold.
In Figure~\ref{fig:FAexample} (b) and (c), we display a DFA and UFA, respectively.

Given an NFA $\cA$, the \emph{degree of ambiguity} $\daa(w)$ of a word $w$ is the number of different accepting runs of $\cA$ over $w$~\cite{weber1991degree}. Similarly, the \emph{degree of ambiguity} $\da(\cA)$ of $\cA$ is the maximum degree of ambiguity over all $w \in \cL(\cA)$, that is, $\da(\cA) = \max_{w \in \cL(\cA)} \daa(w)$. If no such maximum exists, then $\da(\cA) = \infty$.
We say that an NFA $\cA$ is \emph{finitely ambiguous} (finFA) if $\da(\cA) \leq k$ for some $k \in \bbN$. In this case, we say that $\cA$ is \emph{$k$-ambiguous} ($k$-ambFA). Instead, $\cA$ is called \emph{infinitely ambiguous} if $\da(\cA) = \infty$.
Note that $\cA$ is unambiguous iff $\da(\cA) = 1$ (i.e., $1$-ambiguous). Further, the level of ambiguity forms a hierarchy, namely, every $\ell$-ambFA is also $\ell+1$-ambFA, and so on. In Figure~\ref{fig:FAexample}~(a), we show a finFA $\cA$ with $\da(\cA) = 2$.
\begin{figure}[t]
    \centering
     \input{./figures/noUnambiguousExample-unified.tex}
    \caption{(a) An example of an NFA $\cA$ with $\da(\cA) = 2$. (b) The determinization $\cA_{\det}$ of $\cA$. (c) The disambiguation $\cA_{\cO_{\CF}}$ of $\cA$.}
    \label{fig:FAexample}
\end{figure}

For infinitely ambiguous automata, one can characterize how the ambiguity increases with the size of the input word~\cite{weber1991degree}.  
The \emph{degree of growth of ambiguity}, denoted as $\degg(\cA)$, is defined as the smallest degree of a polynomial $p:\bbN \rightarrow \bbN$ such that for all $w \in \Sigma^*$, we have $\daa(w) \leq p(|w|)$, if such a polynomial exists. We say that $\cA$ is \emph{polynomially ambiguous} (polyFA) if $\degg(\cA) = k$ for some $k \in \bbN$.
Otherwise, if no such polynomial exists, we define $\degg(\cA) = \infty$, and say that $\cA$ is \emph{exponentially ambiguous} (expFA). In Figure~\ref{fig:conditionExamples} (c) and (a), we show examples of a polyFA and expFA, respectively.

The work by Weber and Seidl~\cite{weber1991degree} defines criteria that characterize the degree and the degree of growth of ambiguity of an NFA. By these criteria, deciding whether an NFA $\cA$ is deterministic, unambiguous, finitely, or polynomially ambiguous can be done in polynomial time over the size of $\cA$.  Conversely, computing the exact degree of ambiguity~$\da(\cA)$~is~PSPACE-complete~\cite{ChanI83} and the degree of growth of ambiguity can be computed~in~polynomial~time~\cite{weber1991degree}.

We denote by DFA, UFA, finFA, and polyFA the class of all finite automata that are deterministic, unambiguous, finitely ambiguous, and polynomially ambiguous, respectively.
One can note that these classes form a strict hierarchy where:
\[
\text{DFA} \ \subsetneq \ \text{UFA} \ \subsetneq  \ \text{finFA} \ \subsetneq \ \text{polyFA} \ \subsetneq \ \text{NFA}
\]
Furthermore, for each pair of classes $\cC_1$ and $\cC_2$ with $\cC_1 \ \subsetneq \ \cC_2$ where $\cC_1, \cC_2 \in \{\text{DFA}, \text{UFA}, \text{finFA}\}$ there exists a family $\{\cA_n\}_{n\in \bbN}$ where each $\cA_n$ has $O(n)$ number of states and $\cA_n \in \cC_2$ such that for every $\cA \in \cC_1$ with $\cL(\cA) = \cL(\cA_n)$, it holds that $|\cA| \in \Omega(2^n)$~\cite{Leiss85a, Leung05}. %
Similarly, there exists a family $\{\cA_n\}_{n \in \mathbb{N}}$ in polyFA where each $\cA_n$ has $O(p(n))$ states for some polynomial $p$, such that for every equivalent automaton in finFA it holds that $|\cA| \in \Omega(2^{n^{1/3}})$~\cite{HromkovicS11}; also, NFA can be super-polynomially more succinct than polyFA (see~\cite{HromkovicSKKS02,Leung98}).

%% file: figures/noUnambiguousExample-unified.tex
\small
\begin{tikzpicture}[every label/.style={black!60}, initial distance= {3mm}, initial text=, ->,>=stealth', node distance=0.8, auto,
	mystate/.style={state, inner sep=3pt, minimum size=0mm},
	longstate/.style={rectangle,inner sep=5pt,draw,rounded corners=7.8pt},
	mystate2/.style={state, inner sep=1pt, minimum size=0mm}]
  \node[mystate,initial, initial where=left]  (1)                {$1$};
  \node[mystate,initial, initial where=left]  (2) [below=of 1]   {$2$};
  \node[mystate]          (3) [right=of 1] {$3$};
  \node[mystate]          (4) [above=of 3] {$4$};
  \node[mystate]          (5) [right=of 2] {$5$};
  \node[mystate,accepting](7) [right=of 4] {$7$};
  \node[mystate,accepting](6) [right=of 3] {$6$};
  \node[mystate,accepting](8) [right=of 5] {$8$};
  
  \node at (1, -2.2) {(a) finFA $\cA$};

  \path[->] (1) edge              node   {$a$} (3)
            (1) edge              node   {$a$} (4)
            (2) edge              node    {$a$} (5)
            (3) edge              node   {$d$} (6)
            (6) edge  [in=30,out=60,loop ]            node   {$d$} (6)
            (4) edge              node   {$a$} (7)
            (5) edge              node   {$d$} (6)
            (5) edge              node   {$b$} (8);
            
  \begin{scope}[xshift=5cm]
  	 \node[longstate,initial, initial where=left]  (1_2) {$1\, 2$};
  	\node[longstate]          (3_4_5) [right=of 1_2]    {$3 \, 4 \, 5$};
  	\node[mystate,accepting]          (3_6) [above right=of 3_4_5]     {$6$};
  	\node[mystate,accepting](7) [right=of 3_4_5]          {$7$};
  	\node[mystate,accepting](8) [below right=of 3_4_5]    {$8$};
  	
  	\path[->] (1_2) edge              node    {$a$} (3_4_5)
  	(3_4_5) edge[shorten <=-1.1pt]            node    {$d$} (3_6)
  	(3_6) edge [loop right]       node   {$d$} (3_6)
  	(3_4_5) edge            node   {$a$} (7)
  	(3_4_5) edge[shorten <=-1.1pt]            node   {$b$} (8);
  	
  	\node at (1.5, -2.2) {(b) DFA $\cA_{\det}$};
  \end{scope}
  
  \begin{scope}[xshift=10cm]
  			\node[longstate,initial, initial where=left]  (1_2) at (0,0)   {$1\, 2$};
  			\node[longstate]          (3_5) at (1,1) {$3\, 5$};
  			\node[mystate]          (4) at (1,-1)   {$4$};
  			\node[mystate,accepting](6) [right=of 3_5]    {$6$};
  			\node[mystate,accepting](8) at (2,0)   {$8$};
  			\node[mystate,accepting](7) [right=of 4]      {$7$};
  			
  			\path[->] (1_2) edge[shorten <=-1.8pt, shorten >=-1.8pt]              node    {$a$} (3_5)
  			(1_2) edge[shorten <=-1.8pt, shorten >=-0.1pt]              node   {$a$} (4)
  			(3_5) edge              node    {$d$} (6)
  			(6) edge  [loop above]            node   {$d$} (6)
  			(3_5) edge[shorten <=-1.8pt]              node    {$b$} (8)
  			(4)   edge              node    {$a$} (7);
  			
  			\node at (1.5, -2.2) {(c) UFA $\cA_{\cO_{\CF}}$};
  \end{scope}
            
\end{tikzpicture}

%% file: sections/framework.tex
The goal of this section is to define a framework suited for disambiguation of finite state automata, namely, algorithms that convert any NFA into a UFA, finFA, or polyFA. 
We start by introducing the motivation of the framework and then provide the formal~definitions. 

\paragraph{Determinization} Let us highlight the classical \emph{subset construction} for determinization, first described by Rabin and Scott~\cite{RabinS59}, 
that receives $\cA = (Q, \Sigma, \Delta, I, F)$ and produces a~DFA:
\[
\cA_\detr := (2^Q, \Sigma, \Delta_\detr, \{I\}, \{S \mid S \cap F \neq \emptyset\}),
\]
its states are the subsets of $Q$, its single initial state is $I$, and its final states are the sets that contain at least one state from $F$. Further,  $(S_1, a, S_2) \in \Delta_\detr$ iff $S_2 = \Delta(S_1, a)$ for every $S_1, S_2 \in 2^Q$ and $a \in \Sigma$.
The NFA $\cA_\detr$ is deterministic and $\cL(\cA) = \cL(\cA_\detr)$.

It is important to note that, since we assume that all our NFA are trimmed, we assume that $\cA_\detr$ is trimmed after the construction (i.e., we only keep the subsets in $2^Q$ that reach $F$ and are reachable from $I$). One can prove that, given that $\cA$ is initially trimmed, all the states reachable from $I$ are reachable and co-reachable in $\cA_{\det}$. In Figure~\ref{fig:FAexample} (b), we display the determinization $\cA_{\det}$ of $\cA$ where we only have the states reachable from $\{1,2\}$.

Although the automaton $\cA_\detr$ could be of size exponential with respect to $|Q|$, we claim that it has three algorithmic properties that makes it suitable to be used in practice.
\smallskip
\begin{enumerate} \itemsep1mm
	\item (\textbf{idempotent}) If $\cA$ is deterministic, then $\cA_\detr$ is \emph{isomorphic} to $\cA$. 
	\item (\textbf{on-the-fly}) $\Delta_\detr$ can be computed \emph{on-the-fly}, namely, for any $S \in Q_\detr$ one can compute $\Delta_\detr(S, a)$ only from $\Delta$ and $S$.
	\item (\textbf{efficient}) Given an $S\in Q_\detr$ and $a\in\Sigma$, one can compute $\Delta_\detr(S, a)$ in {\em polynomial time} in $|S|$ and $|\Delta|$. 
\end{enumerate}
\smallskip
The first property ensures that the procedure preserves the automata if it already possesses the desired properties. The second and third are used in~practice~(see,~e.g.,~\cite{holzmann1997model,cox2007regular,AllauzenRSSM07,riveros2023rematch,BucchiGQRV22}) since one can keep a cache of explored sets and efficiently compute the~next~one~from~a~given~set. 

We want to have the same properties for a disambiguation procedure that takes any~NFA $\cA$ and generates an equivalent NFA with the desired ambiguity. It is important to remark that all disambiguation procedures proposed in the literature~\cite{Sakarovitch98,mohri13,MohriWFAdisambiguation,Roche,HarjuKL92}~fail~to~achieve~at~least one of them (and most fail all of them). 
For instance, Mohri's procedure~\cite{mohri13}~and Schützenberger's construction~\cite{Sakarovitch98}  cannot be computed on-the-fly, since they require knowledge about previously computed states to discard transitions and disallow finality of new~states.~Moreover, all of them work for converting NFA to UFA, and none of them translates into~finFA~or~polyFA. 

\paragraph{The framework} Let $\cA = (Q, \Sigma, \Delta, I, F)$ be an NFA. We define a \emph{partition oracle}~$\cO$~for~$\cA$~as a function $\cO:2^Q \mapsto 2^{2^Q}$ that receives an $S\subseteq Q$, and outputs a partition of $S$, namely, $\cO(S) = \{S_1,\ldots,S_m\}$ where $S = S_1 \uplus \cdots \uplus S_m$.
Given a partition~oracle~$\cO$~for~$\cA$,~we~define~the~NFA:
\[
\cA_{\cO} \ := \ (2^Q, \Sigma, \Delta_{\cO}, \cO(I), \{S \mid S \cap F \neq \emptyset\}).
\]  
This is similar to Rabin-Scott's $\cA_{\det}$, except the transition relation and the set of initial states depend on $\cO$.
Specifically, we define the transition relation $\Delta_{\cO}$ as:
\[
\Delta_{\cO} \ := \  \big\{\, (S_1, a, S_2) \, \mid \, S_2 \in \cO(\Delta(S_1, a)) \, \big\},
\]
and more succinctly, $\Delta_{\cO}(S, a) = \cO(\Delta(S, a))$ for $S\subseteq Q$ and $a\in\Sigma$. In other words, finding the transitions $(S_1, a, S_2)\in\Delta_{\cO}$ for $S_1$ and $a$ can be done by first computing $\Delta(S_1, a)$ and then applying the partition oracle $\cO$ over it. We recall again that, similar to $\cA_{\det}$, we assume that $\cA_{\cO}$ is trimmed after the construction by keeping only the states reachable from $\cO(I)$.

A crucial property of $\cA_{\cO}$ is that it preserves equivalence for every partition oracle $\cO$.

\begin{restatable}{proposition}{frameworkEquivalence}\label{prop:frameworkEquivalence}
	For every partition oracle $\cO$ of $\cA$, it holds that $\cL(\cA) = \cL(\cA_{\cO})$.
\end{restatable}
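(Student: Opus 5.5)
The plan is to prove both inclusions by relating runs of $\cA_{\cO}$ to the subset construction. The central invariant is: for every word $w$, the union of all states of $\cA_{\cO}$ reached by some run over $w$ equals $\Delta(I,w)$. Formally, letting $\Delta_{\cO}(\cO(I), w)$ denote the set of states of $\cA_{\cO}$ reachable from $\cO(I)$ by reading $w$, we claim
\[
\bigcup_{S \in \Delta_{\cO}(\cO(I), w)} S \ = \ \Delta(I, w).
\]
We prove this by induction on $|w|$. We work with the untrimmed $\cA_{\cO}$ first. Trimming keeps every state reachable from $\cO(I)$ and removes only states not co-reachable to a final state, so it does not change the language.

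For the base case, $w = \varepsilon$ and the reached states are exactly $\cO(I)$. Since $\cO(I)$ is a partition of $I$, their union is $I = \Delta(I,\varepsilon)$. For the inductive step, take $w = u a$. A state $S'$ is reached over $ua$ iff $S' \in \cO(\Delta(S,a))$ for some $S$ reached over $u$. Because $\cO(\Delta(S,a))$ partitions $\Delta(S,a)$, the union over these $S'$ is
\[
\bigcup_{S} \Delta(S,a) \ = \ \Delta\Bigl(\bigcup_{S} S, a\Bigr) \ = \ \Delta(\Delta(I,u),a) \ = \ \Delta(I,ua).
\]
Here we use that $\Delta$ commutes with unions, by its definition on sets, together with the induction hypothesis.

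Language equality then follows. The word $w$ is in $\cL(\cA_{\cO})$ iff some reached $S$ satisfies $S \cap F \neq \emptyset$. That holds iff $\bigl(\bigcup S\bigr) \cap F \neq \emptyset$, i.e., iff $\Delta(I,w) \cap F \neq \emptyset$, i.e., iff $w \in \cL(\cA)$.

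One subtlety needs care. A partition might contain $\emptyset$, or $\Delta(S,a)$ might be empty. The empty set is never final and contributes nothing to unions, so the argument is unaffected. The main point to state carefully is that trimming $\cA_{\cO}$ preserves its language; beyond that, the proof is routine.
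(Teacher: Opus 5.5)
Your proof is correct, but it takes a different route from the paper's. The paper argues run by run. For $\cL(\cA)\subseteq\cL(\cA_{\cO})$, it pushes an accepting run $p_0\cdots p_n$ of $\cA$ forward, at each step choosing the block of $\cO(\Delta(S_{i-1},a_i))$ that contains $p_i$. For the converse, it traces an accepting run of $\cA_{\cO}$ backwards from a final state of $S_n$, using Lemma~\ref{lemma:predecessor}: every state of a successor block has an $a$-predecessor in the source block. You instead prove a set-level invariant: the states of $\cA_{\cO}$ reached over $w$ cover exactly the subset-construction state $\Delta(I,w)$. You then get both inclusions at once, because finality (``meets $F$'') distributes over unions. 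Your version is shorter and avoids backward tracing altogether. It is also exactly the Boolean instance of the invariant the paper uses for the weighted analogue, Proposition~\ref{prop:wfaFrameworkEquivalence}, which writes $\vI^t\odot\Delta_w$ as the weighted sum of the vectors reached in $\cW_{\cF}$; so your argument generalizes directly. The paper's run-level argument carries more information than reachable sets do. It shows that every accepting run of $\cA_{\cO}$ projects onto an accepting run of $\cA$ that stays inside its blocks. The predecessor lemma it isolates is also the tool the paper reuses later in the finFA/polyFA proofs.
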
 

Given this construction, one could intuit that there exists a disambiguation  \emph{spectrum} from oracles that partition their sets more finely or more coarsely.
Indeed, the trivial partition oracles $\oracleSingle$ and $\oracleFull$, where $\oracleSingle(S) := \{\{q\} \mid q \in S\}$ and $\oracleFull(S) := \{S\}$, define two extremes: $\cA_{\oracleSingle}$ is isomorphic to $\cA$ and $\cA_{\oracleFull}$ is isomorphic to $\cA_{\detr}$.

Given an NFA $\cA$, our main goal is to find a partition oracle $\cO$ for which $\cA_{\cO}$ has a desired property (e.g., being unambiguous) and we want to compute $\cO(S)$ from $\cA$ and $S$ efficiently. For this purpose, we define a \emph{disambiguation scheme} as any function $\cD$ that, given an NFA~$\cA$, defines a partition oracle $\cD(\cA)$ for $\cA$. Furthermore, we say that a disambiguation scheme $\cD$ is \emph{efficient} 
if there is an algorithm that receives as input any NFA $\cA = (Q, \Sigma, \Delta, I, F)$ and $S \subseteq Q$ and outputs $[\cD(\cA)](S)$ 
in polynomial time in $|\cA|$ and $|S|$. 
As an example, there exists an efficient disambiguation scheme $\disDFA$ 
given by defining $\disDFA(\cA) = \oracleFull$ for every NFA.

Our disambiguation framework involves identifying efficient disambiguation 
schemes for various levels of ambiguity. %
Given a class $\cC$ of NFA (e.g., $\cC = \text{UFA}$), we say that a 
disambiguation scheme $\cD$ is a \emph{$\cC$-disambiguation scheme} 
iff $\cA_{\cO} \in \cC$ and $\cA_{\cO}$ is isomorphic to $\cA$ whenever 
$\cA \in \cC$, for every NFA $\cA$ and $\cO = \cD(\cA)$. 
Continuing the example, $\disDFA$ is a DFA-disambiguation scheme.
Notice that if we find an efficient $\cC$-disambiguation scheme $\cD$ then 
$\cD$ will satisfy our desirable algorithmic properties 1. to 3. for the class $\cC$ 
(i.e., similarly to the Rabin-Scott subset construction), namely, (1) if $\cA \in \cC$, then $\cA_{\cO}$ is 
isomorphic to $\cA$, (2) $\Delta_{\cO}$ can be computed on-the-fly, and (3) 
$\Delta_{\cO}$ can be computed efficiently. 

\paragraph{Discussion} The proposed framework 
for disambiguating NFA %
is {\em a} generic strategy for this task, but, of course, it is not the only one. 
One could also disambiguate NFA through other strategies 
(e.g.~\cite{Sakarovitch98, mohri13}) with their own desirable properties. 
The advantages of our approach are its simplicity and generality;
and, since each new state is created efficiently and on-the-fly,
it also offers runtime guarantees that depend on the size of the resulting automaton. 
Having a generic framework for disambiguation is also quite versatile: 
we can now define a wide array of algorithms to reach different
levels of ambiguity, and compare their properties.%

%% file: sections/disambiguationConditions-new.tex
We present disambiguation schemes for constructing UFA, 
finFA, and polyFA, and study their properties.
We start by presenting our general strategy for finding partition oracles, 
which is based on binary relations over states. 
Then, we instantiate this strategy for each~class.

\paragraph{Relational partition oracles} First, we introducing some useful notation. %
Let $R \subseteq A \times A$ be a symmetric relation (i.e., $(p,q) \in R$ implies $(q, p) \in R$) over some non-empty set $A$. 
We define by $\dom(R) = \{p \mid (p,q) \in R\}$ the active domain of $R$ (note that $\dom(R)$ might be a strict subset of $A$). 
We denote by $\kappa(R)$ the set of connected components of the undirected graph $(A, R)$ where $\kappa(R)$ always forms a partition~of~$\dom(R)$. 

Let $\cA = (Q, \Sigma, \Delta, I, F)$ be an NFA. 
Given a  symmetric and reflexive (i.e., $(p,p) \in R$ for every $p \in Q$) relation $R$ over $Q$, we define the oracle $\cO_R: 2^{Q} \mapsto 2^{2^Q}$ by $\cO_R(S) = \kappa(R \cap (S \times S))$. 
Since $R$ is reflexive, $\dom(R \cap (S \times S)) = S$ and then $\kappa(R \cap S \times S)$ forms a partition of~$S$.
Therefore, $\cO_R$ is well defined as a partition oracle of $\cA$. 
We say that $\cO$ is a \emph{relational partition oracle} of $\cA$ if there exists a relation $R$ over $Q$ with $\cO = \cO_R$. 
For example, for $R_1 = \{(p,p) \mid p \in Q\}$ and $R_2 = Q \times Q$, 
one can check that $\cO_{\mathsf{single}} = \cO_{R_1}$ and $\cO_{\mathsf{full}} = \cO_{R_{2}}$. 
Therefore, the partition oracles $\cO_{\mathsf{single}}$ and  $\cO_{\mathsf{full}}$ are relational as well. 

We say that $\cD$ is a \emph{relational disambiguation scheme} if $\cD(\cA)$ is relational for every~$\cA$.
Note that if for every $\cA$ we can compute a relation $R_{\cA}$ in polynomial time such that $\cD(\cA) = \cO_{R_\cA}$, then $\cD$ is efficient.
Thus, our strategy in the sequel is to provide relations $R_{\cA}$ that can be computed in polynomial time from $\cA$, leading to the desired disambiguation~schemes.

\paragraph{UFA disambiguation} Let $\cA = (Q, \Sigma, \Delta, I, F)$ be an NFA. 
We say that two states $p, q\in Q$ \emph{share a common future} in $\cA$ iff there exists $v \in \Sigma^*$ such that $\Delta(p, v) \cap F \neq \emptyset \neq \Delta(q,v) \cap F$. 
In other words, there exist two accepting partial runs $\rho_p$ and $\rho_q$ of $\cA$ over the same word $v$ starting from $p$ and $q$, respectively. 
We define the relation $\CF_\cA \subseteq Q \times Q$ where $(p,q) \in \CF_\cA$ iff $p$ and $q$ share a common future. 
By definition, $\CF_\cA$ is symmetric. 
Further, since $\cA$ is trimmed, $\CF_\cA$ is reflexive. 
Then, we denote by $\cO_{\CF}$ the relational partition oracle of $\cA$ defined~by~$\CF_\cA$.

Intuitively, if $p$ and $q$ share $v\in\Sigma^*$ as a common future, then, 
whenever a disambiguation scheme reaches $p$ and $q$ from the initial states (i.e., $p, q \in \Delta(I, u)$ for some $u$), 
$p$ and $q$ must be part of the same state in the resulting automata 
so that it stays unambiguous (otherwise, $u \cdot v$ will have two accepting runs). 
For this reason,  the partition oracle $\cO_{\CF}$ will keep $p$ and $q$ together along with all states in their corresponding connected component. 
Indeed, $\cO_{\CF}$ is a relational partition oracle for producing UFA and can be computed efficiently. 

\begin{restatable}{theorem}{UFAdisambiguation}\label{theo:UFA-disambiguation}
		For every NFA $\cA = (Q, \Sigma, \Delta, I, F)$, the following properties hold: (1) $\cA_{\cO_{\CF}}$ is always unambiguous; (2) if $\cA$ is unambiguous, then $\cA_{\cO_{\CF}}$ and $\cA$ are isomorphic; and (3) the relation $\CF_\cA$ can be computed in time $O(|\cA|^2)$. 
	That is, the disambiguation scheme $\cD_{\CF}$ that assigns the relational partition oracle $\cO_{\CF}$ is an efficient UFA-disambiguation scheme.
\end{restatable}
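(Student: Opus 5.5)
We prove the three claims separately, starting with unambiguity.

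\textbf{Claim (1): $\cA_{\cO_{\CF}}$ is unambiguous.} Suppose a word $w = a_1\cdots a_n$ has two distinct accepting runs $S_0 \xrightarrow{a_1} S_1 \cdots S_n$ and $T_0 \xrightarrow{a_1} T_1 \cdots T_n$ of $\cA_{\cO_{\CF}}$. We argue backwards from the end.

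The key invariant is the following. Whenever $S_i \neq T_i$, both sets are classes of the same partition $\cO_{\CF}(X)$, where $X = \Delta(S_{i-1},a_i)$ (or $X = I$ when $i=0$). This requires $S_{i-1} = T_{i-1}$. So we choose $i$ as the first index at which the runs differ. Then $S_i$ and $T_i$ are distinct connected components of $\CF_\cA \cap (X\times X)$, and are therefore disjoint.

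Next we need a common future between them. Let $p\in S_i$ and $q \in T_i$. We want to show that $S_i \cap F$-reachability over the suffix $v = a_{i+1}\cdots a_n$ yields states that can reach $F$ on $v$.

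We first establish a lemma by induction along the run. If $S \xrightarrow{a} S'$ in $\Delta_{\cO}$, then every $q' \in S'$ has a predecessor in $S$. Hence, since $S_n \cap F \neq \emptyset$, some $p \in S_i$ satisfies $\Delta(p,v)\cap F \neq \emptyset$. Similarly, some $q\in T_i$ satisfies $\Delta(q,v)\cap F\neq\emptyset$.

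So $(p,q)\in \CF_\cA$, and both $p$ and $q$ lie in $X$. Therefore they belong to the same component, contradicting $S_i\cap T_i=\emptyset$. This shows unambiguity.

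\textbf{Claim (2): preservation when $\cA$ is unambiguous.} Assume $\cA$ is unambiguous and trimmed. We show that every reachable state of $\cA_{\cO_{\CF}}$ is a singleton, which gives isomorphism via $\{q\}\mapsto q$.

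It suffices to show that for any reachable set $X=\Delta(I,u)$ (or $X = I$), no two distinct $p,q\in X$ share a common future. If they did, with common future $v$, then $uv$ would have two accepting runs, one through $p$ and one through $q$ after reading $u$. These runs differ, contradicting unambiguity of $\cA$. Trimness ensures that $u$ exists for each state.

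One subtlety needs care: reachability of the sets must be tracked along $\cA_{\cO}$ rather than along $\Delta(I,u)$. However, every state in a reachable $\cA_{\cO}$-set is reachable in $\cA$, so the argument goes through by induction. It follows that $\cO_{\CF}(X)$ consists of singletons on every reachable $X$. Then $\Delta_\cO(\{p\},a)=\{\{q\}\mid q\in\Delta(p,a)\}$, so the two automata coincide up to renaming.

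\textbf{Claim (3): computing $\CF_\cA$ in $O(|\cA|^2)$.} Build the product automaton $\cA\times\cA$ with states $Q\times Q$ and transitions $((p,q),a,(p',q'))$ for $(p,a,p'),(q,a,q')\in\Delta$. Its number of transitions is at most $|\Delta|^2$.

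Then $(p,q)\in\CF_\cA$ iff $(p,q)$ can reach $F\times F$ in this product. We compute this by a backward BFS from $F\times F$, which runs in time linear in the product size, that is $O(|\cA|^2)$.

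Efficiency of the scheme then follows by taking connected components of $\CF_\cA\cap(S\times S)$. Combined with Proposition~\ref{prop:frameworkEquivalence}, this yields an efficient UFA-disambiguation scheme.

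\textbf{Main obstacle.} The main delicate step is in (1): proving that the first divergence point gives disjoint classes within the same partition, and that the predecessor lemma lifts acceptance back to $S_i$ and $T_i$. The remaining steps are routine.
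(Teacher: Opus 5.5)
Your proposal is correct and follows the paper's proof essentially step for step: the first divergence index plus the predecessor lemma for (1), singleton partitions on reachable sets for (2), and a pairwise product search from $F\times F$ for (3). The paper does (3) as a forward search in $\cA^R\times\cA^R$, which is the same computation as your backward search. The one thing to tighten is in (2): the invariant you need is that a set reached on $u$ in $\cA_{\cO_{\CF}}$ is contained in $\Delta(I,u)$, so that every $X=\Delta(S',a)$ the oracle sees satisfies $X\subseteq\Delta(I,ua)$. Individual reachability of states, which is what trimness gives, would not suffice, since a common future only yields two accepting runs for states reached by the same word.
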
 

The time to compute $\CF_\cA$ matches the classical algorithm for testing if an NFA is unambiguous, and the exponent cannot be lowered, under fine-grained assumptions~\cite{finegraineddisamb}.

Coming back to our example NFA $\cA$ in Figure~\ref{fig:FAexample}(a), one can  check that its common future relation is $\CF_{\cA} =\{(1, 2), (2, 1), (3, 5), (5, 3), (4, 4), \ldots\}$. We show its disambiguation $\cA_{\cO_{\CF}}$ by the oracle $\cO_{\CF}$ in Figure~\ref{fig:FAexample} (c).

While $\cD_{\CF}$ fulfils our desirable algorithmic properties, one can easily devise ``efficient UFA-disambiguation schemes'' by an ad-hoc partition oracle. For instance,~one~may~use~$\cO_{\mathsf{single}}$ when $\cA$ is unambiguous, and $\cO_{\mathsf{full}}$, otherwise (in other words, if $\cA$ is not unambiguous, determinize it). 
So, what makes $\cD_{\CF}$ so special as a disambiguation procedure?
We argue that $\cD_{\CF}$ is minimal in the following sense.
Let $\pi_1, \pi_2$ be two partitions of the same set $A$. We say that $\pi_1$ \emph{refines} $\pi_2$ if for every $P_1 \in \pi_1$ there is a $P_2 \in \pi_2$ such that $P_1 \subseteq P_2$. That is, $\pi_1$ breaks $A$ into smaller pieces than $\pi_2$. 
Let us define $\reach(\cA) = \bigcup_{w \in \Sigma^*} \Delta(I, w)$ as the set of all states reachable from $I$.
For two partition oracles $\cO_1$ and $\cO_2$ of $\cA$, we say that $\cO_1$ \emph{refines} $\cO_2$ over $\cA$ if $\cO_1(I)$ refines $\cO_2(I)$ and, for every $S \in \{\Delta(S',a) \mid S' \in \reach(\cA_{\cO_2}) \wedge a \in \Sigma\}$, $\cO_1(S)$ refines $\cO_2(S)$. 

\begin{restatable}{theorem}{UFAminimality}\label{theo:UFA-minimality}
		Let $\cO$ be any partition oracle of an NFA $\cA$. If
	$\cA_{\cO}$ is unambiguous, then $\cO_{\CF}$ refines $\cO$ over $\cA$. %
\end{restatable}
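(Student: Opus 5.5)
We have to show two things: $\cO_{\CF}(I)$ refines $\cO(I)$, and $\cO_{\CF}(S)$ refines $\cO(S)$ for every set $S = \Delta(S',a)$ with $S' \in \reach(\cA_{\cO})$ and $a \in \Sigma$. Both cases have the same shape. In each, the set $T$ (either $I$ or such an $S$) is a set of states of $\cA$ that is ``actually fed'' to the oracle while building $\cA_{\cO}$, and the blocks of $\cO(T)$ are states of $\cA_{\cO}$ reachable from its initial states. Since the blocks of $\cO_{\CF}(T)$ are the connected components of $\CF_\cA \cap (T\times T)$, it suffices to prove one claim. \emph{Key claim:} if $p,q \in T$ and $(p,q) \in \CF_\cA$, then $p$ and $q$ lie in the same block of $\cO(T)$. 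Once we have this, every edge of the graph $(T, \CF_\cA\cap(T\times T))$ stays inside a block of $\cO(T)$. Hence every connected component is contained in a single block, which is exactly the refinement condition.

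To prove the key claim, fix $p,q\in T$ with a common future $v$, and suppose for contradiction that $p \in P_1$ and $q \in P_2$ for distinct blocks $P_1 \neq P_2$ of $\cO(T)$. We first produce a word $u$ such that both $P_1$ and $P_2$ are reached in $\cA_{\cO}$ by runs over $u$.
\begin{itemize}
\item If $T = I$, take $u=\varepsilon$; then $P_1, P_2 \in \cO(I)$ are both initial states.
\item If $T = \Delta(S',a)$ with $S'$ reachable, there is a run of $\cA_{\cO}$ over some $u'$ ending in $S'$. Setting $u = u'a$, the transitions $(S',a,P_1)$ and $(S',a,P_2)$ are both in $\Delta_{\cO}$.
\end{itemize}
In either case we obtain two runs of $\cA_{\cO}$ over $u$ that differ in their last state.

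Next we extend both runs along $v$. We need an invariant about the framework: if $P$ is a state of $\cA_{\cO}$ and $r \in P$ with $r' \in \Delta(r,b)$, then some block of $\cO(\Delta(P,b))$ contains $r'$. This holds because $r' \in \Delta(P,b)$ and $\cO(\Delta(P,b))$ partitions $\Delta(P,b)$; it is the same fact used for Proposition~\ref{prop:frameworkEquivalence}. Iterating it along an accepting partial run of $\cA$ from $p$ over $v$, we extend the run ending in $P_1$ to a run of $\cA_{\cO}$ over $uv$ whose last state contains a final state of $\cA$, so it is accepting. Doing the same from $q\in P_2$ gives a second accepting run over $uv$. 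The two runs differ at position $|u|$, since one is at $P_1$ and the other at $P_2$. So $uv$ has two distinct accepting runs in $\cA_{\cO}$, contradicting unambiguity.

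The main obstacle is trimming. All states and transitions used must survive in $\cA_{\cO}$ after restricting to reachable states. This holds here because every run we build starts at an initial state of $\cA_{\cO}$, so all its states are reachable. We also need that distinct blocks give distinct states of $\cA_{\cO}$, so that the two runs are genuinely different. Since $P_1 \neq P_2$ as subsets of $Q$ and states of $\cA_{\cO}$ are subsets of $Q$, this is immediate. Beyond that, only the definitions need unpacking.
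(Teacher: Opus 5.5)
Your proof is correct and follows the same route as the paper's: take a $\CF_\cA$-edge $(p,q)$ whose endpoints land in different blocks of $\cO(T)$, reach both blocks over a common word, and extend both runs along the common future to get two distinct accepting runs of $\cA_{\cO}$ over one word. Your version is more complete than the paper's, since it also handles the refinement of $\cO(I)$, justifies why non-refinement forces such a crossing edge, and builds the accepting extensions explicitly via the successor invariant instead of appealing only to trimming.
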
  

In other words, the disambiguation scheme $\cD_{\CF}$ provides the \emph{finest} possible partitions~on state sets $S \subseteq Q$ while $\cA_{\cO}$ is still unambiguous. 
We argue that fineness is a reasonable criterion for minimality. 
Indeed, finer partitions, in a sense, better preserve the original automaton, and the finest partition is $\cO_{\mathsf{single}}$ which leaves the automaton unchanged.

\paragraph{Discussion} The reader may have noted that, although the partition oracle $\cO_{\CF}$ finds the finest possible partition, the determinization $\cA_{\det}$ of $\cA$ in Figure~\ref{fig:FAexample} (b) has less states compared to the disambiguation $\cA_{\cO_{\CF}}$ in Figure~\ref{fig:FAexample} (c) (i.e., one state less).
Therefore, it is essential to discuss that, regarding minimizing the size of the resulting automaton, 
the disambiguation schemes $\disUFA$ and $\disDFA$ (i.e., the determinization procedure) are incomparable. On the one hand, we know that there exists a family $\{\cA^n\}_{n\in \bbN}$ of UFA such that every DFA $\cA$ with $\cL(\cA) = \cL(\cA^n)$ has $|\cA| \in \Omega(2^n)$~\cite{Colcombet15}. This implies that the determinization $\cA^n_{\oracleFull}$ by $\disDFA$ can be of exponential size with respect to $\cA^n_{\cO_{\CF}}$. On the other hand, in Figure~\ref{fig:UFA-DFA-blowup} we show an example of a family of UFA with $\Omega(2^n)$ states whose Rabin-Scott determinization is of size $O(n)$.
Finding an optimal UFA-disambiguation scheme that minimizes the number of explored states is an interesting and relevant open problem that we leave for future work.
\begin{figure}[t]
	\centering
	\begin{tikzpicture}[every label/.style={black!60}, auto, initial text=, initial distance= {3mm}, >=stealth', node distance=0.8cm, 
		mystate/.style={state, initial where=left,inner sep=1.2mm, minimum size=0pt}]
		\node[mystate, initial]  (0) {};
		\node[mystate] (00) at ($(0)+(1.1,1.2)$)  {};
		\node[mystate] (01) at ($(0)+(1.1,-1.2)$) {};
		
		\node[mystate, draw=none,inner sep=1mm] (000) at ($(00)+(1.1,0.5)$) {$\cdots$};
		\node[mystate, draw=none,inner sep=1mm] (001) at ($(00)+(1.1,-0.5)$) {$\cdots$};
		
		\node[mystate, draw=none,inner sep=1mm] (010) at ($(01)+(1.1,0.5)$) {$\cdots$};
		\node[mystate, draw=none,inner sep=1mm] (011) at ($(01)+(1.1,-0.5)$) {$\cdots$};
		
		\node[mystate] (0000) at ($(000)+(1.1,0.2)$) {};
		\node[mystate] (0001) at ($(000)+(1.1,-0.2)$) {};
		
		\node[mystate] (0010) at ($(001)+(1.1,0.2)$) {};
		\node[mystate] (0011) at ($(001)+(1.1,-0.2)$) {};
		
		\node (vsdso) at ($(0011)+(0,-0.4)$) {$\vdots$};
		
		\node[mystate] (0100) at ($(010)+(1.1,0.2)$) {};
		\node[mystate] (0101) at ($(010)+(1.1,-0.2)$) {};
		
		\node[mystate] (0110) at ($(011)+(1.1,0.2)$) {};
		\node[mystate] (0111) at ($(011)+(1.1,-0.2)$) {};
		
		\node[mystate, draw=none,inner sep=1mm] (b000) at ($(0000)+(1.1,-0.2)$) {$\cdots$};
		\node[mystate, draw=none,inner sep=1mm] (b001) at ($(0010)+(1.1,-0.2)$) {$\cdots$};
		
		\node[mystate, draw=none,inner sep=1mm] (b010) at ($(0100)+(1.1,-0.2)$) {$\cdots$};
		\node[mystate, draw=none,inner sep=1mm] (b011) at ($(0110)+(1.1,-0.2)$) {$\cdots$};
		
		\node[mystate] (b00) at ($(b000)+(1.1,-0.5)$)  {};
		\node[mystate] (b01) at ($(b010)+(1.1,-0.5)$) {};
		
		\node[mystate, accepting] (b0) at ($(b00)+(1.1,-1)$)  {};

		\draw[->] (0) edge node {${ a}$} (00);
		\draw[->] (0) edge node[swap] {${ a}$} (01);
		
		\draw[->] (00) edge node {${ a}$} (000.west);
		\draw[->] (00) edge node[swap] {${ a}$} (001.west);
		
		\draw[->] (01) edge node {${ a}$} (010.west);
		\draw[->] (01) edge node[swap] {${ a}$} (011.west);
		
		\draw[->] (000) edge node[above] {${ a}$} (0000);
		\draw[->] (000) edge node[below,swap] {${ a}$} (0001);
		\draw[->] (001) edge node[above] {${ a}$} (0010);
		\draw[->] (001) edge node[below,swap] {${ a}$} (0011);
		
		\draw[->] (010) edge node[above] {${ a}$} (0100);
		\draw[->] (010) edge node[below,swap] {${ a}$} (0101);
		\draw[->] (011) edge node[above] {${ a}$} (0110);
		\draw[->] (011) edge node[below,swap] {${ a}$} (0111);
		
		\draw[->] (0000) edge node[above] {${ b}$} (b000);
		\draw[->] (0001) edge node[below,swap] {${ c}$} (b000);
		\draw[->] (0010) edge node[above] {${ b}$} (b001);
		\draw[->] (0011) edge node[below,swap] {${ c}$} (b001);
		
		\draw[->] (0100) edge node[above] {${ b}$} (b010);
		\draw[->] (0101) edge node[below,swap] {${ c}$} (b010);
		\draw[->] (0110) edge node[above] {${ b}$} (b011);
		\draw[->] (0111) edge node[below,swap] {${ c}$} (b011);
		
		\draw[<-] (b00) edge node[swap] {${ b}$} (b000.east);
		\draw[<-] (b00) edge node {${ c}$} (b001.east);
		
		\draw[<-] (b01) edge node[swap] {${ b}$} (b010.east);
		\draw[<-] (b01) edge node {${ c}$} (b011.east);
		
		\draw[->] (b00) edge node {${ b}$} (b0);
		\draw[->] (b01) edge node[swap] {${ c}$} (b0);
		
		\draw[decorate,decoration={brace, amplitude=5}] ($(0000)+(-3.3,0.4)$) -- node[above=2mm] {$n$ levels} ($(0000)+(0,0.4)$);
	\end{tikzpicture}
	\caption{Example of an UFA with $\Omega(2^n)$ states whose determinization is of size $O(n)$.}
	\label{fig:UFA-DFA-blowup}
\end{figure}

\paragraph{finFA and polyFA disambiguation} Similar to UFA, we provide relations to define partition oracles for the finFA and polyFA disambiguation. %
Let $\cA = (Q, \Sigma, \Delta, I, F)$ be~an~NFA.
\begin{itemize}
	\item \emph{Infinite common future (ICF)}: We say that a pair of states $(p, q)$ share an \emph{infinite common future} in $\cA$ iff there exists $r,s \in Q$ and $u, v \in \Sigma^*$ such that $p, q \in \Delta(r, u)$, $r \in \Delta(p, v)$, $s \in \Delta(q, v)$, and $s \in \Delta(s, uv)$. We define the relation $\ICF_{\cA} \subseteq Q \times Q$ where $(p, q) \in \ICF_{\cA}$ if $(p, q)$ or $(q, p)$ share an infinite common future.
	
	\item \emph{Exponentially infinite common future (ECF)}: We say that two states $p, q \in Q$ share an \emph{exponentially infinite common future} in $\cA$ iff there exists $r \in Q$ and $u, v \in \Sigma^*$ such that $r \in \Delta(p, v)$, $r \in \Delta(q, v)$, and $p,q \in \Delta(r, u)$. 
	Similarly, we define the relation $\ECF_{\cA} \subseteq Q \times Q$ where $(p, q) \in \ECF_{\cA}$ if $p, q$ share a exponentially infinite common future. 
\end{itemize}
We illustrate the infinite and exponentially~infinite common future conditions in Figure \ref{fig:ICF-ECF} (a) and (b), respectively. Note that ICF and ECF are stronger conditions than regular common future and, indeed, ECF implies ICF. 
Further, $\ICF_{\cA}$ and $\ECF_{\cA}$ are reflexive and symmetric. 
Thus, they induce relational partition oracles that we denote by $\cO_{\ICF}$ and $\cO_{\ECF}$, respectively.

\begin{figure}[t]
	\centering
\begin{tikzpicture}[->,>=stealth,roundnode/.style={circle,draw,inner sep=3pt},node distance=0.8cm]
	\node[roundnode] (p) at (0, 3) {$p$};
	\node[roundnode] (q) at (0, 1) {$q$};
	\node[roundnode] (r) at (2, 2) {$r$};
	\node[roundnode] (s) at (2, 1) {$s$};

	\node[] at (1, 0) {(a) Infinite common future};

	\draw[out=0,in=135] (p) to[above] node {$v$} (r);
	\draw[out=180,in=-45] (r) to[above] node {$u$} (p);
	\draw[out=180,in=45] (r) to[below] node {$u$} (q);
	\draw[bend right] (q) to[above] node {$v$} (s);
	\draw[out=25,in=65,looseness=5] (s) to[right] node {$uv$} (s);
	
\end{tikzpicture}
\hspace{3em}
\begin{tikzpicture}[->,>=stealth,roundnode/.style={circle,draw,inner sep=3pt}]

	\node[roundnode] (p) at (0, 3) {$p$};
	\node[roundnode] (q) at (0, 1) {$q$};
	\node[roundnode] (r) at (2, 2) {$r$};

	\node[] at (1, 0) {(a) Exponentially infinite common future};

	\draw[out=0,in=135] (p) to[above] node {$v$} (r);
	\draw[out=180,in=-45] (r) to[above] node {$u$} (p);
	\draw[out=180,in=45] (r) to[below] node {$u$} (q);
	\draw[out=0,in=-135] (q) to[below] node {$v$} (r);
	
\end{tikzpicture}
	\caption{Graphical representation of infinite and exponentially infinite common future conditions.}
	\label{fig:ICF-ECF}
\end{figure}

Similar to the common future condition, ICF and ECF identify pairs of states in $\cA$ that can lead to infinitely and exponentially ambiguous behaviour, respectively. Indeed, they lead to efficient finFA- and polyFA-disambiguation schemes as expected.

\begin{restatable}{theorem}{finFApolyFAdisambiguation}\label{theo:finFA-polyFA-disambiguation}
		For every NFA $\cA = (Q, \Sigma, \Delta, I, F)$, the following properties hold: (1) $\cA_{\cO_{\ICF}}$ and $\cA_{\cO_{\ECF}}$ are always finitely ambiguous and polynomially ambiguous, respectively; (2) if $\cA$ is finitely ambiguous, then $\cA_{\cO_{\ICF}}$ is isomorphic to $\cA$; if $\cA$ is polynomially ambiguous then $\cA_{\cO_{\ECF}}$ isomorphic to $\cA$; and (3) the relations $\ICF_\cA$ and $\ECF_\cA$ can be computed in time $O(|\cA|^5)$ and $O(|\cA|^3)$, respectively. 
	In other words, the disambiguation schemes $\cD_{\ICF}$ and $\cD_{\ECF}$ that assign the relational partition oracles $\cO_{\ICF}$ and $\cO_{\ECF}$ for each NFA are efficient finFA- and polyFA-disambiguation schemes, respectively.
\end{restatable}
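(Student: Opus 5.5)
Throughout, I will use the Weber--Seidl characterizations~\cite{weber1991degree}. For a trimmed NFA $\cA$: $\cA$ is infinitely ambiguous iff it satisfies (IDA), namely there are distinct states $p \neq q$ and a word $w$ with $p \in \Delta(p,w)$, $q \in \Delta(p,w)$ and $q \in \Delta(q,w)$. $\cA$ is exponentially ambiguous iff it satisfies (EDA), namely there is a state $p$ with two distinct cycles on the same word $w$. The plan is to reduce each part of the theorem to these criteria, plus a lifting argument relating runs of $\cA_{\cO}$ to runs of $\cA$.

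\textbf{Part (2).} Here I show that when $\cA$ has the target ambiguity, the relation is the identity, so $\cO_R=\oracleSingle$ and $\cA_{\cO_R}$ is isomorphic to $\cA$. Suppose $(p,q)\in\ECF_\cA$ with $p\neq q$, witnessed by $r,u,v$. Then $r \xrightarrow{u} p \xrightarrow{v} r$ and $r \xrightarrow{u} q \xrightarrow{v} r$ are two distinct cycles on $uv$ at $r$. By trimness this gives (EDA), so $\cA$ is not polyFA. Similarly, suppose $(p,q)\in\ICF_\cA$ with $p\neq q$. Then $r\xrightarrow{uv} r$ holds (via $p$), $r\xrightarrow{u} q\xrightarrow{v} s$ gives $s\in\Delta(r,uv)$, and $s\in\Delta(s,uv)$. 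If $r\neq s$ this is (IDA). If $r=s$, the paths through $p$ and through $q$ are two distinct cycles on $uv$, which is (EDA) and hence also (IDA). The isomorphism then follows from the remark that $\cA_{\oracleSingle}$ is isomorphic to $\cA$.

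\textbf{Part (1).} I argue by contradiction via lifting. Let $\cA' = \cA_{\cO_{\ECF}}$ and assume it satisfies (EDA). Then some state $S$ has two distinct cycles $S\xrightarrow{w}S$ in $\cA'$. Pumping $w^{k}$ for large $k$ and using finiteness of $Q$ (choosing a power $w^{m}$ so that the relevant subset maps become idempotent), I want to extract a state $r\in S$ and two runs of $\cA$ on $w^{m}$ from $r$ back to $r$. These runs should pass through distinct states $p,q$ that lie in different blocks at the first point where the two $\cA'$-cycles diverge. That divergence point is a position where $\Delta(S_i,a)$ was split into different $\ECF$-components $S_{i+1}\neq S'_{i+1}$. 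Rotating the cycle gives $r\xrightarrow{u}p\xrightarrow{v}r$ and $r\xrightarrow{u}q\xrightarrow{v}r$, so $(p,q)\in\ECF_\cA$. This contradicts that they lie in different components.

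For finFA the argument is analogous, using (IDA) in $\cA'=\cA_{\cO_{\ICF}}$: the states $S\neq T$ with $S\xrightarrow{w}S\xrightarrow{w}T\xrightarrow{w}T$ should produce $p\in S$ and $q\in T$ split from a common predecessor, with return paths yielding the $r,s$ pattern of ICF. \textbf{This extraction is the main obstacle.} Every state of $S$ has a predecessor in $S$ along the cycle, so iterating backwards inside the finite set gives a state $r\in S$ with $r\in\Delta(r,w^{m})$. The difficulty is choosing $m$ so that both branching targets are reached from this same $r$. I would first try idempotent powers of the Boolean transition matrix of $w$, which gives $M^{2m}=M^{m}$.

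\textbf{Part (3).} Build product automata and use reachability. For $\ECF$, compute in $\cA\times\cA$ all pairs of pairs reachable from one another in $O(|\cA|^3)$ by searching from each diagonal $(r,r)$ through the product. A pair $(p,q)$ is in $\ECF_\cA$ iff $(r,r)\to(p,q)\to(r,r)$ for some $r$. For $\ICF$, search the triple product $\cA^3$: for each $(r,r,s)$, reach $(p,q,\cdot)$ on $u$ and then return to $(r,s,s)$ on $v$. Enumerating the starting triples over this product gives $O(|\cA|^5)$.
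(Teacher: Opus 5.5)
Parts (2) and (3) are fine. Your derivation of (IDA) or (EDA) from a nontrivial pair in $\ICF_\cA$ or $\ECF_\cA$, including the subcase $r=s$, is correct and more explicit than the paper's one-line claim. Your product-automaton searches are the paper's algorithms, with the same bounds.

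The genuine gap is Part (1), which is the core of the theorem: you never actually produce the witness in $\cA$. The predecessor lemma (Lemma~\ref{lemma:predecessor}) only gives backward runs of $\cA$ inside the sets of $\cA_{\cO}$, and their starting states are uncontrolled. The whole difficulty is to find one anchor state $r$ from which the \emph{same} word reaches some $p$ in one sibling block $P$ and some $q$ in the other sibling block $\hat Q$. Here $P,\hat Q$ are both blocks of $\cO(\Delta(X,a))$ at the divergence point. Both runs must then close up: back to $r$ for $\ECF$, and for $\ICF$ to $r$ and to a state $s$ carrying a $u'v'$-loop.

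An idempotent power $M^m=M^{2m}$ of the Boolean matrix of $w$ does not deliver this. It yields some $r\in\Delta(r,w^m)$, but it does not control which block the lifted runs traverse. Even the two relations tracking the two $\cA'$-cycles can each be idempotent without a common fixed point: on $\{x,y\}$, take $\{(x,x),(x,y)\}$ and $\{(y,x),(y,y)\}$. So you would need mixed products and an argument for them. In your $\ICF$ sketch, the witnesses must also lie in the two sibling blocks at the divergence point, not in the (IDA) pair $S,T$ themselves. In general, $p\in S$ and $q\in T$ give no contradiction.

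What is missing is a choice of anchor that makes both branches reachable with aligned word lengths.

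For $\ECF$:
\begin{itemize}
\item Rotate to $X\xrightarrow{a}P\xrightarrow{z}X$ and $X\xrightarrow{a}\hat Q\xrightarrow{z}X$ in $\cA_{\cO_{\ECF}}$, with $P\neq\hat Q$. On $X$, put $x\to y$ iff $x\xrightarrow{a}b\xrightarrow{z}y$ in $\cA$ for some $b\in P\cup\hat Q$.
\item Every $y$ then has a predecessor through $P$ and one through $\hat Q$.
\item Take $y$ in a source strongly connected component. Its predecessors $x_1\xrightarrow{a}p\xrightarrow{z}y$ and $x_2\xrightarrow{a}q\xrightarrow{z}y$ satisfy $y\xrightarrow{(az)^{k_1}}x_1$ and $y\xrightarrow{(az)^{k_2}}x_2$.
\item Going around one loop and then the other, $u'=(az)^{k_1+k_2+1}a$ leads from $y$ to both $p$ and $q$, and $v'=z$ leads both back to $y$.
\item Hence $(p,q)\in\ECF_\cA$, contradicting $P\neq\hat Q$.
\end{itemize}
The paper reaches the same conclusion differently, using its backward binary run tree with an LCA and ancestor-label argument.

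For $\ICF$:
\begin{itemize}
\item Work in the (IDA)' form $R\xrightarrow{u}T\xrightarrow{a}P\xrightarrow{v}R$, $T\xrightarrow{a}\hat Q\xrightarrow{v}S$, $S\xrightarrow{w}S$, with $w=uav$.
\item Pick $s\in S$ with $s\in\Delta(s,w^n)$.
\item Take a backward chain $x_0\xrightarrow{u}t\xrightarrow{a}q\xrightarrow{v}s$ with $x_0\in R$ and $q\in\hat Q$.
\item Take a backward chain $x_{j+1}\xrightarrow{ua}p_j\xrightarrow{v}x_j$ in $R$ with $p_j\in P$. After the first repetition, reuse the same choices, so the chain has period $m$ from some index $i$.
\item Choose $\ell\geq i$ with $\mathrm{lcm}(m,n)\mid\ell+1$ and set $r=x_\ell$. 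Then $r\xrightarrow{w^\ell ua}p_\ell\xrightarrow{v}r$, $r\xrightarrow{w^\ell ua}q\xrightarrow{v}s$, and $s\in\Delta(s,w^{\ell+1})$.
\item So $(p_\ell,q)\in\ICF_\cA$ with $p_\ell\in P$ and $q\in\hat Q$, a contradiction.
\end{itemize}
This length alignment is exactly the role of the paper's condition $k'+1=n\cdot m$.
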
 

An example of finFA and polyFA disambiguation of an expFA $\cA'$ is shown in Figure~\ref{fig:conditionExamples}.
The infinite and exponentially infinite common future relations on the NFA $\cA'$ are $\ICF_{\cA'} = \{(2, 5), (5, 2), (3, 5), (5, 3), (1,1), \ldots\}$ and $\ECF_{\cA'} = \{(3, 5), (5, 3), (1,1), \ldots\}$, respectively. 
Note that $\cA'_{\cO_{\ICF}}$ in Figure~\ref{fig:conditionExamples} (b) and $\cA'_{\cO_{\ECF}}$ in Figure~\ref{fig:conditionExamples} (c) are exactly finFA and polyFA, respectively, achieving partial disambiguation depending on the chosen relation.

\begin{figure}[t]
	\centering
	 \input{./figures/kDisambiguation/example/original.tex}%
	    \caption{
	    (a) An example of an exponentially ambiguous NFA $\cA'$. (b) The finFA disambiguation $\cA'_{\cO_{\ICF}}$ of $\cA'$. (c) The polyFA disambiguation $\cA'_{\cO_{\ECF}}$ of $\cA'$.
	    }
	\label{fig:conditionExamples}
\end{figure}

Similar to the UFA case, one would like to show that 
$\cO_{\ICF}$ and $\cO_{\ECF}$ are minimal in the sense that they refine any other partition oracle that produces finitely ambiguous or polynomially ambiguous NFA, respectively. Unfortunately, this is not the case for both, since one can find partition oracles that produce the required level of ambiguity, but they are not refined by $\cO_{\ICF}$ or $\cO_{\ECF}$.%

\begin{example}
	Consider the NFA $\cA$ and its polynomial disambiguation $\cA_{\cO_{\ECF}}$ in Figure~\ref{fig:edaExamples}. Its exponentially infinite common future relation is $\ECF_{\cA} = \{(p, q), (q, p), (p, z), (z, p)\}$.
	There exist a relation $R=\{(p, z), (q, z), (z, p), (z, q)\}$ such that $\cO_{R}(\{p, q\})$ refines $\cO_{\ECF}(\{p, q\})$, while $\cA_{\cO_{R}}$ is polynomially ambiguous.
	In $\cA_{\cO_{R}}$, states $p$ and $q$ are only merged in the set of states that contains $z$, effectively preventing exponential degree of ambiguity.
	
	\begin{figure}
		\vspace{5em}
		\hspace{-0.5em}
		\subfloat[NFA $\cA$]{%
			\input{./figures/EDAexample/original.tex}%
		}
		\hspace{0.5em}
		\begin{minipage}{25mm}
			\vspace{-10em}
			\subfloat[The polynomial disambiguation $\cA_{\cO_{\ECF}}$ of $\cA$]{%
				\input{./figures/EDAexample/dis.tex}%
			}
			\vspace{0.5em}
			\subfloat[NFA $\cA_{\cO_{R}}$]{%
				\input{./figures/EDAexample/alternative.tex}%
			}
		\end{minipage}
		
		\caption{Counterexample for the minimality of $\ECF$.}
		\label{fig:edaExamples}
	\end{figure}
	
	To determine that states $p$ and $q$ should not be merged, the procedure needs knowledge of their future behavior on $\cA_{\cO_{R}}$  to verify they are not part of an (EDA) condition.
	Consequently, no disambiguation procedure that cannot be computed on-the-fly can refine the states formed by oracle $\ECF$ and define a polynomially ambiguous automaton.
	Note that in the example, the relation $\ICF_{\cA}$ is equivalent to $\ECF_{\cA}$, then the same NFA serves as a counter example for the minimality of $\ICF$.
\end{example}

Nevertheless, we can show that among all {\em relational} partition oracles, both $\cO_{\ICF}$ and $\cO_{\ECF}$ are minimal in terms of refinement; namely, $\ICF_{\cA}$ and $\ECF_{\cA}$ are the minimal relations for which the disambiguation schemes can reach the required level of ambiguity.

\begin{restatable}{theorem}{minimalityICFECF}\label{theo:minimalityICF-ECF}
	Let $\cA$ be an NFA and $R$ be a reflexive and symmetric relation between states of~$\cA$. (1) If
	$\cA_{\cO_R}$ is finitely ambiguous and $R \subseteq \ICF_{\cA}$, then $\cO_{\ICF}$ refines $\cO_R$ over $\cA$; and (2)~if
	$\cA_{\cO_R}$ is polynomially ambiguous and $R \subseteq \ECF_{\cA}$, then $\cO_{\ECF}$ refines $\cO_R$ over $\cA$.
\end{restatable}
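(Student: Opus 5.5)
We argue by contradiction, reducing to the Weber--Seidl criterion for infinite ambiguity (resp.\ exponential ambiguity) applied to $\cA_{\cO_R}$. Recall first what must be shown. Since $R \subseteq \ICF_{\cA}$, every $R$-component of any $S$ is contained in an $\ICF_{\cA}$-component, i.e.\ $\cO_R(S)$ always refines $\cO_{\ICF}(S)$. The content of the theorem is the converse on the relevant sets, which makes the two partitions coincide there. Concretely, for $S = I$ or $S = \Delta(S',a)$ with $S'$ reachable in $\cA_{\cO_R}$, it suffices to prove the following. For every pair $p,q \in S$ with $(p,q) \in \ICF_{\cA}$, the states $p$ and $q$ lie in the same block of $\cO_R(S)$. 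Taking the transitive closure inside $S$ then gives that every $\ICF$-component of $S$ is contained in an $R$-component.

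So assume $p,q \in S$, $(p,q)\in\ICF_\cA$ with witnesses $r,s,u,v$, but $p \in P$ and $q \in P'$ for distinct blocks $P \neq P'$ of $\cO_R(S)$. Both $P$ and $P'$ are reachable states of $\cA_{\cO_R}$, via a common word $x$ (the word leading to $S'$, followed by $a$). The plan is to exhibit, for every $k$, at least $k$ distinct accepting runs of $\cA_{\cO_R}$ on a single word of the form $x\,(vu)^{N}\cdots$.

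The key property of $\cA_{\cO_R}$ we use is the following. Whenever $X$ is a state of $\cA_{\cO_R}$ and $t \in \Delta(X,w)$, there is a path in $\cA_{\cO_R}$ from $X$ on $w$ to the unique block containing $t$. This holds because each step applies $\Delta$ and then takes the block containing the successor. From $p \in P$, reading $v$ then $u$ returns to $p$ and also reaches $q$. From $q$, reading $v$ reaches $s$, which loops on $uv$, and $s$ is co-reachable since $\cA$ is trimmed.

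Since $\cA_{\cO_R}$ has finitely many states, we take a power $w = (vu)^N$ such that the blocks containing $p$ along the iteration become idempotent. This gives a state $X \ni p$ with $X \xrightarrow{w} X$. Tracking $q$ and then $s$ similarly gives a state $Y \ni s$ with $X \xrightarrow{w'} Y$ and $Y \xrightarrow{w''} Y$, for suitable powers of $vu$ and $uv$ padded with the prefixes $v$ and $u$ so that the words align. If $X \neq Y$, this is exactly the (IDA) pattern, and $\cA_{\cO_R}$ would be infinitely ambiguous, a contradiction.

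For part (2), we proceed in the same way with the $\ECF$ witnesses. There both $p$ and $q$ return to $r$ on $v$, so we get two distinct cycles on the same word through the blocks of $p$ and $q$. This is the (EDA) pattern, which contradicts polynomial ambiguity.

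The main obstacle is guaranteeing that the two branches do not collapse, i.e.\ that $X \neq Y$ (resp.\ that the two cycles stay in different blocks). The runs through $p$ and $q$ start in different blocks $P \neq P'$ of $S$, but after further letters the sets may be re-merged by $\cO_R$. To handle this, we will argue directly at the level of runs rather than pumped blocks. Fix the separation at $S$: a run of $\cA_{\cO_R}$ passing through block $P$ versus $P'$ at that position gives distinct runs. We then insert the loop $(vu)^N$ $k$ times after $x$, and at each of the $k$ iterations choose whether the $\cA$-path branches from $p$ towards $q$ and $s$. Each choice yields an accepting run of $\cA$, which lifts to a run of $\cA_{\cO_R}$. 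Since $S$ recurs (with $N$ chosen so that $\Delta(S,(vu)^N) \supseteq S$-type idempotence along reachable sets), the lift at the branching iteration passes through $P'$ rather than $P$. Hence the lifts are pairwise distinct, giving at least $k$ accepting runs.

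The delicate point is ensuring that the set reached after each pumping is again a set in which $p$ and $q$ are separated. This is where the hypothesis $R \subseteq \ICF_\cA$ and the choice of an idempotent power of $vu$ must be used, and it is the step I expect to require the most care.
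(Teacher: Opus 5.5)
\textbf{There is a genuine gap, and it cannot be closed.} It is exactly the step you flag at the end: the $R$-separation of $p$ and $q$ at one set $S$ does not propagate to the pumped iterations. Nothing forces $S$, or any set in which $p$ and $q$ are $R$-separated, to recur inside the loop. Inside the loop, $\Delta(X,vu)$ can contain further states that connect $p$ and $q$ in the graph of $R$. Then all your branchings lift to the same run of $\cA_{\cO_R}$.

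In fact the statement as written is false. Take states $i,p,q,z,r$ with $I=\{i\}$, $F=\{r\}$, and transitions
\begin{itemize}
\item $i\xrightarrow{a}p$ and $i\xrightarrow{a}q$;
\item $r\xrightarrow{b}x$ and $x\xrightarrow{c}r$ for each $x\in\{p,q,z\}$.
\end{itemize}
This NFA is trimmed. Every pair in $\{p,q,z\}$ lies in $\ECF_\cA\subseteq\ICF_\cA$, with witness $r$, $u=b$, $v=c$. Let $R$ be the identity plus $(p,z),(z,p),(q,z),(z,q)$. Then $R$ is reflexive and symmetric, and $R\subseteq\ECF_\cA\subseteq\ICF_\cA$.

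The automaton $\cA_{\cO_R}$ consists exactly of
\begin{itemize}
\item $\{i\}\xrightarrow{a}\{p\}$ and $\{i\}\xrightarrow{a}\{q\}$;
\item $\{p\}\xrightarrow{c}\{r\}$ and $\{q\}\xrightarrow{c}\{r\}$;
\item $\{r\}\xrightarrow{b}\{p,q,z\}$ and $\{p,q,z\}\xrightarrow{c}\{r\}$.
\end{itemize}
Every word of $ac(bc)^*$ therefore has exactly two accepting runs, so $\cA_{\cO_R}$ is finitely, hence also polynomially, ambiguous. Yet for $S=\Delta(\{i\},a)=\{p,q\}$, with $\{i\}$ reachable, we get $\cO_{\ICF}(S)=\cO_{\ECF}(S)=\{\{p,q\}\}$. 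This does not refine $\cO_R(S)=\{\{p\},\{q\}\}$, so both (1) and (2) fail.

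In your construction ($x=a$, $v=c$, $u=b$), the runs through $\{p\}$ and $\{q\}$ merge at $\{r\}$. Every later branching of $\cA$ lifts to the same run through $\{p,q,z\}$. So the claim that the lift ``passes through $P'$ rather than $P$'' at each iteration has no justification, and here it is false.

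\textbf{Comparison with the paper.} The paper's proof uses the same strategy as yours: lift the $\ICF$/$\ECF$ witness loop into $\cA_{\cO_R}$ by pigeonhole and exhibit (IDA)$'$/(EDA)$'$. It gets past your ``delicate point'' only through a quantifier slip. It negates ``refines'' as the existence of a pair $(p,q)\in\ICF_\cA$ that is $R$-separated in \emph{every} relevant set containing both. That universal assumption is what yields $S_p\neq S_q$ for the pumped blocks, and it would also make your argument go through. But it is not the negation of the conclusion: in the example above, $p$ and $q$ are separated at $\{p,q\}$ but not at $\{p,q,z\}=\Delta(\{r\},b)$. Following the paper would therefore not repair your proof. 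A correct version needs a hypothesis of that universal kind, or a notion of refinement restricted to sets lying on cycles of $\cA_{\cO_R}$.
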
 

We end this section by recalling that the notion of a common future between states was previously used in~\cite{mohri13}, and the disambiguation scheme $\cD_{\CF}$ was introduced in~\cite{WeberEconomy}, but with a different presentation.%
The novelty of Theorem~\ref{theo:UFA-disambiguation} relies on understanding this result in the context of a larger disambiguation framework (the one introduced in this paper) and studying its minimality (e.g., Theorem~\ref{theo:UFA-minimality}). Moreover, to the best of our knowledge, disambiguation procedures for finding finFA and polyFA have not been studied before.

%% file: figures/kDisambiguation/example/original.tex
\small
\begin{tikzpicture}[every label/.style={black!60}, 
	initial text=, 
	initial distance= {3mm},
	->,>=stealth', 
	auto,
	node distance=0.8cm,
	mystate/.style={state, inner sep=3pt, minimum size=0mm},
	longstate/.style={rectangle,inner sep=8pt,draw,rounded corners=11pt}]
  \node[mystate, initial]          (p_i)                {$1$};
  \node[mystate]          (p) [below=of p_i] {$3$};
  \node[mystate]          (r) [right=of p] {$6$};
  \node[mystate]          (g) [above=of p_i] {$2$};
    \node[mystate]          (q) [right=of p_i] {$5$};
  \node[mystate]          (r_2) [above=of q] {$4$};
  \node[mystate, accepting]          (s) [right=of q] {$7$};

  \path[->] (p_i) edge      node   {$b$} (g)
            (p_i) edge      node   {$a$} (p)
            (p_i) edge      node  {$a, b$} (q)
            (r_2) edge node[swap]  {$c$} (q)
            (q) edge node  {$c$} (s)
            (s) edge [loop above] node  {$c$} (s)
            (p) edge[bend left=15] node  {$b$} (r)
            (r) edge[bend left=15] node {$b$} (p)
            (r) edge[bend left=15] node  {$b$} (q)
            (q) edge[bend left=15] node {$b$} (r)
            (g) edge[bend left=15] node  {$c$} (r_2)
            (r_2) edge[bend left=15] node {$c$} (g);
            
          \node at (1, -2.4) {(a) expFA $\cA'$};
            
    \begin{scope}[xshift=9cm]
    	  \node[mystate, initial]          (p_i)                {$1$};
    	\node[mystate]          (g) [above=of p_i] {$2$};
    	\node[mystate]          (q) [right=of p_i] {$5$};
    	\node[mystate, inner sep=1.5pt]          (pq) [below=of p_i] {$3\,5$};
    	\node[mystate]          (r) [below=of q] {$6$};
    	\node[mystate]          (r_2) [right=of g] {$4$};
    	\node[mystate, accepting]          (s) [right=of r] {$7$};
    	
    	\path[->] (p_i) edge      node   {$b$} (g)
    	(p_i) edge      node   {$b$} (q)
    	(p_i) edge      node {$a$} (pq)
    	(r_2) edge node {$c$} (q)
    	(q) edge node  {$c$} (s)
    	(pq) edge [bend right=40,pos=0.8] node[swap] {$c$} (s)
    	(s) edge [loop above] node  {$c$} (s)
    	(q) edge node  {$b$} (r)
    	(g) edge[bend left=15] node  {$c$} (r_2)
    	(r_2) edge[bend left=15] node {$c$} (g)
    	(pq) edge[bend left=15] node  {$b$} (r)
    	(r) edge[bend left=15, pos=0.25] node {$b$} (pq);
    	
    	    \node at (1, -2.4) {(c) polyFA $\cA'_{\cO_{\ECF}}$};
    	
    \end{scope}
    
    \begin{scope}[xshift=4.5cm]
    	  \node[mystate, initial]          (p_i)                {$1$};
    	\node[mystate, inner sep=1.5pt]          (gq) [above=of p_i] {$2\, 5$};
    	\node[mystate, inner sep=1.5pt]          (pq) [below=of p_i] {$3\, 5$};
    	\node[mystate]          (r) [right=of p_i] {$6$};
    	\node[mystate]          (r_2) [right=of gq] {$4$};
    	\node[mystate, accepting]          (s) [right=of r] {$7$};
    	
    	\path[->] (p_i) edge      node    {$b$} (gq)
    	(p_i) edge      node[swap] {$a$} (pq)
    	(gq) edge[pos=0.7]  node  {$c$} (s)
    	(pq) edge[bend right] node[swap] {$c$} (s)
    	(s) edge [loop above] node {$c$} (s)
    	(gq) edge node[swap]  {$b$} (r)
    	(gq) edge[bend left=10] node  {$c$} (r_2)
    	(r_2) edge[bend left=10,pos=0.35] node {$c$} (gq)
    	(pq) edge[bend left=10] node  {$b$} (r)
    	(r) edge[bend left=10] node {$b$} (pq);
    	
    	\node at (1, -2.4) {(b) finFA $\cA'_{\cO_{\ICF}}$};
  
    \end{scope}
\end{tikzpicture}

%% file: figures/EDAexample/original.tex
\begin{tikzpicture}[scale=0.9,every label/.style={black!60}, initial text=, ->,>=stealth', node distance=1cm,
		mystate/.style={state, inner sep=3pt, minimum size=0mm}]
  \node[mystate, initial]          (p_i)                {$p_i$};
  \node[mystate]          (p) [above right=of p_i] {$p$};
  \node[mystate]          (q) [below right=of p_i] {$q$};
  \node[mystate, accepting]          (p') [below right=of p] {$p'$};
  \node[mystate]          (z) [right=of p'] {$z$};

  \path[<->] (p) edge [bend left]     node [above right]   {${\tt a}$} (p')
            (q) edge      node [below right]   {${\tt a}$} (p');
  
  \path[->] (p_i) edge      node [above left]   {${\tt a}$} (p)
            (p_i) edge      node [below left]   {${\tt a}$} (q)
            (p') edge [bend right] node [below] {${\tt a}$} (z)
            (z) edge [bend right] node [above] {${\tt b}$} (p')
            (p) edge [bend right] node [below] {${\tt b}$} (p');
\end{tikzpicture}

%% file: figures/EDAexample/dis.tex
\begin{tikzpicture}[every label/.style={black!60}, initial text=, ->,>=stealth',node distance=0.8cm,state/.style={circle,draw},longstate/.style={rectangle,inner sep=7pt,draw,rounded corners=11pt},
		mystate/.style={state, inner sep=3pt, minimum size=0mm}]
  \node[mystate, initial]          (p_i)                {$p_i$};
  \node[longstate]          (pq) [right=of p_i] {$p, q$};
  \node[mystate, accepting]          (p') [right=of pq] {$p'$};
  \node[longstate]          (pqz) [right=of p'] {$p, z, q$};

  \path[->] (p_i) edge      node [above]   {${\tt a}$} (pq)
            (pq) edge      node [above]   {${\tt a,b}$} (p')
            (p') edge [shorten >=-3.2pt,bend right] node [below] {${\tt a}$} (pqz.-156)
            (pqz.156) edge [shorten <=-3.2pt,bend right] node [above] {${\tt b}$} (p');
\end{tikzpicture}

%% file: figures/EDAexample/alternative.tex
\begin{tikzpicture}[scale=0.9,every label/.style={black!60}, initial text=, ->,>=stealth', node distance=1cm,state/.style={circle,draw},
longstate/.style={rectangle,inner sep=7pt,draw,rounded corners=11pt},
mystate/.style={state, inner sep=3pt, minimum size=0mm}
]
  \node[mystate, initial]          (p_i)                {$p_i$};
  \node[mystate]          (p) [above right=of p_i] {$p$};
  \node[mystate]          (q) [below right=of p_i] {$q$};
  \node[mystate, accepting]          (p') [below right=of p] {$p'$};
  \node[longstate]          (z) [right=of p'] {$p, q, z$};

  \path[->] (p_i) edge      node [above left]   {${\tt a}$} (p)
            (p_i) edge      node [below left]   {${\tt a}$} (q)
            (p) edge      node [above right]   {${\tt a, b}$} (p')
            (q) edge      node [below right]   {${\tt a}$} (p')
            (p') edge [shorten >=-3.2pt,bend right] node [below] {${\tt a}$} (z.-156)
            (z.156) edge [shorten <=-3.2pt,bend right] node [above] {${\tt b}$} (p');
\end{tikzpicture}

%% file: sections/weighted.tex
Weighted finite automata are more complex than previous automata models and have numerous use cases~\cite{droste2009handbook}.
Moreover, they do not always admit a deterministic 
equivalent automata. 
Therefore, generalizing our algorithmic framework for disambiguation to this model would be useful for practical applications.
In this section, we extend the algorithmic framework for disambiguation to the model of weighted automata. 
We start by recalling some algebraic notions and the model of weighted automata. We then present the framework and the disambiguation results over weighted~automata. 

\paragraph{Semirings} A \emph{monoid} is a triple $(\bbM, \oplus, \overline{0})$ where $\bbM$ is a non-empty set, $\oplus$ is an associative binary operation, and $\zero \in \bbM$ is the identity of $\oplus$ over $\bbM$ (i.e., $\zero \oplus m = m \oplus \zero = m$). We say that $(\bbM, \oplus, \overline{0})$ is \emph{commutative} if, additionally, $\oplus$ is commutative. A \emph{semiring} is a tuple $(\bbS, \oplus, \odot, \zero, \one)$ such that $(\bbS, \oplus, \zero)$ is a commutative monoid, $(\bbS, \odot, \one)$ is a  monoid, $\odot$ distributes over $\oplus$ 
and $\zero$ annihilates with $\odot$. %
For simplicity, we will refer to such a semiring by its underlying set $\bbS$. Examples of semirings are the \emph{boolean semiring} $(\{0,1\}, \vee, \wedge, 0, 1)$, the \emph{natural numbers} $(\bbN, +, \cdot, 0, 1)$, the \emph{tropical semiring} $(\bbZ \cup \{\infty\}, \min, +, \infty, 0)$, and the \emph{artic semiring} $(\bbZ \cup \{-\infty\}, \max, +, -\infty, 0)$, among others. %

Given non-empty sets $A$ and $B$, we denote by $\bbS^{A\times B}$ the set of all \emph{matrices} indexed by $A \times B$ over $\bbS$. 
Further, we denote by $\bbS^{A}$ the set of (column) \emph{vectors} (i.e., $\bbS^{A} = \bbS^{A \times 1}$). 
For $M \in \bbS^{A\times B}$ and $V \in \bbS^A$, we write $M[a,b]$ and $V[a]$ for the entries at positions $(a,b) \in A \times B$ and $a \in A$, respectively.
Given two matrices $M \in \bbS^{A \times B}$ and $N \in \bbS^{B \times C}$, their product $P = M\odot N \in \mathbb{S}^{A \times C}$ is defined by $
P[a,c] = \bigoplus_{b \in B} M[a,b] \odot N[b,c]
$
for all $a \in A$ and $c \in C$. 
We write $M^t \in \bbS^{B \times A}$ for the transpose of $M \in \bbS^{A \times B}$ where $V^t$ corresponds to a row vector when $V \in \bbS^A$.
For a value $s\in \bbS$ and vector $V\in\bbS^A$, we define $s\odot V \in \bbS^A$ ($V\odot s \in \bbS^A$) by $(s\odot V)[a] = s\odot V[a]$ ($(V\odot s)[a] = V[a]\odot s$, resp.) for each $a\in A$.
Given a vector $V \in \bbS^{A}$, we define its support as $\supp(V) = \set{a \in A \mid V[a] \neq \zero}$. Given a value $s \in \bbS$, we denote by $s^A$ the vector in $\bbS^A$ such that $s^A[a] = s$ for every $a \in A$. 

\paragraph{Weighted automata} A weighted finite automaton~\cite{schutzenberger1961definition} (WFA) over a semiring $\bbS$ is a tuple:
\[
\cW \ := \ (Q, \Sigma, \Delta, I, F) \tag{$\dagger$}
\] 
where $Q$ is a finite set of states, $\Sigma$ is a finite alphabet, $\Delta \subseteq Q \times \Sigma \times \mathbb{S} \times Q$ is a finite transition relation, and $I: Q \rightarrow \bbS$, $F: Q \rightarrow \bbS$ are the initial and final weighted functions, respectively. We assume that for every $p,q \in Q$ and $a \in \Sigma$ there exists at most one transition $(p,a,s,q) \in \Delta$ for some $s \in \bbS$ (i.e., there cannot be multiple transitions between states for the same letter). We say that $q \in Q$ is an initial (final) state if $I(q) \neq \zero$ ($F(q) \neq \zero$, resp.). 
A \emph{partial run} $\rho$ of $\cW$ over a word $w = a_1 \ldots a_n \in \Sigma^*$ is a sequence:
\[
\rho \ := \ p_0 \, \xrightarrow{a_1/s_1} \, p_1 \, \xrightarrow{a_2/s_2} \ \cdots \ \xrightarrow{a_n/s_n} \, p_n \tag{$\ddagger$}
\]
where $(p_{i-1}, a_i, s_i, p_i) \in \Delta$ for each $i \in [n]$. 
We say $\rho$ is a \emph{run} if $p_0$ is an initial~state, and it is \emph{accepting} if additionally $p_n$ is a final state. 
We denote by~$\runs_{\cW}(w)$~the~set~of~all accepting runs of $\cW$ over $w$. For a partial run $\rho$ like ($\ddagger$) we define $\omega(\rho) := I(p_0) \odot s_1 \odot \cdots \odot s_n \odot F(p_n)$ as the {\em weight of $\rho$}. 
Each weighted automaton $\cW$ defines a function $\sem{\cW}: \Sigma^* \rightarrow \bbS$ such that 
$
\sem{\cW}(w) := \bigoplus_{\rho \in \, \runs_\cW(w)} \omega(\rho)
$
for every $w \in \Sigma^*$. 
Similar to NFA, sometimes we also interpret $\Delta$ as a function $\Delta: Q \times \Sigma \rightarrow 2^{\bbS \times Q}$ such that $\Delta(p, a) = \{(s,q) \mid (p,a,s,q) \in \Delta\}$.

In the sequel, it will be useful to consider the \emph{matrix representation} of a weighted automaton $\cW$. For every $a \in \Sigma$, the transition relation naturally defines the matrix $\Delta_a \in \bbS^{Q \times Q}$ such that $\Delta_a[p,q] = s$ if $(p,a,s,q) \in \Delta$ and $\zero$, otherwise. We extend this matrix from $\Sigma$ to $\Sigma^*$ as $\Delta_w = \Delta_{a_1} \odot \cdots \odot \Delta_{a_n}$ for every $w = a_1 \ldots a_n \in \Sigma^*$. Further, we can represent the initial and final functions $I$ and $F$ as vectors $\vI, \vF \in \bbS^{Q}$ such that $\vI[q] = I(q)$ and $\vF[q] = F(q)$ for every $q \in Q$. Then, the function $\sem{\cW}$ can be equivalently defined as $\sem{\cW}(w) = \vI^t \odot \Delta_w \odot \vF$ for every~$w \in \Sigma^*$. 
We will also use the notation $\vec{\Delta}(V, a) := (V^t\odot \Delta_a)^t$ which intuitively is the aggregate vector reached by starting at $V$ and moving through $a$.

Given a WFA $\cW$ like ($\dagger$), we define its \emph{underlying NFA} $\cA_\cW = (Q, \Sigma, \DeltaFA, \IFA, \FFA)$ such that $\IFA = \{p \in Q \mid I(p) \neq \zero\}$, $\FFA = \{p \in Q \mid F(p) \neq \zero\}$, and $\DeltaFA = \{(p,a,q) \mid (p,a,s,q) \in \Delta\}$.
Similar to NFA, we assume that all WFA considered in this paper are \emph{trimmed}, namely, every state in its underlying NFA is reachable and co-reachable simultaneously. Further, we say that $\cW$ is \emph{deterministic} (DWFA), \emph{unambiguous} (UWFA), \emph{finitely ambiguous} (finWFA), or \emph{polynomially ambiguous} (polyWFA) if the underlying NFA $\cA_{\cW}$ is deterministic, unambiguous, finitely ambiguous, or polynomially ambiguous, respectively. On some semirings, these classes define a strict \emph{hierarchy of functions}~\cite{KlimannLMP04,ChattopadhyayMM21}, namely, for every pair of classes $\cC_1, \cC_2 \in \{\text{DWFA}, \text{UWFA}, \text{finWFA}, \text{polyWFA}, \text{WFA}\}$ with $\cC_1 \subsetneq \cC_2$ there exists $\cW \in \cC_2$ such that $\sem{\cW} \neq \sem{\cW'}$ for every $\cW' \in \cC_1$.

\paragraph{The weighted disambiguation framework} 
The new challenge in defining a disambiguation algorithm for weighted 
automata lies in handling weights during this process. 
Mohri's determinization algorithm~\cite{mohriTransducer} does this by adding a residual 
weight on subsets. 
Kirsten~and Mäurer~\cite{KirstenOnDeterminizationWeighted} later generalized this 
approach by introducing the weight factorization. 
We use these ideas to generalize the algorithmic framework of previous sections to weighted automata. 

Fix a semiring $\bbS$ and a set $Q$. A \emph{weight factorization}~\cite{KirstenOnDeterminizationWeighted} over $\bbS^Q$ is  a pair of functions $(\fac, \res)$, called the \emph{residual} and \emph{factor} functions, respectively, with $\fac: \bbS^Q \setminus \set{\zero^Q} \mapsto \bbS$ and $\res: \bbS^Q \setminus \set{\zero^Q} \mapsto \bbS^Q$ such that for every vector $V \in \bbS^Q \setminus \set{\zero^Q}$ it holds that $
V = \fac(V) \odot \res(V)$. 
In other words, $(\fac, \res)$ defines a strategy to factorize a vector $V$ into a single factor common to all components ($\fac(V)$) and a vector ($\res(V)$).

Let $\cW$ be an WFA like ($\dagger$) over $\bbS$. We define a \emph{partition-factorization oracle} $\cF$ for $\cW$ (called PF-oracle for short) as a triple $\cF = (\Pi, \fac, \res)$ where $(\fac, \res)$ is a weight factorization over $\bbS^Q$ and $\Pi: \bbS^Q \mapsto 2^{\bbS^Q \setminus \{\zero^Q\}}$ is a \emph{weight partition function} such that $\Pi(V)$ is a finite set and $\bigoplus_{V' \in \Pi(V)} V' = V$ for every $V \in \bbS^Q$. 
That is, $\Pi(V)$ distributes the values of vector $V$ over a set of vectors that has to $\oplus$-aggregate back to $V$. 
Indeed, one can see $\Pi$ as a generalization of the partition oracle $\cO$ from the boolean semiring to any semiring $\bbS$.
Given a PF-oracle $\cF = (\Pi, \fac, \res)$ for $\cW$, we define the (infinite) WFA:
\[
\cW_{\cF}^\infty \ := \ (\bbS^Q, \Sigma, \Delta_{\cF}, I_\cF, F_{\cF})
\]  
such that $\supp(\vec{I}_\cF) = \{\res(V) \mid V \in \Pi(\vI)\}$ and $I_{\cF}(\res(V)) = \fac(V)$ for all $V \in \Pi(\vI)$; $F_\cF(V) = V^t \odot \vF$ for every $V \in \bbS^Q$; and:
\[
\Delta_{\cF} \ := \  \big\{\, (V, a, \fac(V'), \res(V')) \, \mid \, V' \in \Pi(\vec{\Delta}(V, a)) \, \big\}.
\]
Note that $\vec{I}_\cF$ has finite support and, for every $V \in \bbS^Q$, $\Delta_{\cF}(V, a)$ is finite. 
So, although $\cW_{\cF}^\infty$ has an infinite number of states when $\bbS$ is infinite, there is always a finite number of initial states and every state/letter has a finite number of next configurations. We define $\cW_{\cF}$ as the \emph{trimmed version of $\cW_{\cF}^\infty$}. Notice that $\cW_{\cF}$ could lead to a finite WFA but not always.

\begin{restatable}{proposition}{wfaFrameworkEquivalence}\label{prop:wfaFrameworkEquivalence}
	For every PF-oracle $\cF$ of $\cW$, if $\cW_{\cF}$ is finite, then $\sem{\cW} = \sem{\cW_{\cF}}$.
\end{restatable}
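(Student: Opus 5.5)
We prove the stronger vector invariant
\[
\bigoplus_{V \in Q_{\cF}} \bigl(\text{total weight of runs of } \cW_{\cF} \text{ over } w \text{ ending in } V\bigr)\odot V \;=\; \bigl(\vI^t \odot \Delta_w\bigr)^t,
\]
by induction on $|w|$, working in $\cW_{\cF}^\infty$. Here the left side is a $\bbS^Q$-vector (a weighted sum of the state-vectors $V$), and $Q_{\cF}$ denotes the states of $\cW_{\cF}$. Concretely, for each state $V$ of $\cW^\infty_\cF$ let $\alpha_w(V)$ be the $\oplus$-sum of $I_\cF(V_0)\odot s_1\odot\cdots\odot s_n$ over all runs of $\cW^\infty_\cF$ on $w$ ending in $V$. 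The claim is that $\bigoplus_V \alpha_w(V)\odot V = (\vI^t\odot\Delta_w)^t$. The sum is well defined because only finitely many states carry runs for a fixed $w$. Once this holds, we apply $F_\cF(V)=V^t\odot\vF$, using distributivity and associativity of matrix products:
\[
\sem{\cW^\infty_\cF}(w)=\bigoplus_V \alpha_w(V)\odot V^t\odot\vF=\vI^t\odot\Delta_w\odot\vF=\sem{\cW}(w).
\]

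For the base case $w=\eps$, we have $\alpha_\eps(\res(V'))=\bigoplus\{\fac(V'')\mid V''\in\Pi(\vI),\ \res(V'')=\res(V')\}$. Hence
\[
\bigoplus_V\alpha_\eps(V)\odot V=\bigoplus_{V'\in\Pi(\vI)}\fac(V')\odot\res(V')=\bigoplus_{V'\in\Pi(\vI)}V'=\vI.
\]
This uses the factorization identity and the aggregation property of $\Pi$. Collisions, where two elements of $\Pi(\vI)$ have the same residual, are harmless because we sum their weights.

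For the inductive step $w\to wa$, every run on $wa$ ending in $V'$ extends a run ending in some $V$ by a transition $(V,a,\fac(U),\res(U))$ with $U\in\Pi(\vec\Delta(V,a))$ and $\res(U)=V'$. So
\[
\bigoplus_{V'}\alpha_{wa}(V')\odot V'=\bigoplus_V\alpha_w(V)\odot\bigoplus_{U\in\Pi(\vec\Delta(V,a))}\fac(U)\odot\res(U)=\bigoplus_V\alpha_w(V)\odot\vec\Delta(V,a).
\]
We then need $\vec\Delta(V,a)=(V^t\odot\Delta_a)^t$ to be linear, in the sense that $\bigoplus_V \alpha_w(V)\odot (V^t\odot \Delta_a)^t = \bigl((\bigoplus_V\alpha_w(V)\odot V)^t\odot\Delta_a\bigr)^t$. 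This is distributivity together with associativity of $\odot$; since scalars multiply on the left, no commutativity is needed. Applying the induction hypothesis gives $(\vI^t\odot\Delta_{wa})^t$.

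The main subtlety is the bookkeeping in the inductive step. Transitions are assumed unique per state/letter/target, so transitions from distinct $U$ with equal $\res(U)$ and equal $\fac(U)$ must be treated as a single transition. If they had different factors, the definition of $\Delta_\cF$ would be ill-formed. The cleanest way out is to read $\Delta_\cF$ as aggregating such coinciding transitions by $\oplus$, and I would state this convention explicitly. A second point is the zero vector: $\zero^Q$ is excluded from $\Pi$'s range, so $\fac$ and $\res$ are always defined. Finally, trimming: removing states that are unreachable or not co-reachable only deletes runs that are not accepting, or accepting runs whose final weight is $\zero$, so $\sem{\cW_\cF}=\sem{\cW^\infty_\cF}$. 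Finiteness of $\cW_\cF$ is needed only so that $\cW_\cF$ is a legitimate WFA.
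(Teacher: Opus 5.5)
Your proof is correct and follows essentially the same route as the paper: both prove by induction on $|w|$ the vector invariant $\bigoplus_{V} \alpha_w(V)\odot V = (\vI^t\odot\Delta_w)^t$ (the paper writes your $\alpha_w(V)$ as $\bigoplus_{V_i} I_\cF(V_i)\odot(\Delta_\cF)_w[V_i,V]$), using $\fac(U)\odot\res(U)=U$, the aggregation property of $\Pi$ and distributivity, and then finish with $F_\cF(V)=V^t\odot\vF$. Your explicit $\oplus$-aggregation of colliding residuals and your remark on trimming make precise what the paper's expansion uses only implicitly; note also that distinct $U$ with equal residual and equal factor cannot actually occur, since $U=\fac(U)\odot\res(U)$.
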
 

For a class $\cC$ of WFA, the notion of a \emph{$\cC$-disambiguation scheme $\cD$} naturally extends from NFA to WFA where $\cD(\cW)$ must output a PF-oracle for a WFA $\cW$. The notion of an \emph{efficient} disambiguation scheme $\cD$ also extends to WFA where now, for every vector $V$, $\Pi(V)$, $\fac(V)$, and $\res(V)$ must be computed in polynomial time over $|\cW|$ whenever~$\cD(\cW) = (\Pi, \fac, \res)$.

One can check that the construction $\cW_{\cF}$ is a generalization of Mohri's determinization algorithm~\cite{mohriTransducer}. 
Specifically, for the tropical semiring $\bbT = (\bbZ \cup \{\infty\}, \min, +, \infty, 0)$ and a WFA $\cW$ like ($\dagger$) over $\bbT$, consider the PF-oracle $\cFMohri = (\PiId, \facMohri, \resMohri)$ where $\PiId(V) =~\{V\}$, $\facMohri(V) = min_{q \in Q} V[q]$ and $\resMohri(V) = V - \facMohri(V)$ for every $V \in \bbT^Q$. Then $\cW_{\cFMohri}$ indeed corresponds to Mohri's determinization construction. Similarly, consider now any semiring $\bbS$ and a PF-oracle $\cFK = (\PiId, \fac, \res)$ for any weight factorization $(\fac, \res)$ over $\bbS^Q$. And now $\cW_{\cFK}$ is equivalent to Kirsten and Mäurer's determinization construction~\cite{KirstenOnDeterminizationWeighted}. 
What our approach adds is that we can consider different weight partition functions $\Pi$, which could lead to different levels of ambiguity.%

\paragraph{Weighted disambiguation}
Within this framework, we can combine various factorizations, partitions, and semirings to define disambiguation algorithms for WFA. However, the effectiveness of these algorithms depends on whether $\cW_{\cF}$ is finite for all input WFA~$\cW$. Furthermore, we aim to ensure the desired properties of idempotent, on-the-fly, and efficiency discussed in the previous sections. We address this by introducing a specific class of functions $\Pi$ that, combined with the oracles from Section~\ref{sec:disambiguation-conditions}, will guarantee the desired level of ambiguity. %

Given a WFA $\cW$ like ($\dagger$) over a semiring $\bbS$ and 
a PF-oracle $\cF = (\Pi, \fac, \res)$, we say that $\Pi$ has \emph{disjoint vector support} iff $\supp(V_1) \cap \supp(V_2) = \emptyset$ for every pair $V_1, V_2 \in \Pi(V)$ with $V_1 \neq V_2$ and $V \in \bbS^Q$. 
Further, we say that $(\fac, \res)$ has \emph{identity factorization} if $\fac(s \odot V) = s$ and $\res(s \odot V) = V$ whenever $V$ is a zero-one vector (i.e., $V \in \{\zero, \one\}^Q$). For example, the PF-oracle
 $\cFMohri = (\PiId, \facMohri, \resMohri)$ of Mohri's determinization satisfies both~properties.

As we will see, the identity factorization of $(\fac, \res)$ will ensure the idempotent property of disambiguation algorithms; whereas the disjoint vector support of $\Pi$ will allow to extend the partition oracles found for NFA to WFA. Specifically, let $\cO$ be a partition oracle for the underlying NFA $\cA_{\cW}$ of a WFA $\cW$. Then $\cO$ naturally defines a weight partition function $\Pi_{\cO}$ over $\bbS^Q$ given by $\Pi_{\cO}(V) = \set{ V_S \mid S \in \cO(\supp(V))}$ where $V_S \in \bbS^Q$ is defined as $V_S[q] = V[q]$ if $q \in S$, and $V_S[q] = \zero$ otherwise. 
Note that $\Pi_{\cO}$ has disjoint vector support.

\begin{restatable}{theorem}{WFAdisambiguation}\label{theo:WFA-disambiguation}
	Let $\cW$ be a WFA over $\mathbb{S}$ and $(\fac, \res)$ be a weight factorization. For a PF-oracle $\cF$ equal to $(\Pi_{\cO_{\CF}}, \fac, \res)$, $(\Pi_{\cO_{\ICF}}, \fac, \res)$, or $(\Pi_{\cO_{\ECF}}, \fac, \res)$, 
	if $\cW_{\cF}$ is finite, then $\cW_{\cF}$ is unambiguous, finitely ambiguous, or polynomially ambiguous, respectively. Furthermore, if $\cW$ is unambiguous, finitely ambiguous, or polynomially ambiguous and $(\fac,\res)$ is an identity factorization, then $\cW_{\cF}$ is isomorphic to $\cW$.
\end{restatable}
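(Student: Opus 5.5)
The plan is to reduce everything to the unweighted results (Theorems~\ref{theo:UFA-disambiguation} and~\ref{theo:finFA-polyFA-disambiguation}) via a support map. Write $\cO$ for any of $\cO_{\CF}$, $\cO_{\ICF}$, $\cO_{\ECF}$ computed on the underlying NFA $\cA_\cW$, and let $\cF = (\Pi_{\cO},\fac,\res)$. The key claim is that the map $h \colon V \mapsto \supp(V)$ sends the underlying NFA of $\cW_{\cF}$ into $(\cA_\cW)_{\cO}$, and that this map is injective on runs.

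\textbf{Step 1: $h$ is a homomorphism.} Since $V = \fac(V)\odot\res(V)$, we have $\supp(\res(V)) \supseteq \supp(V)$, but equality need not hold. So a first step is to establish equality, or to work with the invariant that every reachable state $V$ of $\cW_\cF$ satisfies $\supp(V) \in \reach((\cA_\cW)_{\cO})$.

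I would argue by induction along runs. An initial state is $\res(V_S)$ with $S \in \cO(\supp(\vI)) = \cO(\IFA)$. A transition $V \xrightarrow{a} \res(V')$ has $V' = \vec\Delta(V,a)_S$ with $S \in \cO(\supp(\vec\Delta(V,a)))$.

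Here is the main obstacle. $\supp(\vec\Delta(V,a))$ can be strictly smaller than $\DeltaFA(\supp(V),a)$ because of cancellation in $\bbS$, for instance in the natural numbers with zero-divisor-free sums versus rings. In addition, $\res$ may enlarge the support. I would handle this through monotonicity of $\cO_R$ under subsets: for a relational oracle, the components of $R\cap(S'\times S')$ for $S'\subseteq S$ refine the components of $R\cap(S\times S)$. Using this, I would show the weaker statement that every reachable $V$ has $\supp(V)\subseteq T$ for some $T\in\reach((\cA_\cW)_\cO)$, with runs mapped to runs of $(\cA_\cW)_\cO$ up to this inclusion. If $\res$ can genuinely introduce new support, I would restrict to factorizations with $\supp(\res(V))=\supp(V)$, which holds whenever $\fac(V)$ is not a zero divisor. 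I would check whether the paper implicitly assumes this.

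\textbf{Step 2: ambiguity transfer.} Consider an accepting run of $\cW_\cF$ on $w$, namely $V_0\cdots V_n$ with $V_n^t\odot\vF\ne\zero$. Pick $q_n\in\supp(V_n)\cap\FFA$ and trace back: $q_n$ has a predecessor $q_{n-1}\in\supp(V_{n-1})$ with a transition on the last letter, and so on back to an initial state. This yields an accepting run of $\cA_\cW$ threaded through the sets.

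Disjoint vector support makes the successors $V'\in\Pi_\cO(\cdot)$ have pairwise disjoint supports. Hence distinct runs of $\cW_\cF$ on $w$ yield distinct underlying runs of $\cA_\cW$, and each such underlying run is "witnessed" inside a unique run of the partition-based automaton.

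I would then reuse the proofs of Theorems~\ref{theo:UFA-disambiguation} and~\ref{theo:finFA-polyFA-disambiguation} rather than only their statements. Those proofs show that two distinct runs of $(\cA_\cW)_\cO$ on the same word would force two states in different blocks to share a (infinite, exponential) common future, which is a contradiction. The same witness argument goes through for runs of $\cW_\cF$ whose supports are subsets of those blocks, since common futures only depend on $\cA_\cW$. The result is that $\da(\cW_\cF)\le\da((\cA_\cW)_\cO)$ and that the growth rates are bounded likewise.

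\textbf{Step 3: idempotence.} Assume $\cW$ is in the class and $(\fac,\res)$ is an identity factorization. By Theorems~\ref{theo:UFA-disambiguation} and~\ref{theo:finFA-polyFA-disambiguation}, $\cO$ returns singletons on reachable sets. By induction, every reachable state of $\cW_\cF$ is the unit vector $e_q$ with $q\in Q$:
\begin{itemize}
\item Initially, $\Pi_\cO(\vI)=\{I(q)\odot e_q\}$, which factorizes to $(I(q),e_q)$.
\item Next, $\vec\Delta(e_p,a)=\bigoplus_q \Delta_a[p,q]\odot e_q$ splits into the pieces $\Delta_a[p,q]\odot e_q$, giving transitions $e_p\xrightarrow{a/\Delta_a[p,q]}e_q$.
\item Finally, $F_\cF(e_q)=F(q)$.
\end{itemize}
Thus $q\mapsto e_q$ is an isomorphism. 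Throughout, I use that $\cO$ applied to $\supp(\vec\Delta(e_p,a))\subseteq\DeltaFA(p,a)$ still gives singletons, by the refinement monotonicity from Step 1.
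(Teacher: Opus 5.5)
Your route is the paper's: reason on supports, trace predecessors inside them, re-run the common-future contradictions of Theorems~\ref{theo:UFA-disambiguation} and~\ref{theo:finFA-polyFA-disambiguation} on $\cW_\cF$, and get idempotence by induction on unit vectors. The paper does this last step via Proposition~\ref{prop:wfaFrameworkIsomorphic}.

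The obstacle you flagged is real, and the paper's proof passes over it silently. Its UWFA case concludes $V_i=V_i'$ from $\supp(V_i)=\supp(V_i')$ ``since $\Pi$ has disjoint vector support''. That step identifies $\supp(\res(X))$ with $\supp(X)$. Without a hypothesis such as $\supp(\res(V))=\supp(V)$, the first claim is false; that hypothesis holds automatically when $\bbS$ has no zero divisors, e.g.\ $\bbN$ or $\bbT$.

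Here is a counterexample. Work over $\mathbb{B}\times\mathbb{B}$ with componentwise $\vee,\wedge$.
\begin{itemize}
\item Take initial state $0$ with weight $\one$, and final states $f,g$ with weight $\one$.
\item Take transitions $0\xrightarrow{a/(1,0)}p$, $0\xrightarrow{a/\one}q$, $0\xrightarrow{d/\one}p'$, $q\xrightarrow{b/\one}f$, $p'\xrightarrow{b/\one}f$, and $p\xrightarrow{c/\one}g$.
\item The underlying NFA is unambiguous, and $(p,q)\notin\CF_{\cA_\cW}$.
\item Writing $e_x$ for unit vectors, let $\fac=\one$ and $\res=\mathrm{id}$, except $\fac((1,0)e_p)=(1,0)$ and $\res((1,0)e_p)=e_p\oplus(0,1)e_{p'}$.
\item For $\cF=(\Pi_{\cO_{\CF}},\fac,\res)$, the automaton $\cW_\cF$ has seven states.
\item The word $ab$ has two accepting runs: $e_0\to e_q\to e_f$, and $e_0\to e_p\oplus(0,1)e_{p'}\to(0,1)e_f$. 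The last state has final weight $(0,1)\neq\zero$.
\end{itemize}
So your support-preservation restriction is necessary, not a weakness.

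One step should be dropped, though. Cancellation or zero divisors can make $\supp(\vec\Delta(V,a))$ strictly smaller than $\DeltaFA(T,a)$. Then two different pieces of $\cO(\supp(\vec\Delta(V,a)))$ can sit inside one block of $\cO(\DeltaFA(T,a))$. Your inclusion map into $(\cA_\cW)_\cO$ is therefore not injective on runs, and $\da(\cW_\cF)\le\da((\cA_\cW)_\cO)$ does not follow.

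You do not need it. Apply (IDA)' or (EDA)' directly to the finite automaton $\cW_\cF$:
\begin{itemize}
\item The diverging successors of the branching state $U$ are $\res(X)\neq\res(X')$ for pieces $X\neq X'$ of $\Pi_\cO(\vec\Delta(U,a))$.
\item With support preservation, the witnesses found by predecessor tracing lie in $\supp(X)$ and $\supp(X')$, hence both in $\supp(\vec\Delta(U,a))$.
\item They are directly related by $\CF_{\cA_\cW}$, $\ICF_{\cA_\cW}$ or $\ECF_{\cA_\cW}$ respectively, so they fall in the same piece. This is a contradiction, and no monotonicity is needed.
\end{itemize}
If $\bbS$ is moreover zero-sum-free, then $V\mapsto\supp(V)$ is an exact, run-injective homomorphism into $(\cA_\cW)_\cO$. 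In that case your original key claim holds, and the result follows from the statements of the NFA theorems alone.
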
 

The previous result shows that the partition oracles for disambiguation introduced in Section~\ref{sec:disambiguation-conditions} naturally extend to the weighted case---but only if the resulting disambiguation $\cW_{\cF}$ is finite. Furthermore, these PF-oracles define the corresponding efficient disambiguation schemes when the weighted factorization $(\fac,\res)$ can be computed in polynomial time. 

One example of disambiguation is shown in Figure~\ref{fig:generalWFAExamples} with WFA $\cW$ over the natural numbers semiring $\bbS=  (\bbN, +, \cdot, 0, 1)$  and PF-oracle $\cF = (\Pi_{\cO_{\CF}}, \fac, \res)$, where $\fac(V) =  \textsf{GCD}(V)$ and $\res(V) = \frac{V}{\textsf{GCD}(V)}$ for all $V \in \bbS^Q$.
$\textsf{GCD}(V)$ is the \emph{greatest common divisor} of all elements in $V$.
$\cW_{\cF}$ is unambiguous and equivalent to $\cW$.

\begin{figure}[t]
	\centering
     \input{./figures/weightedAutomata/generalExample/original.tex}%
	    \caption{WFA $\cW$ and its disambiguation $\cW_{\cF}$. Every vector $V$ state in $\cW_{\cF}$ is shown as pairs $q/v$, where $V[q] = v$, states $q$ that are not part of the support of $V$ are omitted.}
	\label{fig:generalWFAExamples}
\end{figure}

\paragraph{Generalization of the {\em twins property}} For this subsection, fix the tropical semiring $\bbT = (\bbZ \cup \{\infty\}, \min, +, \infty, 0)$. In~\cite{mohriTransducer}, Mohri studied the \emph{twins property}, a condition towards characterizing termination for his algorithm for determinizing weighted automata over $\bbT$. 
This condition is sufficient in general, and also necessary when the weighted automata is unambiguous. 
Here, we demonstrate how to generalize the twins property to provide sufficient conditions for finite disambiguation across different levels of ambiguity.

Let $\cW$ be a WFA of the form ($\dagger$) over $\bbT$. %
Given a relation $R \subseteq Q \times Q$, we say that two states $p, q \in Q$ are $R$-\emph{twins} iff for every $u, w \in \Sigma^*$, whenever $p, q \in \DeltaFA(\IFA, u)$, $p \in \DeltaFA(p, w)$, $q \in \DeltaFA(q, w)$, and $(p,q) \in R$, then $\Delta_w[p,p] = \Delta_w[q,q]$. 
In other words, if $w$ forms a loop in $p$ and in $q$ and these nodes are related by $R$, then the weight of the loops must be the same. Then, we say that $\cW$ has the $R$-\emph{twins property} if every pair of states in $Q$ are $R$-twins. 

Note that the twins property in~\cite{mohriTransducer} is the special case when $R = Q \times Q$. 
We generalize this property for the relations introduced in Section~\ref{sec:disambiguation-conditions}, 
providing sufficient conditions for termination of the disambiguation algorithms.  

\begin{restatable}{theorem}{WFAtermination}\label{theo:WFAtermination}
		Let $\cW$ be a WFA over $\bbT$ and $R \in \{\CF_{\cW}, \ICF_{\cW}, \ECF_{\cW}\}$. If $\cW$ satisfies the $R$-twins property, then $\cW_{\cF}$ is finite when $\cF = (\Pi_{\cO_R}, \facMohri, \resMohri)$ is a PF-oracle.
\end{restatable}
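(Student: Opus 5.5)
We need to bound the set of reachable states of $\cW_{\cF}$. Every state $V$ of $\cW_{\cF}$ reached on a word $u$ has the form $\resMohri(V')$, so $\min_q V[q] = 0$, and its support $S=\supp(V)$ is a block of $\cO_R$ applied to some set, in particular $S \subseteq \DeltaFA(\IFA,u)$ and any two states of $S$ are connected by an $R$-path inside $S$. Since there are finitely many supports, it suffices to show that the differences $V[p]-V[q]$ for $p,q\in S$ are bounded by a constant depending only on $\cW$. Each such state therefore lies in a finite set of vectors with entries in $\{0,\dots,B\}\cup\{\infty\}$, which gives finiteness.

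The first step is to unfold how $V$ is computed. By induction on $u$, if $V$ is reached by the run $V_0 \xrightarrow{a_1} \cdots \xrightarrow{a_n} V_n=V$ in $\cW_{\cF}$, then there is a constant $c$ (the accumulated factors) such that for each $q\in S$ we have $c + V[q] = \min$ over runs of $\cW$ on $u$ ending in $q$ whose intermediate states stay inside the supports $S_0,\dots,S_n$. This follows because $\Pi_{\cO_R}$ only restricts supports and $\resMohri$ subtracts a scalar. Hence $V[p]-V[q]$ equals the difference of the minimal weights of restricted runs to $p$ and to $q$.

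The second step is the core: a Mohri-style pumping argument. It suffices to bound $|V[p]-V[q]|$ when $(p,q)\in R\cap(S\times S)$ is a single edge. A path inside $S$ of length at most $|Q|$ then multiplies the bound by $|Q|$. Take optimal restricted runs $\rho_p,\rho_q$ on $u$. If $|u|>|Q|^2$, there are positions $i<j$ at which the pair of states $(\rho_p(i),\rho_q(i))$ repeats as $(p',q')$, with $w$ the factor between them. Both $p'$ and $q'$ lie in $S_i$, are reachable on a common prefix, and loop on $w$. To apply the $R$-twins property we need $(p',q')\in R$. This is where the choice $R\in\{\CF,\ICF,\ECF\}$ matters, because each relation is closed under the relevant backward closure. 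For $\CF$: $p'$ and $q'$ share the future given by the rest of $u$ followed by a common future of $p,q$, so $(p',q')\in\CF$. For $\ICF$ and $\ECF$, I would instead pick the repetition in the product of the underlying NFA along a cycle. I would compose the loop $w$ with the witnessing $u,v$ for $(p,q)$ to show that $(p',q')$ again satisfies the (E)ICF pattern. This is possible once $j-i$ is chosen so that the segment returns to the same pair, with $r,s$ taken from the witness for $(p,q)$.

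With $(p',q')\in R$, twins gives $\Delta_w[p',p']=\Delta_w[q',q']$. Removing the loop segment from both runs yields restricted runs on a shorter word, preserving the weight difference. The restriction is preserved because the earlier supports are reused, and I would argue that the shortened runs are still optimal or at least give the bound. Iterating, the difference is at most $|Q|^2\cdot M$, where $M$ is the maximal absolute finite transition weight. Comparing minima needs the usual care: I would bound $V[p]-V[q]$ from above by taking $\rho_q$ optimal and $\rho_p$ optimal and pumping both simultaneously, as in Mohri's proof.

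The main obstacle is this second step for $\ICF$ and $\ECF$, namely showing that the repeated pair $(p',q')$ is still $R$-related. I expect this to require choosing the repetition carefully and possibly a weaker invariant, namely that the relevant pair is $R$-related via states that are still in the same $\cO_R$-block. The interaction between the restriction to blocks and the pumping also needs checking.
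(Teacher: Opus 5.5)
Your route is the paper's: there are finitely many supports, so it suffices to bound residual differences inside a block. This is done by Mohri-style pumping of a pair of optimal runs, with the $R$-twins property cancelling the loop weights. You also isolate the right crux: the repeated pair $(p',q')$ must be $R$-related. For $\ICF_\cW$ and $\ECF_\cW$, however, this step is not merely unproved. It fails, and so does the statement itself.

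\textbf{Counterexample.} Take states $p,x,q,t$, with initial weight $0$ on $p,x,q$ and final weight $0$ on $p$ only. The transitions are $p\xrightarrow{a}p$, $p\xrightarrow{a}x$, $q\xrightarrow{a/1}q$, $p\xrightarrow{c}p$, $x\xrightarrow{c}p$, $x\xrightarrow{d}t$, $q\xrightarrow{d}t$, $t\xrightarrow{e}x$, $t\xrightarrow{e}q$, all of weight $0$ except the marked one.

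The nonempty sets $\DeltaFA(r,u)$ with $r$ a single state are exactly $\{p\},\{p,x\},\{x\},\{t\},\{x,q\},\{q\}$. So $p$ and $q$ never occur together, and $(p,q)\notin\ICF_\cW$. On the other hand, $(p,x)\in\ECF_\cW$ (take $r=p$, $u=a$, $v=c$) and $(x,q)\in\ECF_\cW$ (take $r=t$, $u=e$, $v=d$). Off the diagonal, $\ICF_\cW=\ECF_\cW=\{(p,x),(x,p),(x,q),(q,x)\}$.

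The twins property holds for both related pairs. A word looping at both $p$ and $x$ must have the form $c\,w''a$: both states move to $p$ on $c$, and the final $a$ is read in $p$. Hence both minimal loop weights equal $\Delta_{w''}[p,p]$. A word looping at both $x$ and $q$ must have the form $d\,w''e$: both move to $t$ on $d$, and ending with $a$ instead would need $p,q\in\DeltaFA(t,w'')$. Hence both minimal loop weights equal $\Delta_{w''}[t,t]$.

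Nevertheless $\{p,x,q\}$ is a single $\cO_R$-block. Reading $a^n$ in $\cW_\cF$ leads to the final state $V_n$ with $V_n[p]=V_n[x]=0$ and $V_n[q]=n$, for every $n$, so $\cW_\cF$ is infinite. In your scheme, for the edge $(x,q)$ the only optimal runs on $a^n$ are $p\xrightarrow{a^{n-1}}p\xrightarrow{a}x$ and $q\xrightarrow{a^n}q$. Every repeated pair is therefore $(p,q)\notin R$, with loop weights $0\neq 1$. The underlying reason is that a block can be held together by a state ($x$ here) that does not carry the loop, so the $R$-twins property constrains nothing.

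\textbf{The paper's proof.} It has the same hole. In its pumping step it applies the twins property to the repeated pair without checking that this pair is $R$-related. Through Corollary~\ref{cor:wfaResidualVictorious} it also equates residuals with globally victorious runs, which ignores the block restriction you rightly worry about.

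\textbf{The case $R=\CF_\cW$.} Here your plan does go through, after two repairs.

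First, the restriction is vacuous. Let $x_0\cdots x_n$ be any run on $u$ ending in $q\in S_n$, and let $y_0\cdots y_n$ be a run inside $S_0,\dots,S_n$ ending in $q$. Inductively, starting from $x_0,y_0\in\IFA$, we have $x_k,y_k\in\DeltaFA(S_{k-1},a_k)$ and $(x_k,y_k)\in\CF_\cW$, since both follow the same suffix into the co-reachable $q$. So $x_k\in S_k$, and $c+V[q]$ is the global minimum. Your justification that ``the earlier supports are reused'' is wrong in general: after deleting a loop, the later blocks are recomputed from $S_i$ rather than $S_j$. For $\CF_\cW$ it is simply not needed.

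Second, do not iterate on shortened runs. They are no longer optimal, so their loops need not have weight $\Delta_w[p',p']$. Instead, loop-erase the product run once. This deletes pairwise disjoint segments of the original victorious runs. Each deleted segment is a minimal loop, since subruns of optimal runs are optimal, and it sits at a pair that is $\CF_\cW$-related by your argument. What remains are runs of length below $|Q|^2$.
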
 

Similarly to~\cite{mohriTransducer}, the previous result provides sufficient conditions for termination when disambiguating WFA for each level of ambiguity. We leave as an open problem to understand when the $R$-twins property is a {\em necessary} condition in each case.

%% file: figures/weightedAutomata/generalExample/original.tex
\small
\begin{tikzpicture}[every label/.style={black!60}, initial text=, 
	initial distance= {3mm},
	->,>=stealth',
	auto,
	node distance=0.8cm,
	mystate/.style={state, inner sep=3pt, minimum size=0mm},
	longstate/.style={rectangle,inner sep=8pt,draw,rounded corners=11pt}]
  \node[mystate, initial]          (1)                {$1$};
  \node[mystate, accepting]          (2) [above right=of 1] {$2$};
  \node[mystate, accepting]          (3) [below right=of 1] {$3$};
  \node[mystate]          (4) [right=of 2] {$4$};
  \node[mystate]          (5) [right=of 3] {$5$};
  \node[mystate, accepting]          (6) [below right=of 4] {$6$};

  \path[->] (1) edge      node [above left]   {$a/2$} (2)
            (1) edge      node [below left]   {$a/4$} (3)
            (2) edge      node [above ]   {$b/3$} (4)
            (3) edge      node [above ]   {$b/2$} (5)
            (4) edge      node [above right]   {$a/2$} (6)
            (5) edge      node [below right]   {$b/3$} (6);
            
              \node at (1.5, -2.4) {(a) finWFA $\cW$};
            
     \begin{scope}[xshift=6cm]
     	  \node[mystate, initial]          (1)                {$1/1$};
     	\node[mystate, accepting]          (2) [right=of 1]  {\shortstack{2/1\\3/2}};
     	\node[mystate, accepting]          (3) [above right=of 2]  {\shortstack{4/1\\5/1}};
     	\node[mystate]          (4) [below right=of 2]  {\shortstack{4/1\\5/2}};
     	\node[mystate, accepting]          (5) [below right=of 3] {$6/1$};
     	
     	\path[->] (1) edge      node  {$a/2$} (2)
     	(2) edge      node   {$a/4$} (3)
     	(2) edge      node[swap]   {$b/3$} (4)
     	(3) edge  [bend left=20]    node    {$a/4$} (5)
     	(3) edge  [bend right=20]    node[swap]  {$b/6$} (5)
     	(4) edge  [bend left=20,pos=0.3]    node   {$a/2$} (5)
     	(4) edge  [bend right=20]    node[swap]  {$b/6$} (5);
     	
     	   \node at (2, -2.4) {(b) UWFA $\cW_{\cF}$};
     \end{scope}

\end{tikzpicture}

%% file: sections/conclusions-esa.tex
As an initial paper on this algorithmic framework, numerous approaches remain to be explored for further study. 
Besides the open problems we have stated throughout the paper, some natural follow-up work would be extending the framework onto tree automata~\cite{comon2008tree}, cost-register automata~\cite{AlurDDRY13}, or Büchi automata~\cite{thomas1990automata}. In the last case, it is an open problem to find good disambiguation algorithms, since deterministic Büchi automata are less expressive than unambiguous ones.  It will be interesting to see if the algorithmic framework presented here could lead to new insights on the~disambiguation~of~Büchi~automata.
We highlight one open problem, posed in Section~\ref{sec:disambiguation-conditions},  we find particularly interesting, which is to find optimal disambiguation schemes that minimize the number of explored states. %
Finally, our disambiguation framework is simple enough to be implemented and used in practice. 
We are interested in seeing it being explored as an alternative to determinization in real-life cases where an unambiguous, or a boundedly ambiguous automaton, is good~enough.

%% file: sections/app-mohri-counterexample.tex
A disambiguation algorithm for finite automata was proposed by Mehryar Mohri in~\cite{mohri13}. The main idea of Mohri's disambiguation is to work over an equivalent automaton $\cA_{\textsf{M}}$ that simulates the runs of the original automaton $\cA$, and the states keep track of competing runs (i.e., runs that have a common future). $\cA_{\textsf{M}}$ is equivalent to $\cA$ but it can still be ambiguous. For this reason, Mohri's disambiguation traverses the automaton $\cA_{\textsf{M}}$ in a depth-search manner, and prunes transitions that lead to ambiguous runs. The resulting unambiguous automaton depends on the order in which $\cA_{\textsf{M}}$ is traversed, that is, the order in which the transitions are chosen. For this reason, Mohri's disambiguation is not unique for each finite automaton $\cA$. Moreover, as we will show below, there exists a finite automaton $\cA$ such that for some order traversal of $\cA_{\textsf{M}}$, the resulting disambiguation is not equivalent to $\cA$. In other words, Mohri's disambiguation is not correct for all input. 

\begin{figure}[t]
    \centering
    \input{./figures/mohriCounterexample/original.tex}
    \caption{Example of an automaton $\cA$ such that is not equivalent to $\cA_{\textsf{M}}$ for some order of the states added to the queue.}
    \label{fig:counterexampleOriginal}
\end{figure}

The counterexample is the NFA $\cA$ shown in Figure~\ref{fig:counterexampleOriginal}.
Next, we present the result of applying Mohri's disambiguation procedure over $\cA$. For a detailed presentation of Mohri's algorithm, we refer the reader to~\cite{mohri13}. 

In the following, we denote by $\cA_{\textsf{M}} = (Q_{\textsf{M}}, \Sigma, \Delta_{\textsf{M}}, I_{\textsf{M}}, F_{\textsf{M}})$ be the result of Mohri's procedure over $\cA$. 
During the construction, we assume that the states are enqueued in $\mathcal{Q}$ in the following order:
\begin{equation*}
\begin{aligned}
 &(r ,\{r\}), \ (p, \{p, p', q\}), \ (q , \{p, q, q'\}), \\[4pt]
 &(p, \{p, q\}), \ (q, \{p, q\}), \ (p', \{p, p'\}), \ (q', \{q, q'\}), \\[4pt]
 &(f_1, \{f_1\}), \ (f_2, \{f_2\}), \ (f_3, \{f_3\}).
\end{aligned}
\end{equation*}
\paragraph{Construction of $\cA_{\textsf{M}}$} The set of initial states and the queue are computed: 
\[
\mathcal{Q} \ = \ I_{\textsf{M}} \ = \  \{(r, \{r\})\}.
\] 
The set of transitions $\Delta_{\textsf{M}}$, the set of states $Q_{\textsf{M}}$ and set of final states $F_{\textsf{M}}$ starts empty. Next, the common future relation $\CF_{\cA}$ is computed. 
In this case, $\CF_{\cA}$ contains the pairs of states:
\[
(p', p), \quad (p, q), \quad (q, q').
\]
The first state $(r, \{r\})$ is dequeued, and all outgoing transitions from $r$ are examined. 
For each transition $(r, a, q) \in \Delta$, the algorithm constructs the state 
\[
(q, T) \quad \text{where} \quad 
T = \{ s \in \Delta(\{r\}, a) \mid (q, s) \in \CF_{\cA} \}.
\]
Then, the transition $((r, \{r\}), a, (q, T))$ is added to $\Delta_{\textsf{M}}$, 
unless there already exists a state $(p, S) \in Q_{\textsf{M}}$ such that 
$(p, S)$ and $(r, \{r\})$ are both reachable by the same word $w$ 
(in this case $w = \eps$), and the transition $((p, S), a, (q, T))$ 
is already in $\Delta_{\textsf{M}}$. 
This restriction avoids having two distinct runs over the same word 
from $I_{\textsf{M}}$ to $(q, T)$~\cite{mohri13}.

\begin{figure}[t]
	\centering
	
	\begin{subfigure}{0.99\textwidth}
		\centering
		\input{./figures/mohriCounterexample/partial.tex}
		\caption{Partial automaton $\cA_{\textsf{M}}$ after dequeing and processing the state $(r, \{r\})$.}
		\label{fig:counterexamplePartial}
	\end{subfigure}
	
	\vspace{2mm} 
	
	\begin{subfigure}{0.99\textwidth}
		\centering
		\input{./figures/mohriCounterexample/partial2.tex}
		\caption{Partial automaton $\cA_{\textsf{M}}$ after dequeing and processing the states $(p, \{p, p', q\})$ and $(q, \{p, q, q'\})$.}
		\label{fig:counterexamplePartial2}
	\end{subfigure}
	
	\vspace{2mm} 
	
	\begin{subfigure}{0.99\textwidth}
		\centering
		\input{./figures/mohriCounterexample/partial3.tex}
		\caption{Partial automaton $\cA_{\textsf{M}}$ after dequeing and processing the states $(p, \{p, q\})$ and $(q, \{p, q\})$, the dotted transitions were discarded by the algorithm.}
		\label{fig:counterexamplePartial3}
	\end{subfigure}
	
	\vspace{2mm} 
	
	\begin{subfigure}{0.99\textwidth}
		\centering
		\input{./figures/mohriCounterexample/disambiguation.tex}
		\caption{Automaton $\cA_{\textsf{M}}$, the dotted transitions were discarded by the algorithm.}
		\label{fig:counterexampleDisambiguation}
	\end{subfigure}
	
		\vspace{2mm}

	\caption{Mohri's disambiguation steps of the NFA $\cA$ shown in Figure~\ref{fig:counterexampleOriginal}.}
	\label{fig:vertical_stack}
\end{figure}

For example, consider the transition $(r, x, p) \in \Delta$. The algorithm creates the state $(p, \{p, p', q\})$, which is reachable by $x$. 
Here, the states $p, p', q$ are all reachable by $x$ and share a common future with $p$.  
Similarly, for the transition $(r, y, p) \in \Delta$, the state $(p, \{p, q\})$ is created. Although $p, q, q'$ are reachable by $y$, only $p$ and $q$ share a common future with $p$.

The state $(q, T)$ is then added to $Q_{\textsf{M}}$.  
In this first step, no transitions are discarded, and all newly created states are queued in $\mathcal{Q}$.  
The resulting partial automaton $\cA_{\textsf{M}}$ is shown in Figure~\ref{fig:counterexamplePartial}.

\paragraph{Processing $(p, \{p, p', q\})$ and $(q, \{p, q, q'\})$}  
These two states are dequeued next. All outgoing transitions from $p$ and $q$ are processed, producing new transitions in $\Delta_{\textsf{M}}$, none of which are discarded.  
As a result, the states $(f_1, \{f_1\}), (f_2, \{f_2\}), (f_3, \{f_3\})$ are queued.  
Since $f_1$, $f_2$, and $f_3$ do not share a common future with any other state, they remain as singleton states.  
The resulting automaton is shown in Figure~\ref{fig:counterexamplePartial2}.

\paragraph{Processing $(p, \{p, q\})$ and $(q, \{p, q\})$}  
For the state $(p, \{p, q\})$, $p$ has two outgoing transitions.  
The transition 
\[
((p, \{p, q\}), c, (f_1, \{f_1\}))
\]
is added to $\Delta_{\textsf{M}}$, 
but the transition 
\[
((p, \{p, q\}), b, (f_2, \{f_2\}))
\]
is discarded, since $(q, \{p, q, q'\})$ and $(p, \{p, q\})$ are both reachable by $y$, and $(q, \{p, q, q'\})$ already has a transition over $b$ to $(f_2, \{f_2\})$.  
Analogously, when processing $(q, \{p, q\})$, only one of its two transitions is added.  
The resulting automaton is shown in Figure~\ref{fig:counterexamplePartial3}.

\paragraph{Final steps}  
The states $(p', \{p, p'\})$ and $(q', \{q, q'\})$ are dequeued, but their outgoing transitions are discarded.  
The states $(f_1, \{f_1\})$ and $(f_2, \{f_2\})$ are already reachable by $xc$ from $(p, \{p, p', q\})$ and by $yd$ from $(q, \{p, q, q'\})$, respectively.  
Finally, the states $(f_1, \{f_1\})$, $(f_2, \{f_2\})$, and $(f_3, \{f_3\})$ are dequeued and added to $F_{\textsf{M}}$.  
The resulting automaton $\cA_{\textsf{M}}$ is shown in Figure~\ref{fig:counterexampleDisambiguation}.  
One can easily note that the word $ab \in \cL(\cA)$ is not accepted by $\cA_{\textsf{M}}$, thus $\cA_{\textsf{M}}$ is not equivalent to $\cA$.

In conclusion, Mohri's algorithm is not correct for all finite automata.
Specifically, the step of discarding outgoing transitions from states $(p, \{p, q\})$ correctly discards an already existing accepting run over $yb$, but also discards the accepting run over $ab$. 
Therefore, disambiguation procedures that are not computed on the fly needs to track a significant amount of information to ensure that no words in $\cL(\cA)$ are discarded.

%% file: figures/mohriCounterexample/original.tex
\begin{tikzpicture}[every label/.style={black!60}, initial text=, ->,>=stealth', node distance=0.8cm, every node/.style={inner sep=3pt, scale=0.7}]
  \node[state,initial, initial where=above]  (r)     {$r$};
  \node[state]          (p) [below left=of r] {$p$};
  \node[state]          (q) [below right =of r] {$q$};
  \node[state]          (p') [left =of p] {$p'$};
  \node[state]          (q') [right =of q] {$q'$};
  \node[state,accepting]          (f_1) [below left =of p] {$f_1$};
  \node[state,accepting]          (f_2) [below =1.5cm of r] {$f_2$};
  \node[state,accepting]          (f_3) [below right =of q] {$f_3$};

  \path[->] (r) edge              node [above left]   {${\tt a},{\tt x},{\tt y}$} (p)
            (r) edge              node [above right]   {${\tt a},{\tt x},{\tt y}$} (q)
            (r) edge [bend right]  node [above]   {${\tt x}$} (p')
            (r) edge [bend left] node [above]   {${\tt y}$} (q')
            (p) edge  node [above left]   {${\tt c}$} (f_1)
            (p') edge  node [left]   {${\tt c}$} (f_1)
            (q) edge  node [above right]   {${\tt d}$} (f_3)
            (q') edge  node [right]   {${\tt d}$} (f_3)
            (q) edge  node [above left]   {${\tt b}$} (f_2)
            (p) edge  node [above right]   {${\tt b}$} (f_2);

\end{tikzpicture}

%% file: figures/mohriCounterexample/partial.tex
\begin{tikzpicture}[every label/.style={black!60}, initial text=, ->,>=stealth', node distance=0.7cm, every node/.style={scale=0.7}, state/.style={rectangle,inner sep=5pt,draw,rounded corners=6.5pt}]
  \node[state,initial, initial where=above]  (r)     {$r, \{r\}$};
  \node[state]          (p_1) [below left=of r] {$p, \{p, q\}$};
  \node[state]          (q_1) [below right =of r] {$q, \{p, q\}$};
  \node[state]          (p_2) [left =of p_1] {$p, \{p, p', q\}$};
  \node[state]          (q_2) [right =of q_1] {$q, \{p, q, q'\}$};
  \node[state]          (p') [left =of p_2] {$p', \{p, p'\}$};
  \node[state]          (q') [right =of q_2] {$q', \{q, q'\}$};

  \path[->] (r) edge              node [below right]   {${\tt a},{\tt y}$} (p_1)
            (r) edge              node [below left]   {${\tt a},{\tt x}$} (q_1)
            (r) edge              node [above]   {${\tt x}$} (p_2)
            (r) edge              node [above]   {${\tt y}$} (q_2)
            (r) edge[out=180,in=20] node [above]   {${\tt x}$} (p')
            (r) edge[out=0,in=160]  node [above]   {${\tt y}$} (q');

\end{tikzpicture}

%% file: figures/mohriCounterexample/partial2.tex
\begin{tikzpicture}[every label/.style={black!60}, initial text=, ->,>=stealth', node distance=0.7cm, scale=0.7, every node/.style={scale=0.7}, state/.style={rectangle,inner sep=5pt,draw,rounded corners=6.5pt}]
  \node[state,initial, initial where=above]  (r)     {$r, \{r\}$};
  \node[state]          (p_1) [below left=of r] {$p, \{p, q\}$};
  \node[state]          (q_1) [below right =of r] {$q, \{p, q\}$};
  \node[state]          (p_2) [left =of p_1] {$p, \{p, p', q\}$};
  \node[state]          (q_2) [right =of q_1] {$q, \{p, q, q'\}$};
  \node[state]          (p') [left =of p_2] {$p', \{p, p'\}$};
  \node[state]          (q') [right =of q_2] {$q', \{q, q'\}$};
  \node[state]          (f_1) [below left=of p_2] {$f_1, \{f_1\}$};
  \node[state]          (f_2) [below=1.5cm of r] {$f_2, \{f_2\}$};
  \node[state]          (f_3) [below right=of q_2] {$f_3, \{f_3\}$};

  \path[->] (r) edge              node [below right]   {${\tt a},{\tt y}$} (p_1)
            (r) edge              node [below left]   {${\tt a},{\tt x}$} (q_1)
            (r) edge              node [above]   {${\tt x}$} (p_2)
            (r) edge              node [above]   {${\tt y}$} (q_2)
            (r) edge [out=180,in=20]  node [above]   {${\tt x}$} (p')
            (r) edge [out=0,in=160]  node [above]   {${\tt y}$} (q')
            (p_2) edge             node [above]   {${\tt c}$} (f_1)
            (q_2) edge             node [above]   {${\tt d}$} (f_3)
            (p_2) edge            node [below]   {${\tt b}$} (f_2)
            (q_2) edge            node [below]   {${\tt b}$} (f_2);

\end{tikzpicture}

%% file: figures/mohriCounterexample/partial3.tex
\begin{tikzpicture}[every label/.style={black!60}, initial text=, ->,>=stealth', node distance=0.7cm, scale=0.7, every node/.style={scale=0.7},state/.style={rectangle,inner sep=5pt,draw,rounded corners=6.5pt}]
  \node[state,initial, initial where=above]  (r)     {$r, \{r\}$};
  \node[state]          (p_1) [below left=of r] {$p, \{p, q\}$};
  \node[state]          (q_1) [below right =of r] {$q, \{p, q\}$};
  \node[state]          (p_2) [left =of p_1] {$p, \{p, p', q\}$};
  \node[state]          (q_2) [right =of q_1] {$q, \{p, q, q'\}$};
  \node[state]          (p') [left =of p_2] {$p', \{p, p'\}$};
  \node[state]          (q') [right =of q_2] {$q', \{q, q'\}$};
  \node[state]          (f_1) [below left=of p_2] {$f_1, \{f_1\}$};
  \node[state]          (f_2) [below =1.5cm of r] {$f_2, \{f_2\}$};
  \node[state]          (f_3) [below right=of q_2] {$f_3, \{f_3\}$};

  \path[->] (r) edge              node [below right]   {${\tt a},{\tt y}$} (p_1)
            (r) edge              node [below left]   {${\tt a},{\tt x}$} (q_1)
            (r) edge              node [above]   {${\tt x}$} (p_2)
            (r) edge              node [above]   {${\tt y}$} (q_2)
            (r) edge [out=180,in=20] node [above]   {${\tt x}$} (p')
            (r) edge [out=0,in=160]  node [above]   {${\tt y}$} (q')
            (p_2) edge             node [above]   {${\tt c}$} (f_1)
            (q_2) edge             node [above]   {${\tt d}$} (f_3)
            (p_2) edge            node [below]   {${\tt b}$} (f_2)
            (q_2) edge            node [below]   {${\tt b}$} (f_2)
            (p_1) edge [out=-150,in=0]   node [above]   {${\tt c}$} (f_1) 
            (p_1) edge [dotted]   node [above right]   {${\tt b}$} (f_2)
            (q_1) edge [out=-30,in=180] node [above]   {${\tt d}$} (f_3)
            (q_1) edge [dotted]   node [above left]   {${\tt b}$} (f_2);

\end{tikzpicture}

%% file: figures/mohriCounterexample/disambiguation.tex
\begin{tikzpicture}[every label/.style={black!60}, initial text=, ->,>=stealth', node distance=0.7cm, scale=0.7, every node/.style={scale=0.7},state/.style={rectangle,inner sep=5pt,draw,rounded corners=6.5pt}]
  \node[state,initial, initial where=above]  (r)     {$r, \{r\}$};
  \node[state]          (p_1) [below left=of r] {$p, \{p, q\}$};
  \node[state]          (q_1) [below right =of r] {$q, \{p, q\}$};
  \node[state]          (p_2) [left =of p_1] {$p, \{p, p', q\}$};
  \node[state]          (q_2) [right =of q_1] {$q, \{p, q, q'\}$};
  \node[state]          (p') [left =of p_2] {$p', \{p, p'\}$};
  \node[state]          (q') [right =of q_2] {$q', \{q, q'\}$};
  \node[state, accepting] (f_1) [below left=of p_2] {$f_1, \{f_1\}$};
  \node[state, accepting] (f_2) [below =1.5cm of r] {$f_2, \{f_2\}$};
  \node[state, accepting] (f_3) [below right=of q_2] {$f_3, \{f_3\}$};

  \path[->] (r) edge              node [below right]   {a,y} (p_1)
            (r) edge              node [below left]   {a,x} (q_1)
            (r) edge              node [above]   {x} (p_2)
            (r) edge              node [above]   {y} (q_2)
            (r) edge [out=180,in=20] node [above]   {x} (p')
            (r) edge [out=0,in=160] node [above]   {y} (q')
            (p_2) edge             node [above]   {c} (f_1)
            (q_2) edge             node [above]   {d} (f_3)
            (p_2) edge            node [below]   {b} (f_2)
            (q_2) edge            node [below]   {b} (f_2)
            (p_1) edge [out=-150,in=0]   node [above]   {c} (f_1)
            (q_1) edge [out=-30,in=180]  node [above]   {d} (f_3)
            (p') edge [dotted]           node [left]   {c} (f_1)
            (q') edge [dotted]           node [right]   {d} (f_3);

\end{tikzpicture}

%% file: sections/app-framework.tex
For the proofs that follow, we denote by $\Delta^*$ the transitive closure of $\Delta$, representing word transitions.
We also extend the notation $\cO$ with the operator $\cO(S,q)$ when $q\in S$ as the set in $\cO(S)$ that contains $q$.

We start with an auxiliary result that indicates that for every transition in the subset construction, each state in $S'$ has at least one predecessor state in $S$.

\begin{lemma}\label{lemma:predecessor}
    Let $\cA_{\cO} = (2^Q, \Sigma, \Delta_{\cO}, I', F')$ be the result automaton of $\cA = (Q, \Sigma, \Delta, I, F)$ using an oracle $\cO$. Then for every pair of states $S, S' \in 2^Q$:
    \begin{enumerate}
        \item If $(S, a, S') \in \Delta_{\cO}$, for every state $q \in S'$, there exists a state $p \in S$ such that $(p, a, q) \in \Delta$.
        \item More generally, if $(S, w, S') \in \Delta_{\cO}^*$, then for every state $q \in S'$, there exists a state $p \in S$ such that $(p, w, q) \in \Delta^*$.
    \end{enumerate}
\end{lemma}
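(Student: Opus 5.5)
Part 1 follows directly from the definition of $\Delta_{\cO}$. If $(S, a, S') \in \Delta_{\cO}$, then by construction $S' \in \cO(\Delta(S,a))$. Since $\cO(\Delta(S,a))$ is a partition of $\Delta(S,a)$, we have $S' \subseteq \Delta(S,a)$. Now fix $q \in S'$. Then $q \in \Delta(S,a) = \bigcup_{p \in S} \Delta(p,a)$, so there is some $p \in S$ with $q \in \Delta(p,a)$, that is, $(p,a,q) \in \Delta$.

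Part 2 is proved by induction on the length of $w$, interpreting $(S,w,S') \in \Delta_{\cO}^*$ as the existence of a partial run of $\cA_{\cO}$ from $S$ to $S'$ reading $w$.

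\emph{Base case.} For $w = \eps$ the run is empty, so $S' = S$. Taking $p = q$ works, since $(q,\eps,q) \in \Delta^*$ by the convention $\Delta(q,\eps) = \{q\}$.

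\emph{Inductive step.} For $w = w'a$, split the run of $\cA_{\cO}$ at its last transition. This gives an intermediate state $S''$ with $(S, w', S'') \in \Delta_{\cO}^*$ and $(S'', a, S') \in \Delta_{\cO}$. Given $q \in S'$, Part 1 provides $p'' \in S''$ with $(p'', a, q) \in \Delta$. The induction hypothesis applied to $p''$ then provides $p \in S$ with $(p, w', p'') \in \Delta^*$. Concatenating the two partial runs of $\cA$ gives $(p, w'a, q) \in \Delta^*$.

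There is no real obstacle here. The only point needing care is the notation: $\Delta^*$ should be read as the relation extended to words, consistent with $\Delta(p,w)$, and the split of the run should be taken at the last letter so that Part 1 applies directly.
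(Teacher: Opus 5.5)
Your proof is correct and follows the same route as the paper: Part 1 directly from $S' \in \cO(\Delta(S,a))$ and $S' \subseteq \Delta(S,a)$, and Part 2 by induction on $|w|$. You also supply the details the paper leaves implicit, namely splitting the run at its last letter and the $\eps$ base-case convention.
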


\begin{proof}
    It follows directly from the definition of $\Delta_{\cO}$. The second part can be proved by induction on the length of $w$.
\end{proof}

\subsection{Proof of Proposition~\ref{prop:frameworkEquivalence}}

\frameworkEquivalence*

\begin{proof}

    First, we will prove that $\cL(\cA) \subseteq \cL(\cA_{\cO})$. Let $\rho$ be an accepting run of $\cA$ over a word $w = a_1\ldots a_n$:
    \[
       \rho :=  p_0 \xrightarrow{a_1} p_1 \xrightarrow{a_2}  \cdots \xrightarrow{a_n} p_n .
    \]
    We build a run $\rho'$ of $\cA_{\cO}$ as follows:
    \[
            \rho'  :=  S_0 \xrightarrow{a_1} S_1 \xrightarrow{a_2} \cdots \xrightarrow{a_n} S_n,
    \] 
    where $S_0 =   \cO(I, p_0)$, and 
    for each $i \in [n]$ we define $S_{i} = \cO(\Delta(S_{i-1}, a_i), p_i)$. This construction is valid since we can inductively see that at each step, $p_i\in \Delta(S_{i-1}, a_i)$.
    Then we have that $S_n \in F'$ since $p_n \in F$, so $\cA_{\cO}$ accepts $w$.
      
    For the reverse inclusion $\cL(\cA_{\cO}) \subseteq \cL(\cA)$, consider an accepting run $\rho'$ of $\cA_{\cO}$ over some word $w=a_1\ldots a_n$:
\[
            \rho' :=  S_0 \xrightarrow{a_1} S_1 \xrightarrow{a_2} \cdots \xrightarrow{a_n} S_n
    \]

	We can build a run $\rho$ of $\cA$ by starting with the last state set $S_n$ in $\rho'$, which contains some final state $q$, and working backwards. 
    Given Lemma~\ref{lemma:predecessor} we note that $q\in\Delta(S_{n-1},a_n)$, so there must be a transition $(p, a_n, q) \in \Delta$ for some state $p\in S_{n-1}$. We build an accepting run $\rho$ extending this idea for the rest of $w$.

\end{proof}

%% file: sections/app-disambiguationConditions.tex
We present several propositions and definitions that will be used in the subsequent proofs.
We start with the {\em product construction} for finite automata: 
given two finite automata $\cA_1 = (Q_1, \Sigma, \Delta_1, I_1, F_1)$ and 
$\cA_2 = (Q_2, \Sigma, \Delta_2, I_2, F_2)$, we say
$\cA_1\times\cA_2 := (Q_1\times Q_2, \Sigma, \Delta', I_1\times I_2, F')$
is a {\em product automaton} of $\cA_1$ and $\cA_2$ if $F'\subseteq Q_1\times Q_2$ and
\[
\Delta' := \{((p_1,p_2),a,(q_1,q_2))\mid (p_1,a,q_1)\in\Delta_1\text{ and }(p_2,a,q_2)\in\Delta_2\}.
\]
Creating $\Delta'$ takes time $O(|\Delta|^2)$ by looking at each tuple 
$((p_1,a_1,q_1),(p_2,a_2,q_2))\in \Delta\times\Delta$ and adding 
$((p_1,p_2),a_1,(q_1,q_2))$ to $\Delta'$ iff $a_1 = a_2$.
This is a classical construction, and it is used to build automata with language $\cL(\cA_1)\cap\cL(\cA_2)$ and $\cL(\cA_1)\cup\cL(\cA_2)$  (see~\cite{hopcroftlibro} for example).
The property we are interested in is that for any pair of states $p\in Q_1$ and $q\in Q_2$, there is a path in $\cA_1\times\cA_2$ that reaches $(p,q)$ iff there exists a word $u\in\Sigma^*$ for which there exist a path in $\cA_1$ that reaches $p$ after reading $u$, and a path in $\cA_2$ that reaches $q$ after reading the same $u$. 
In this section, we will work with product automata but not their languages, 
or their final sets $F'$, so we will define $\cA_1\times\cA_2$ using a random $F'$.

Two more constructions: given an NFA $\cA = (Q, \Sigma, \Delta, I, F)$, we define the automaton $\cA_q$ for some $q\in Q$ as $\cA_q := (Q, \Sigma, \Delta, \{q\}, F)$; also, we define $\cA^R:= (Q, \Sigma, \Delta^R, F, I)$ by $\Delta^R := \{(q,a,p)\mid(p,a,q)\in\Delta\}$, as the {\em reverse} of $\cA$, which accepts the language $\{w^R\mid w\in\cL(\cA)\}$, where $w^R$ is the reverse of a word $w\in\Sigma^*$.

The following result will be used for various idempotency results.

\begin{proposition}\label{prop:frameworkIsomorphic}
    If every state in $\cA_{\cO}$ is a singleton, then $\cA_{\cO}$ is isomorphic to $\cA$.
\end{proposition}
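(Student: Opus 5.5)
The plan is to exhibit the bijection $\varphi\colon S \mapsto q$ where $S = \{q\}$. Here $S$ ranges over the states of the trimmed automaton $\cA_{\cO}$; by hypothesis each such state is a singleton $\{q\}$. Injectivity of $\varphi$ is immediate. Three things then remain: that $\varphi$ is onto $Q$, and that it maps initial states, final states and transitions exactly onto those of $\cA$.

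First, initial states. Every set in $\cO(I)$ is a state of $\cA_{\cO}$ (it is initial, hence reachable, and kept after trimming). Each such set is therefore a singleton, and since $\cO(I)$ partitions $I$, we get $\cO(I) = \{\{q\} \mid q \in I\}$. This gives $\varphi(\cO(I)) = I$.

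Final states are handled by the definition directly: $\{q\}$ is final in $\cA_{\cO}$ iff $\{q\} \cap F \neq \emptyset$ iff $q \in F$.

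Transitions. Fix a state $\{p\}$ of $\cA_{\cO}$ and a letter $a$. Every block of $\cO(\Delta(\{p\},a)) = \cO(\Delta(p,a))$ is a successor of $\{p\}$ and hence a reachable state of $\cA_{\cO}$. It is therefore a singleton, so $\cO(\Delta(p,a)) = \{\{q\} \mid q \in \Delta(p,a)\}$. Consequently $(\{p\},a,\{q\}) \in \Delta_{\cO}$ iff $(p,a,q) \in \Delta$.

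Surjectivity onto $Q$. Since $\cA$ is trimmed, every $q \in Q$ is reachable from $I$ by some run $p_0 \xrightarrow{a_1} \cdots \xrightarrow{a_n} q$. By induction using the previous two steps, each $\{p_i\}$ is a reachable state of $\cA_{\cO}$. For co-reachability, Proposition~\ref{prop:frameworkEquivalence} together with the transition correspondence shows every such $\{q\}$ reaches a final singleton, so $\{q\}$ survives trimming. Conversely, by Lemma~\ref{lemma:predecessor} every state of $\cA_{\cO}$ is a subset of $Q$, so the image of $\varphi$ is exactly $Q$.

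The only delicate point is the interaction with trimming: the singleton hypothesis applies only to states kept in $\cA_{\cO}$. The argument must therefore ensure that every block produced by $\cO$ from a reachable state is itself kept. This holds because trimming keeps everything reachable from $\cO(I)$. As the paper notes for $\cA_{\detr}$, reachable states are also co-reachable here, since every element of a block lies on a run of the trimmed $\cA$ and so can reach $F$.
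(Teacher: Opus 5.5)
Your proof is correct and follows essentially the same route as the paper's. Both use the correspondence $\{q\}\leftrightarrow q$: the singleton hypothesis forces $\cO(I)$ and each $\cO(\Delta(p,a))$ at a reachable $\{p\}$ to split into singletons, and this propagates along runs from the initial states. You are more careful than the paper about final states, surjectivity, and trimming. The appeals to Lemma~\ref{lemma:predecessor} (for the image lying in $Q$) and to Proposition~\ref{prop:frameworkEquivalence} are superfluous. States of $\cA_{\cO}$ are subsets of $Q$ by definition, and co-reachability of $\{q\}$ already follows from the transition correspondence applied to a run from $q$ to $F$ in the trimmed $\cA$.
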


\begin{proof}
By definition, the state set of $\cA_{\cO}$ is a subset of $2^Q$, so every state in $\cA_{\cO}$ is of the form $\{q\}$ for some $q\in Q$. By contradiction, suppose that  $\cA_{\cO}$ is not isomorphic to $\cA$ via the function $f(q) = \{q\}$.

	We identify three possible cases: There is some state $q$ in $\cA$ such that $\{q\}$ is not a state in $\cA_{\cO}$; there is either some transition $(p,a,q)\in\Delta$ such that $(\{p\},a,\{q\})\not\in\Delta_{\cO}$, or there is some transition $(\{p\},a,\{q\})\in\Delta_{\cO}$ such that $(p,a,q)\not\in\Delta$.

We prove by induction on the execution the algorithm that for every state $q$ at distance $n$ from the set $I$, the induced subautomaton (analogous to an induced subgraph) of states at distance $n$ from $I$ in $\cA$ is isomorphic to the analogous subautomaton of the states at distance $n$ from $\cO(I)$ in $\cA_{\cO}$.

\textbf{Base case ($n = 0$):}  
For every initial state $p \in I$, the corresponding initial state in the result automaton $\cA_{\cO}$ is the singleton set $\{p\} \in I'$, since $\cO$ partitions $I$ into singletons. Hence, the initial states correspond one-to-one.

\textbf{Inductive step:}  
Assume that for a word $w$ of length $n$, every reachable state in $\cA_{\cO}$ over $w$ is a singleton set $\{p\}$, where $p \in Q$ is a state of the original automaton $\cA$.  

Consider a letter $a \in \Sigma$ and a reachable state $\{p\}$ by $w$.

By definition of the transition relation $\Delta_{\cO}$ in the disambiguation construction, the successor states after reading $a$ from $\{p\}$ are
\[
\Delta_{\cO}(\{p\}, a) = \cO(\Delta(p, a)),
\]

Since every state in $\cA_{\cO}$ is a singleton by hypothesis, the oracle $\cO$ partitions $\Delta(p, a)$ into singleton subsets.
Hence, for each transition $(p, a, q) \in \Delta$, there is a corresponding unique singleton state $\{q\}$ reachable by the word $wa$ and a transition $(\{p\}, a, \{q\}) \in \Delta_{\cO}$.

This establishes a one-to-one correspondence between states and transitions of $\cA_{\cO}$ and $\cA$, proving the isomorphism.
\end{proof}

Let us restate some definitions for convenience:
given a trimmed automaton $\cA$, the \emph{degree of ambiguity} $\daa(w)$ of a word $w$ is the number of different accepting runs of an automaton $\cA$ over $w$. Similarly, the \emph{degree of ambiguity} $\da(\cA)$ of an automaton $\cA$ is the maximum degree of ambiguity for some word $w \in L(\cA)$. If no such maximum exists, then $\da(\cA) = \infty$.

Next, we recall the criteria introduced by Weber and Seidl~\cite{weber1991degree}, which characterise the degree of ambiguity of NFA.

\paragraph{Infinite Degree of Ambiguity (IDA)} An NFA $\cA$ satisfies the (IDA) condition if there exist two states $r, s \in Q$ such that for some word $w \in \Sigma^*$ it holds that $(r, w, r), (r, w, s), (s, w, s) \in \Delta^*$. If $\cA$ satisfies (IDA), then it is (at least) polynomially ambiguous.
Finally, we say that the pair $(r, s)$ satisfies (IDA).

\paragraph{Exponential Degree of Ambiguity (EDA)} An NFA $\cA$ satisfies the (EDA) condition if there exists a state $s \in Q$ such that, for some word $w \in \Sigma^*$, $\daa(s, w, s) \geq 2$. If $\cA$ satisfies (EDA), then it is exponentially ambiguous.

Given an NFA~$\cA$, the relations $\ICF_{\cA}$ and $\ECF_{\cA}$ identify pairs of states that would satisfy the (IDA) or (EDA) conditions if not merged.

To refine these characterisations, we introduce (IDA)' and (EDA)', which will be useful in the proofs that follow. (Note that these criteria are different than the auxiliar (IDA)' and (EDA)' that appear in~\cite{weber1991degree}.)

The structure formed by (IDA) consists of two states $r, s \in Q$. From $r$, there exist two distinct runs over a word $w$, one that ends in $r$, and another that ends in $s$. 
We are interested in the state at which these runs diverge (Figure~\ref{fig:IDAequivalence}).

\paragraph{(IDA)'} There exist states $r, t, p, q, s \in Q$, words $u, v$, and letter $a$ such that:
\begin{itemize}
    \item $(r, u, t),\,(p, v, r),\,(q, v, s),\,(s, uav, s) \in \Delta^*$
    \item $(t, a, p),\,(t, a, q) \in \Delta$
\end{itemize}
Additionally, we say that the pair of states $(p, q)$ satisfies (IDA)' if such $r, t, s \in Q$ exist.
\begin{proposition}\label{app:idaprime}
    $\cA$ satisfies (IDA) iff $\cA$ satisfies (IDA)'.
\end{proposition}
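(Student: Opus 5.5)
The plan is to prove both directions directly from the definitions, identifying the word $w$ of (IDA) with the word $uav$ of (IDA)'. I read (IDA) in the sense of Weber and Seidl, i.e.\ with $r \neq s$, and (IDA)' with $p \neq q$: the two branches out of $t$ are genuinely different. This is what the phrase ``the state at which these runs diverge'' refers to.

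\textbf{(IDA)' $\Rightarrow$ (IDA).} Assume $r,t,p,q,s$, words $u,v$ and a letter $a$ are as in (IDA)', and set $w := uav$.
\begin{itemize}
\item Concatenating $(r,u,t)$, $(t,a,p)$ and $(p,v,r)$ gives $(r,w,r) \in \Delta^*$.
\item Concatenating $(r,u,t)$, $(t,a,q)$ and $(q,v,s)$ gives $(r,w,s) \in \Delta^*$.
\item $(s,w,s) \in \Delta^*$ is given directly.
\end{itemize}
The only remaining point is $r \neq s$. The two partial runs from $r$ over $w$ just built differ at position $|u|+1$, since $p \neq q$. If $r = s$, then $r$ has two distinct cycles over $w$, which is the (EDA) situation. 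From these two cycles one still extracts an (IDA) pair, for instance by taking $s' := p \neq r$-side states along the cycles; alternatively one simply notes that (EDA) implies (IDA) in the Weber--Seidl hierarchy. I expect this degenerate case to be the only delicate point, and I would settle it by fixing the precise convention for (IDA) used in the paper.

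\textbf{(IDA) $\Rightarrow$ (IDA)'.} Let $(r,w,r),(r,w,s),(s,w,s) \in \Delta^*$ with $r \neq s$, and write $w = a_1\cdots a_n$. Fix two partial runs from $r$ over $w$:
\[
\rho_1 = r_0 \xrightarrow{a_1} r_1 \cdots \xrightarrow{a_n} r_n, \qquad \rho_2 = s_0 \xrightarrow{a_1} s_1 \cdots \xrightarrow{a_n} s_n,
\]
with $r_0 = s_0 = r$, $r_n = r$ and $s_n = s$. Because $r_n \neq s_n$, the runs differ somewhere, so let $i \geq 1$ be the least index with $r_i \neq s_i$. Then $r_{i-1} = s_{i-1}$.

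Now define
\[
t := r_{i-1},\quad u := a_1\cdots a_{i-1},\quad a := a_i,\quad p := r_i,\quad q := s_i,\quad v := a_{i+1}\cdots a_n .
\]
Each condition of (IDA)' is then read off from a segment of $\rho_1$, $\rho_2$, or the given cycle on $s$:
\begin{itemize}
\item $(r,u,t) \in \Delta^*$ is the common prefix of $\rho_1$ and $\rho_2$.
\item $(t,a,p),(t,a,q) \in \Delta$ with $p \neq q$ are the diverging steps.
\item $(p,v,r)$ and $(q,v,s)$ are the suffixes of $\rho_1$ and $\rho_2$.
\item $(s,uav,s) = (s,w,s)$ is given.
\end{itemize}
Hence the pair $(p,q)$ satisfies (IDA)'.
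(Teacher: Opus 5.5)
Your proof is correct and is essentially the paper's argument: the forward direction cuts the two runs at their first divergence exactly as the paper does, and the converse composes the runs. You are more careful than the paper, which does not address the case $r = s$ in the converse. Instead of citing Weber--Seidl, you can close that case directly with the pair $(p,q)$ and the word $v u a$. Indeed, $p \xrightarrow{v} r \xrightarrow{u} t \xrightarrow{a} p$, $p \xrightarrow{v} r \xrightarrow{u} t \xrightarrow{a} q$ and $q \xrightarrow{v} s = r \xrightarrow{u} t \xrightarrow{a} q$, with $p \neq q$.
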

\begin{proof}
    First, suppose that $\cA$ satisfies (IDA).  
    Let $(r, s)$ be the pair of states that satisfies (IDA) in $\cA$.  

    Then there exist two runs $\rho_1$ and $\rho_2$ over a word $w$ of length $n$ such that:
    \[
    \begin{array}{rcl}
        \rho_1 & := & r \xrightarrow{a_1} r^1_1 \xrightarrow{a_2} r^1_2 \xrightarrow{a_3} \ldots \xrightarrow{a_n} r \\
        \rho_2 & := & r \xrightarrow{a_1} r^2_1 \xrightarrow{a_2} r^2_2 \xrightarrow{a_3} \ldots \xrightarrow{a_n} s \\
    \end{array}
    \]
    Let $i$ be the maximum index such that $r^1_j = r^2_j$ for all $j \in [i]$. Then we can define the states $t = r^1_i$, $p = r^1_{i+1}$, and $q = r^2_{i+1}$.  

    Now we can rewrite the runs $\rho_1$ and $\rho_2$ using these new states:
    \[
    \begin{array}{rcl}
        \rho_1 & := & r \xrightarrow{u} t \xrightarrow{a} p \xrightarrow{v} r \\
        \rho_2 & := & r \xrightarrow{u} t \xrightarrow{a} q \xrightarrow{v} s \\
        w & = &  u \cdot a \cdot v \\
        u, v & \in & \Sigma^*, \quad a \in \Sigma
    \end{array}
    \]
    Hence, $\cA$ satisfies (IDA)'.  

    Conversely, if $\cA$ satisfies (IDA)', one can directly compose the two runs $\rho_1$ and $\rho_2$ to see that (IDA) is satisfied by the pair $(r, s)$.
\end{proof}

\begin{figure}[t]
    \centering
    \subfloat[(IDA)]{
        \input{./figures/kDisambiguation/IDA.tex}
    }
    \hspace{1cm}
    \subfloat[(IDA)']{
        \input{./figures/kDisambiguation/IDAalternative.tex}
    }
    \caption{Graphical representation of (IDA) and (IDA)'.}
    \label{fig:IDAequivalence}
\end{figure}
\begin{figure}[t]
    \centering
    \subfloat[(EDA)]{
        \input{./figures/kDisambiguation/EDA.tex}
    }
    \hspace{1cm}
    \subfloat[Proof structure]{
        \input{./figures/kDisambiguation/EDAintermediate.tex}
    }
    \hspace{1cm}
    \subfloat[(EDA)']{
        \input{./figures/kDisambiguation/EDAalternative.tex}
    }
    \caption{Graphical representation of (EDA) and (EDA)'.}
    \label{fig:EDAequivalence}
\end{figure}

As in the case of (IDA)', we focus on the state where the two runs from $s$ diverge in (EDA) (Figure~\ref{fig:EDAequivalence}).

\paragraph{(EDA)'} There exist distinct states $r, p, q \in Q$ such that $(r, a, p), (r, a, q), (p, w, r), (q, w, r) \in \Delta^*$ for some letter $a$ and word $w$.  
Additionally, we say that the pair of states $(p, q)$ satisfies (EDA)' if such an $r\in Q$ exists.
\begin{proposition}\label{app:edaprime}
    $\cA$ satisfies (EDA) iff $\cA$ satisfies (EDA)'.
\end{proposition}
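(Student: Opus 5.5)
The plan is to follow the same pattern as Proposition~\ref{app:idaprime}: locate the state at which the two runs witnessing (EDA) first diverge, and then \emph{rotate} the cycle so that it starts and ends at that divergence state. I read ``distinct'' in (EDA)' as requiring $p \neq q$. Requiring $r$ to differ from $p$ and $q$ as well would be too strong. For instance, a state $s$ with an $a$-self-loop together with $s \xrightarrow{a} t \xrightarrow{a} s$ satisfies (EDA), but $s$ is its only state with two $a$-successors, and one of those successors is $s$ itself. So the intended content is that $p \neq q$.

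For the direction (EDA) $\Rightarrow$ (EDA)', take $s$ and $w = a_1 \cdots a_n$ with two distinct runs
\[
\rho_1 := s = r^1_0 \xrightarrow{a_1} r^1_1 \cdots \xrightarrow{a_n} r^1_n = s,
\qquad
\rho_2 := s = r^2_0 \xrightarrow{a_1} r^2_1 \cdots \xrightarrow{a_n} r^2_n = s .
\]
Since both runs start at $s$ but are distinct, there is a least index $i \geq 1$ with $r^1_i \neq r^2_i$. I will set $r := r^1_{i-1} = r^2_{i-1}$, $a := a_i$, $p := r^1_i$ and $q := r^2_i$. Then $(r,a,p), (r,a,q) \in \Delta$ and $p \neq q$. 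Next I define $w' := a_{i+1} \cdots a_n \cdot a_1 \cdots a_{i-1}$. From $p$, I follow the suffix of $\rho_1$ to reach $s$ and then the prefix of $\rho_1$ to reach $r$, which gives $(p, w', r) \in \Delta^*$. From $q$, I follow the suffix of $\rho_2$ to reach $s$ and then the same prefix of $\rho_1$ (equivalently of $\rho_2$, since they agree up to $i-1$) to reach $r$, which gives $(q, w', r) \in \Delta^*$. Hence (EDA)' holds with the word $w'$.

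For the converse, given (EDA)' with $r, p, q, a, w$, I take $s := r$ and the word $a w$. The two runs $r \xrightarrow{a} p \xrightarrow{w} r$ and $r \xrightarrow{a} q \xrightarrow{w} r$ are both partial runs from $r$ to $r$ over $aw$. They differ at position $1$ because $p \neq q$, so $\daa(r, aw, r) \geq 2$, and this is exactly (EDA).

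The only delicate point is the rotation step in the forward direction. There one must note that the divergence index $i$ exists and is at least $1$, since both runs start at $s$. One must also use the common prefix of $\rho_1$ to close both cycles back at $r$ over the same word. Everything else is direct composition of partial runs.
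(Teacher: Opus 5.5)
Your proof is correct and follows the same route as the paper. Split the two cycles at their first divergence point $r$, so that $w = u\,a\,v$, and rotate to get the common word $v\,u$ from $p$ and from $q$ back to $r$; the converse composes $r \xrightarrow{a} p \xrightarrow{w} r$ and $r \xrightarrow{a} q \xrightarrow{w} r$. Your reading of ``distinct'' as only $p \neq q$ is also what the paper's proof actually uses, and your two-state example correctly shows that requiring $r,p,q$ to be pairwise distinct would make the statement false.
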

\begin{proof}
    The proof is divided into the following equivalences:
    \[
    \begin{array}{rcl}
\cA \text{ satisfies } (EDA) & \iff & \exists s \in Q. \;\exists w \in \Sigma^*. \; \daa(s, w, s) \geq 2 \\
        & \iff & \exists s, r \in Q. \;\exists u, v \in \Sigma^*. \; a \in \Sigma. \; w = uav \;\land\; \\ 
        & & \qquad \daa(s, u, r) = 1 \;\land\; \daa(r, av, s) \geq 2 \\
        & \iff & \exists s, r, p, q \in Q. \;\exists u, v \in \Sigma^*. \; a \in \Sigma. \; p \neq q\;\land\;\\
        &  &\qquad (r, a, p),\,(r, a, q),\, (p, v, s),\, (q, v, s), (s, u, r) \in \Delta^* \\
        & \iff & \cA \text{ satisfies } (EDA)'  
    \end{array} \] \end{proof}
\subsection{Proof of Theorem~\ref{theo:UFA-disambiguation}}

\UFAdisambiguation*

\begin{proof}
    \textbf{1)} 
    Towards a contradiction, suppose $\cA_{\cO_{\CF}}$ is not unambiguous.
    Then there exist two different accepting runs $\rho_1$ and $\rho_2$ over the same word $w= a_1a_2\ldots a_n$ such that:

	\[
		\begin{array}{rcl}
			\rho_1 \ & := & \ S_0 \, \xrightarrow{a_1, v_1} \ldots \xrightarrow{a_{i-1}, v_{i-1}} S_{i-1} \xrightarrow{a_i, v_i} S_i \xrightarrow{a_{i+1}, v_{i+1}} \ldots  \xrightarrow{a_n, v_n} \, S_n 	\\
			\rho_2 \ & := & \ S_0' \, \xrightarrow{a_1, v_1'} \ldots \xrightarrow{a_{i-1}, v_{i-1}'} S_{i-1}' \xrightarrow{a_i, v_i'} S_i' \xrightarrow{a_{i+1}, v_{i+1}'} \ldots \xrightarrow{a_n, v_n'} \, S_n' 	
		\end{array}	
	\]

	Let $i$ be the index at which the runs differ (i.e., $S_j = S_j' \; \forall j < i$ and $S_i \neq  S_i'$). 

    There exists a state $p \in S_i$ from which there is a partial run over the suffix $a_{i+1} \ldots a_n$ that ends in a state $p_n \in S_n$. Likewise, there exists a state $p' \in S_i'$ that also reaches a state $p_n' \in S_n'$ over the same suffix.

	By the definition of $\Delta_{\cO_{\CF}}$ and the oracle $\cO_{\CF}$, this implies that $(p, p') \in \CF_\cA$ and thus $S_i = S_i'$, contradicting the assumption that the index $i$ exists.
	Therefore $\cA_{\cO_{\CF}}$ must be unambiguous.

    \medskip
    \textbf{2)} 
    If $\cA$ is already unambiguous, then no two distinct states reachable from $I$ by the same input word share a common future.  
    Consequently, the oracle $\cO_{\CF}$ partitions the state set into singletons.  
    By Proposition~\ref{prop:frameworkIsomorphic}, it follows that $\cA_{\cO_{\CF}}$ is isomorphic to $\cA$.
    
    \medskip
    \textbf{3)} 
    Consider the automaton $(\cA^R)\times (\cA^R)$ which has $F\times F$ as its initial set of states. For any state $(p,q)$ that is reachable from some state in $F\times F$ it must hold that $p$ and $q$ share a common future. This gives us an $O(|\cA|^2)$ construction to compute $\CF_\cA$.
\end{proof}

\subsection{Proof of Theorem~\ref{theo:UFA-minimality}}

\UFAminimality*

\begin{proof}
    Assume there exists an oracle $\cO$ and state set $S \in \{\Delta(S',a) \mid S' \in \reach(\cA_{\cO}) \wedge a \in \Sigma\}$ where $\cO_{\CF}(S)$ does not refine $\cO(S)$.
    We define the automaton $\cA_{\cO} = (2^Q, \Sigma, \Delta_{\cO}, I', F')$ given by the framework. 
    Assume that $\cA_{\cO}$ unambiguous.

    Since $\CF(S)$ does not refine $\cO(S)$, there exist states $p, q \in S$ such that $(p, q) \in \CF$, and there exist distinct sets $S_p, S_q \in \cO(S)$ with $p \in S_p$ and $q \in S_q$.

    Since $S_p$ and $S_q$ are reachable by a word $w$ and $\cA_{\cO}$ is trimmed, then there exist two accepting runs in $\cA_{\cO}$ such that:
    \[
        \begin{array}{rcl}
            \rho' & := & S_{p_0} \xrightarrow{w} S_p \xrightarrow{u} S_{p_n} \\
            \rho'_2 & := & S_{q_0} \xrightarrow{w} S_q \xrightarrow{u} S_{q_n} \\
        \end{array}
    \]
    This implies that $\cA_{\cO}$ is not unambiguous, contradicting our assumption that it is unambiguous.

    Therefore, no such partition oracle $\cO$ exists, and $\CF$ is minimal among partition oracles producing unambiguous automata.
\end{proof}

\subsection{Proof of Theorem~\ref{theo:finFA-polyFA-disambiguation}}

\finFApolyFAdisambiguation*

We separate the proof in three parts.

\begin{lemma}
	For every NFA $\cA = (Q, \Sigma, \Delta, I, F)$, the relations $\ICF_\cA$ and $\ECF_\cA$ can be computed in time $O(|\cA|^5)$ and $O(|\cA|^3)$, respectively.
\end{lemma}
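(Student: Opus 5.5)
The plan is to reduce both relations to reachability questions in product automata, as in part~3 of Theorem~\ref{theo:UFA-disambiguation}, using the product construction and the reverse automaton $\cA^R$. The fact we rely on is the following: a tuple of states is reachable from another tuple in a $k$-fold product $\cA\times\cdots\times\cA$ iff there is a single word driving each component from its source to its target. The $k$-fold product has $O(|Q|^k)$ states and $O(|\Delta|^k)$ transitions, so one graph search in it costs $O(|\cA|^k)$.

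\textbf{ECF.} By definition, $(p,q)\in\ECF_\cA$ iff there exist $r\in Q$ and words $u,v$ with $r\in\Delta(p,v)$, $r\in\Delta(q,v)$, and $p,q\in\Delta(r,u)$.
\begin{enumerate}
\item First, for each $r\in Q$, we run a search in $\cA\times\cA$ from the diagonal state $(r,r)$; the pairs $(p,q)$ it reaches are exactly those with $p,q\in\Delta(r,u)$ for some $u$. Call this set $X_r$.
\item Second, we run a search in $\cA^R\times\cA^R$ from $(r,r)$; the pairs it reaches are exactly those with $r\in\Delta(p,v)\cap\Delta(q,v)$ for some $v$. Call this set $Y_r$.
\item Since $u$ and $v$ are independent words, $(p,q)\in\ECF_\cA$ iff $(p,q)\in X_r\cap Y_r$ for some $r$.
\end{enumerate}
Each search costs $O(|\cA|^2)$, and we repeat it for each of the $|Q|$ choices of $r$, giving $O(|\cA|^3)$ in total.

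\textbf{ICF.} Here the words are not independent: the loop $s\in\Delta(s,uv)$ involves both $u$ and $v$, so the decomposition above does not apply directly. We first fix $r$ and $s$ and use a 4-fold product to certify the $u$-part and the $v$-part jointly.
\begin{enumerate}
\item For the $u$-part, we search $\cA^3$ from $(r,r,s)$. The reachable triples $(p,q,x)$ are those with $p,q\in\Delta(r,u)$ and $x\in\Delta(s,u)$ for a common $u$.
\item For the $v$-part, we need a single $v$ with $r\in\Delta(p,v)$, $s\in\Delta(q,v)$ and $s\in\Delta(x,v)$. This is reachability from $(p,q,x)$ to $(r,s,s)$ in $\cA^3$, or equivalently a backward search from $(r,s,s)$ in $(\cA^R)^3$.
\item Putting the two parts together, $(p,q)$ satisfies ICF iff for some $r,s$ there is an $x$ such that $(p,q,x)$ is reached by both searches.
\end{enumerate}
Each search costs $O(|\cA|^3)$, and we repeat it for each of the $|Q|^2$ pairs $(r,s)$, giving $O(|\cA|^5)$ in total. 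Finally, we close under symmetry, since $\ICF_\cA$ contains $(p,q)$ whenever either orientation shares an infinite common future.

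The main obstacle is the correctness of the $x$-coordinate trick in the ICF case. We must check that the intermediate state $x$ on the loop after reading $u$ exactly captures the requirement $s\in\Delta(s,uv)$, which holds iff some $x\in\Delta(s,u)$ satisfies $s\in\Delta(x,v)$. Only then does $u$ never need to be remembered once we move on to the $v$-part. Verifying this equivalence in both directions is the key step. The remaining work is checking that the product sizes and numbers of searches give the stated exponents.
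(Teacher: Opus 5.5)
Your proposal is correct and matches the paper's proof essentially step for step. For $\ECF_\cA$ you intersect, for each $r$, the pairs reachable from $(r,r)$ in $\cA\times\cA$ and in $\cA^R\times\cA^R$. For $\ICF_\cA$ you fix each pair $(r,s)$ and join the triples reachable from $(r,r,s)$ in the threefold product with those found by a backward search from $(r,s,s)$, linked through the intermediate loop state $x$; the paper does exactly this with its state $t$.

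Two small remarks. Where you write ``4-fold product'' you actually (and correctly) use threefold products, and that is what gives $O(|\cA|^3)$ per pair and $O(|\cA|^5)$ overall. Your explicit symmetric closure at the end is a detail the paper leaves implicit.
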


\begin{proof}
To compute $\ECF_\cA$, consider the automaton $\cA_r\times\cA_r$ for some $r\in Q$. 
We note that the pairs of states $(p,q)$ that are reachable from $(r,r)$ are precisely those for which there is a word $u$ such that $p\in\Delta(r,u)$ and $q\in\Delta(r,u)$.
    Further, if we do the same reasoning for $(\cA^R)_r\times(\cA^R)_r$, the pairs of states $(p,q)$ that are reachable from $(r,r)$ are those for which there is a word $v$ such that $r\in\Delta(p,v)$ and $r\in\Delta(q,v)$.

The reasoning above gives us an $O(|\cA|^3)$ algorithm to compute $\ECF_\cA$:
first, build automata $\cA_r\times\cA_r$ and $(\cA^R)_r\times(\cA^R)_r$ for every $r\in Q$;
then, find all triples $(p,q,r)$ such that $(p,q)$ is reachable from $(r,r)$ both in  $\cA_r\times\cA_r$
and in $(\cA^R)_r\times(\cA^R)_r$, by a linear-time pass on each;
for any $p,q\in Q$ it holds that $(p,q)\in\ECF_\cA$ iff there exists such an $r$.

To see how to compute $\ICF_\cA$, let us fix two states $r,s\in Q$ and consider automaton $\cA_r\times\cA_r\times\cA_s$ (the product construction can be easily generalized to more than two automata).
We see that for any $p,q,t\in Q$ it holds that $(p,q,t)$ is reachable from $(r,r,s)$ iff $p\in\Delta(r,u)$, $q\in\Delta(r,u)$ and
$t\in\Delta(s,u)$ for some $u\in\Sigma^*$.
We do the same reasoning for $(\cA^R)_r\times(\cA^R)_s\times(\cA^R)_s$, and note that $(p,q,t)$ is reachable from $(r,s,s)$ iff  $r\in\Delta(p,v)$, $s\in\Delta(q,v)$, and $s\in\Delta(t,v)$ for some $v\in\Sigma^*$.
We see that if any triple $(p,q,t)$ satisfies both conditions, then, additionally, $s\in\Delta(s,u\cdot v)$ through the state $t$. We conclude that $(p,q)\in\ICF_\cA$ iff these states $r$, $s$ and $t$ exist. Using an analogous reasoning as for $\ECF_\cA$, this gives us an $O(|\cA|^5)$ algorithm to compute $\ICF_\cA$. 
\end{proof}

\begin{lemma}
	For every NFA $\cA = (Q, \Sigma, \Delta, I, F)$, the following properties hold. 
	\begin{enumerate}
		\item $\cA_{\cO_{\ICF}}$ is always finitely ambiguous.
		\item If $\cA$ is finitely ambiguous, then $\cA_{\cO_{\ICF}}$ is isomorphic to $\cA$.
	\end{enumerate}
\end{lemma}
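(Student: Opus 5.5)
Both parts rest on the Weber--Seidl characterisation: a trimmed NFA is finitely ambiguous iff it does not satisfy (IDA). By Proposition~\ref{app:idaprime}, this is equivalent to not satisfying (IDA)'. For part 1, I assume that $\cA_{\cO_{\ICF}}$ (trimmed) satisfies (IDA)' and derive a contradiction. So suppose there are states $R,T,P,P',S$ of $\cA_{\cO_{\ICF}}$, words $u,v$ and a letter $a$ with
\[
(R,u,T),\ (P,v,R),\ (P',v,S),\ (S,uav,S) \in \Delta_{\cO_{\ICF}}^*, \quad (T,a,P),\ (T,a,P') \in \Delta_{\cO_{\ICF}}, \quad P\neq P'.
\]
Both $P$ and $P'$ belong to the partition $\cO_{\ICF}(\Delta(T,a))$, so they are distinct classes of the same set. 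The goal is to find $p\in P$ and $q\in P'$ with $(p,q)\in\ICF_\cA$. Then $p$ and $q$ would lie in the same connected component, which is a contradiction.

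To obtain such states I use Lemma~\ref{lemma:predecessor}(2) together with pigeonhole over $Q$, pulling runs of $\cA$ back along loops of $\cA_{\cO_{\ICF}}$. The cycle $S\xrightarrow{uav}S$ can be iterated. Take $k=|Q|+1$ iterations and trace predecessors backwards from an arbitrary state of $S$. Some state $s\in S$ then repeats, which gives $(s,(uav)^m,s)\in\Delta^*$ for some $m\ge1$.

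Similarly, the cycle $R\xrightarrow{u}T\xrightarrow{a}P\xrightarrow{v}R$ can be iterated to find $r\in R$ with $(r,(uav)^{m'},r)\in\Delta^*$. I align the exponents by taking a common multiple $M$. Then I unfold the path of length $M|uav|$ from $r$ and locate the state $p\in P$ it visits just after the last $T\xrightarrow{a}P$ step. Writing $U=(uav)^{M-1}ua$ and $V=v$, this gives $p\in\Delta(r,U)$ and $r\in\Delta(p,V)$.

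Next I need $q\in P'$ with $q\in\Delta(r,U)$, $s\in\Delta(q,V)$ and $s\in\Delta(s,UV)$. This is the delicate part, because predecessors of $s$ inside $P'$ must come from the same $r$. I handle it by pulling back from $s$ along $S\xleftarrow{v}P'\xleftarrow{a}T\xleftarrow{u}R\cdots$. I extend the chain by further iterations of the $R$-loop, and use pigeonhole on the pair (state of $\cA$ in the $R$-copy, state of $\cA$ in the $S$-copy) to close both loops simultaneously.

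The plan is to work with the product $\cA\times\cA$ restricted to the reached classes, as in the computation lemma. Pigeonhole then yields $r$, $q$ and $s$ with the required common word, so that $(p,q)$ share an infinite common future. I expect this simultaneous synchronisation of the two loops to be the main obstacle. If it fails, the fallback is to instead show that $\cA$ restricted to reachable pairs satisfies (IDA) witnessed by a pair in distinct classes; the argument would go through the (IDA)' structure in $\cA$ directly and then the definition of $\ICF$.

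For part 2, let $\cA$ be finitely ambiguous. I claim $\ICF_\cA$ is the identity relation on reachable pairs; more precisely, for any $p\neq q$ co-reachable from a common $\Delta(I,w)$, we have $(p,q)\notin\ICF_\cA$. Indeed, an infinite common future for $(p,q)$ with $p\neq q$ gives witnesses $r,s,u,v$. Since $r$ reaches both $p$ and $q$ on $u$, and these states are distinct, the runs $r\xrightarrow{u}p\xrightarrow{v}r$ and $r\xrightarrow{u}q\xrightarrow{v}s$ are distinct. Together with $(s,uv,s)$, this is exactly (IDA) for the pair $(r,s)$ via Proposition~\ref{app:idaprime}, since the runs first differ at some position. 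This contradicts finite ambiguity, because $\cA$ is trimmed. Hence every set $\cO_{\ICF}(S)$ consists of singletons, and Proposition~\ref{prop:frameworkIsomorphic} gives the isomorphism.
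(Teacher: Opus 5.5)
Part 2 of your proposal is fine and is the paper's argument. You also supply the justification the paper omits: a pair $p\neq q$ in $\ICF_\cA$ yields two distinct runs $r\xrightarrow{uv}r$ and $r\xrightarrow{uv}s$ plus a loop $s\xrightarrow{uv}s$, which contradicts finite ambiguity of the trim $\cA$.

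\textbf{Part 1 has a genuine gap}, exactly at the step you call delicate: you never produce $r,p,q,s$ satisfying the $\ICF$ conditions.

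Your order of construction also cannot work as written. You fix $r\in R$ and $p\in P$ from the $R$-loop independently of $s$, and only afterwards look for $q\in P'$ with $q\in\Delta(r,U)$ and $s\in\Delta(q,v)$. Lemma~\ref{lemma:predecessor} only gives \emph{backward} existence: every state of a target set has a predecessor in the source set. So an arbitrarily chosen loop state $r\in R$ need not have any run into $P'$ at all. For example, $R$ may contain two states that each loop through $P$ on their own, with only one of them a predecessor of $s$ through $P'$.

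Your sentence about pulling back from $s$ and extending by $R$-loop iterations points the right way. However, you then have to discard the earlier $r,p$ and take them from that chain. The ``pigeonhole on pairs in $\cA\times\cA$'' and the fallback via reachable pairs are too unspecific to check, and they are not needed.

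\textbf{The missing idea is to let $s$ determine $r$.}
\begin{enumerate}
\item Write $w=uav$ and fix $s\in S$ with $s\xrightarrow{w^{n}}s$, as you did.
\item Pull back once along $S\xleftarrow{v}P'\xleftarrow{a}T\xleftarrow{u}R$. This gives $r_1\xrightarrow{u}t\xrightarrow{a}q\xrightarrow{v}s$ with $t\in T$ and $q\in P'$.
\item Iterate a fixed choice of predecessors along the $R$-loop: $r_{l+1}\xrightarrow{u}t_l\xrightarrow{a}p_l\xrightarrow{v}r_l$ with $t_l\in T$ and $p_l\in P$. Because the choice is fixed, this sequence is eventually periodic, say with period $m$ from index $i$.
\item Take $N\ge i$ a common multiple of $n$ and $m$, and set $r:=r_N$.
\end{enumerate}
Then:
\begin{itemize}
\item $r\xrightarrow{w^{N-1}}r_1\xrightarrow{ua}q\xrightarrow{v}s$;
\item since $r_{2N}=r_N$, also $r=r_{2N}\xrightarrow{w^{N-1}}r_{N+1}\xrightarrow{ua}p_N\xrightarrow{v}r$;
\item $s\xrightarrow{w^N}s$.
\end{itemize}
With $U=w^{N-1}ua$ and $V=v$, so that $UV=w^N$, this shows $(p_N,q)\in\ICF_\cA$. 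Since $p_N\in P$ and $q\in P'$ both lie in $\Delta(T,a)$, you get the contradiction $P=P'$ exactly as you planned.

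This is the paper's construction. The ``simultaneous synchronisation'' you worried about disappears: the $S$-loop is already fixed, and only the single $R$-chain derived from $s$ needs to become periodic. The alignment is then just the choice of index $N$ on its periodic part, so no product automaton is required.
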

\begin{proof}
    Suppose that the automaton $\cA_{\cO_{\ICF}}$ is not a finFA and satisfies (IDA)', then there exist states $R, P, T, \hat{Q}, S \in 2^Q. \; P \neq \hat{Q}$, a letter $a$ and words $u, v$ such that:

    \[
        \begin{array}{rcl}
            (R, u, T), (T, a, P), (P, v, R), (T, a, \hat{Q}), (\hat{Q}, v, S), (S, uav, S) & \in & \Delta_{\cO_{\ICF}}^* \\
        \end{array}
    \]
    The proof strategy is to identify two states $p \in P$ and $q \in \hat{Q}$ sharing an infinite common future, leading to a pair of states $(r, s)$ that satisfies (IDA).
    For that purpose, we will obtain the transitions that prove $(p, q) \in \ICF_{\cA}$ relying on Lemma~\ref{lemma:predecessor}.
    
    We will construct the necessary runs to demonstrate this.

    Starting from $S$, let $s_1 \in S$. Since $(S, w, S) \in \Delta_{\cO_{\ICF}}^*$, Lemma~\ref{lemma:predecessor} guarantees a state $s_2$ such that $(s_2, w, s_1) \in \Delta^*$.
    We call $s_2$ a preceding state of $s_1$ over $w$.
    
    Define a sequence $s_1, s_2, \ldots, s_{|S|+1}$ of successive preceding states over $w$.
    By the pigeonhole principle, there are two indices $i < j$ such that $s_i = s_j$.
    
    \begin{figure}
        \centering
        \input{./figures/kDisambiguation/IDAproof1.tex}
        \caption{Illustration of cycle in $S$.}
        \label{fig:IDAproof1}
    \end{figure}

    Therefore lets define $s = s_i$, the following run exists for some integer $n$ (Figure~\ref{fig:IDAproof1}):
    \begin{align*}
        (s, w^n, s) \in \Delta^* \tag{1}\label{eq:fcf:run_1}
    \end{align*}

    Given $(R, u, T), (T, a, \hat{Q}), (\hat{Q}, v, S) \in \Delta_{\cO_{\ICF}}^*$, we can find states $r_1, t, q \in Q$ such that:
    
    \begin{align*}
        r_1 \ \xrightarrow{u} \ t \ \xrightarrow{a} \ q \ \xrightarrow{v} \ s \tag{2}\label{eq:fcf:run_2}
    \end{align*}
    
    For each state $r \in R$, there exist sequential preceding states $p^* \in P, t^* \in T, r^* \in R$ such that:

    \begin{align*}
        r^* \ \xrightarrow{u} \ t^* \ \xrightarrow{a} \ p^* \ \xrightarrow{v} \ r \tag{3}\label{eq:fcf:run_3}
    \end{align*}

    $r^*$ is a preceding state of $r$.
    We can then apply the same procedure used in $S$ to find a sequence of preceding states $r_1, r_2, \ldots, r_{|R|+1}$ and indices $i < j$ such that $r_i = r_j$.

    This yields the runs for some integer $k$ (Figure~\ref{fig:IDAproof2}): 

    \begin{align*}
        r_i \ \xrightarrow{w^{k}u} \ t \ \xrightarrow{a} \ q \ \xrightarrow{v}\ s \tag{4}\label{eq:fcf:run_4} \\
        r_i \ \xrightarrow{w^{m}} \ r_i \tag{5}\label{eq:fcf:run_5}
    \end{align*}
    \begin{figure}
        \centering
        \input{./figures/kDisambiguation/IDAproof2.tex}
        \caption{Illustration of cycle in $R$ and the run from $R$ to $S$.}
        \label{fig:IDAproof2}
    \end{figure}

    Assume that $k+1 \leq n \cdot m$, if not we can replace run (\ref{eq:fcf:run_1}) to $(s, w^{n'}, s) \in \Delta^*$ where $n'$ is a multiple of $n$ and $k+1 \leq n' \cdot m$.

    Since we can extend the run (\ref{eq:fcf:run_4}) simply by adding the predecesor of $r_i$: 
    \[
        \begin{array}{rcl}
            r_{i+1} \ \xrightarrow{uav} \ r_i \ \xrightarrow{w^{k+1}} \ s
        \end{array}
    \]
    We can easily find a state $r_z$ with $z \in [i, j]$ where:
    \begin{align*}
        r_z \ \xrightarrow{w^{k'}u} \ t \ \xrightarrow{a} \ q \ \xrightarrow{v} \ s \tag{6}\label{eq:fcf:run_6} \\
        r_z \ \xrightarrow{w^m} \ r_z \tag{7}\label{eq:fcf:run_7}
    \end{align*}
    Such that $k'+1 = n \cdot m$.

    Let $r = r_z$.
    Additionally, we can extend runs (\ref{eq:fcf:run_2}) and (\ref{eq:fcf:run_7}) to make their words equal to $w^{n \cdot m}$ (Figure ~\ref{fig:IDAproof3}):
    \begin{align*}
        s \ \xrightarrow{w^{n \cdot m}} \ s \tag{8}\label{eq:fcf:run_8} \\
        r \ \xrightarrow{w^{n \cdot m}} \ r \tag{9}\label{eq:fcf:run_9}
    \end{align*}
    Since we extended using runs (\ref{eq:fcf:run_3}) then there exists a state $p \in P$ and a state $t_2 \in T$ such that:
    \begin{align*}
        r \ \xrightarrow{w^{n \cdot m - 1}u} \ t_2 \ \xrightarrow{a} \ p \ \xrightarrow{v} \ r \tag{10}\label{eq:fcf:run_10} \\
    \end{align*}
    With runs (\ref{eq:fcf:run_6}), (\ref{eq:fcf:run_8}) and (\ref{eq:fcf:run_10}) we show that $(p, q) \in \ICF_{\cA}$ (Figure ~\ref{fig:IDAproof4}) with words $u' = w^{n \cdot m-1}u$ and $v' = av$.

    This is a contradiction, since it implies that $\cA_{\cO_{\ICF}}$ satisfies the (IDA) condition and is a polyFA.
    When constructing the transitions from $T$ with letter $a$ in $\Delta_{\cO_{\ICF}}$, $p$ and $q$ are part of the same connected component in $\kappa(\ICF_{\cA} \ \bigcap \ (\Delta_{\cO_{\ICF}}(T, a) \times \Delta_{\cO_{\ICF}}(T, a)))$, therefore $P = \hat{Q}$, which contradicts the initial assumption.

    This concludes that $\cA_{\cO_{\ICF}}$ is a finFA.
    
     \begin{figure}
        \centering
        \input{./figures/kDisambiguation/IDAproof3.tex}
        \caption{Illustration of extended runs.}
        \label{fig:IDAproof3}
    \end{figure}

    \begin{figure}
        \centering
        \input{./figures/kDisambiguation/IDAproof4.tex}
        \caption{Illustration of states $p$ and $q$ having exponentially infinite common future.}
        \label{fig:IDAproof4}
    \end{figure}

    Furthermore, if $\cA$ is already a finFA, then no two distinct states share an infinite common future.  
    Consequently, the oracle $\cO_{\ICF}$ partitions the state set into singletons.  
    By Proposition~\ref{prop:frameworkIsomorphic}, it follows that $\cA_{\cO_{\ICF}}$ is isomorphic to $\cA$.

\end{proof}
\begin{lemma}

	For every NFA $\cA = (Q, \Sigma, \Delta, I, F)$, the following properties hold. 
	\begin{enumerate}
		\item $\cA_{\cO_{\ECF}}$ is always polynomially ambiguous.
		\item if $\cA$ is polynomially ambiguous then $\cA_{\cO_{\ECF}}$ is isomorphic to $\cA$.
	\end{enumerate}

\end{lemma}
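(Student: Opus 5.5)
We argue in parallel with the preceding lemma for $\cO_{\ICF}$, now using (EDA)' (Proposition~\ref{app:edaprime}) in place of (IDA)'. By Weber and Seidl~\cite{weber1991degree}, a trimmed NFA is polynomially ambiguous iff it does not satisfy (EDA). So for part 1 we suppose, towards a contradiction, that $\cA_{\cO_{\ECF}}$ satisfies (EDA)'. This gives states $R, P, \hat{Q}$ of $\cA_{\cO_{\ECF}}$ with $P \neq \hat{Q}$, a letter $a$ and a word $v$ such that
\[
(R,a,P),\ (R,a,\hat{Q}) \in \Delta_{\cO_{\ECF}}, \qquad (P,v,R),\ (\hat{Q},v,R) \in \Delta_{\cO_{\ECF}}^*.
\]
The goal is to find $p \in P$ and $q \in \hat{Q}$ with $(p,q) \in \ECF_\cA$. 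Both $P$ and $\hat{Q}$ are blocks of $\cO_{\ECF}(\Delta(R,a))$, and $\ECF_\cA$-related states of $\Delta(R,a)$ lie in the same connected component. Such a pair therefore forces $P = \hat{Q}$, which is the contradiction.

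To find the pair we use the backward-predecessor argument of the previous proof, via Lemma~\ref{lemma:predecessor}. Set $w = av$. Since $(R,w,R) \in \Delta_{\cO_{\ECF}}^*$, every $r \in R$ has a predecessor in $R$ over $w$, and we may choose that path to go through $P$. Iterating this $|R|+1$ times and applying pigeonhole yields $r \in R$ and $m \ge 1$ with $r \in \Delta(r, w^m)$, where the last block of this loop has the form $r \xrightarrow{w^{m-1}} t \xrightarrow{a} p \xrightarrow{v} r$ with $t \in R$ and $p \in P$.

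We still need $q \in \hat{Q}$ that reaches the same $r$ by the same suffix and is reachable from $r$ by the same prefix. Take a predecessor $q \in \hat{Q}$ over $v$ of $r$, so $q \xrightarrow{v} r$. Then take a predecessor $t' \in R$ of $q$ over $a$, and follow predecessors of $t'$ backward through $R$ over $w$. The difficulty is that these backward chains need not land exactly on $r$ after $w^{m-1}$ steps.

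\textbf{Main obstacle.} We must synchronise the two chains so that a single state $r$ and a single word $u = w^{K}$ satisfy $p, q \in \Delta(r,u)$. For this we use the freedom to pump loops, as in the $\ICF$ proof with the exponents $n\cdot m$. The chain from $t'$ backward eventually enters a cycle of predecessors in $R$, again by pigeonhole. Choose a state $r'$ on that cycle together with a common multiple $N$ of the cycle lengths, so that $r' \xrightarrow{w^{N}} r'$ and $r' \xrightarrow{w^{N'}} t' \xrightarrow{a} q$ both hold.

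Next, reroute the $p$-side so that it starts from the same $r'$ instead of $r$. We do not need the ECF witness $r$ to be the original one, so we look for a state reachable both from $p$ and from $q$ by a common word. Continuing forward from $p$ and from $q$ by $w^{j}$ and using the loops above, we aim for a state $r''$ on $r'$'s cycle with $r'' \in \Delta(p, v w^{j}) \cap \Delta(q, v w^{j})$ and $p, q \in \Delta(r'', w^{j'} a)$. If this two-cycle alignment fails, the fallback is to apply the predecessor argument to the pair automaton $\cA \times \cA$ restricted to $R \times R$. There pigeonhole on pairs gives a diagonal state $(r,r)$ with both components cycling, which is exactly what ECF requires.

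For part 2, if $\cA$ is polynomially ambiguous it does not satisfy (EDA), hence not (EDA)'. We claim that then no two distinct states $p \neq q$ are $\ECF_\cA$-related. Indeed, a witness $r, u, v$ of $(p,q) \in \ECF_\cA$ gives two distinct runs on $uv$ from $r$ back to $r$, because they pass through $p \neq q$ at the same position. This is (EDA). So $\ECF_\cA$ is the identity relation, and $\cO_{\ECF}$ returns singletons on every set. Proposition~\ref{prop:frameworkIsomorphic} then gives that $\cA_{\cO_{\ECF}}$ is isomorphic to $\cA$.
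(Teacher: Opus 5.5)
Your reduction matches the paper's. Under (EDA)' in $\cA_{\cO_{\ECF}}$ it suffices to exhibit $p\in P$ and $q\in\hat Q$ with $(p,q)\in\ECF_\cA$, since that forces $P=\hat Q$. Your part 2 is also correct: an ECF witness for $p\neq q$ yields two distinct runs from $r$ to $r$ on $uv$, i.e.\ (EDA), so $\ECF_\cA$ is the identity and Proposition~\ref{prop:frameworkIsomorphic} applies.

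\textbf{The gap.} The step you call the main obstacle is the whole content of part 1, and you never carry it out. Your two backward chains are linear. They give a common target ($p,q\xrightarrow{v}r$). After padding with the loops at $r$ and $r'$, they even give a common source: $r'$ reaches both $p$ and $q$ by one word. But ECF needs the target to lead back to that source, and nothing gives a path from $r$ to $r'$. You found $r'$ by walking backwards from $q$ through arbitrary predecessors, so it may lie in an SCC of $\cA$ strictly upstream of $r$. Hence ``we aim for a state $r''$ on $r'$'s cycle'' has no support. The fallback fails too. Iterating pair-predecessors from $(r,r)$ and applying pigeonhole only yields some repeated pair $(x,y)$ with $x\xrightarrow{w^k}x$ and $y\xrightarrow{w^k}y$. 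There is no reason for it to be diagonal, let alone equal to $(r,r)$. A non-diagonal repeated pair gives runs converging into $r$, but no common state where they diverge.

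\textbf{The missing idea} is to exploit branching. Every state of $R$ has a predecessor through $P$ \emph{and} a predecessor through $\hat Q$. So the tree of backward choices (the paper's infinite run tree) has $2^d$ nodes labelled by states of $R$ after $d$ rounds of $av$-steps. Once $2^d>|R|$, two of them carry the same label $s$. Their lowest common ancestor, labelled $y$ say, has a $P$-child $p$ and a $\hat Q$-child $q$, with the two nodes in different subtrees. Since the nodes are at the same depth, $p,q\in\Delta(s,(av)^n a)$ and $y\in\Delta(p,v)\cap\Delta(q,v)$ for a single $n$.

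It remains to close the loop from $y$ back to $s$; there are two ways.
\begin{itemize}
\item The paper restarts inside the subtree of the repeated node until the repeated label appears among the even-depth ancestor labels of the lowest common ancestor. That label set grows strictly each round, so this terminates.
\item Alternatively, consider the graph on $R$ given by these $av$-steps, take an SCC $C$ with no edges entering from $R\setminus C$, and root the tree at a state of $C$. All labels then stay in $C$, and strong connectivity gives $s\in\Delta(y,(av)^k)$ for some $k$.
\end{itemize}
Either way $(p,q)\in\ECF_\cA$, with witness state $s$, $u=(av)^na$, and return word $v(av)^k$.
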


\begin{proof}
Suppose that the automaton $\cA_{\cO_{\ECF}}$ is not a polyFA and satisfies (EDA)', then there exist states $R, \hat{Q}, P \in 2^Q. \; \hat{Q} \neq P$, a letter $a$ and a word $w$ such that:

$$
    (R, a, P), (R, a, \hat{Q}), (P, w, R), (\hat{Q}, w, R) \in \Delta_{\cO_{\ECF}}^*
$$

Then by applying Lemma~\ref{lemma:predecessor}:
\begin{enumerate}
    \item For every state $p \in P \cup \hat{Q}$, it has a preceding state $r \in R$ such that $(r, a, p) \in \Delta$.
    \item For every state $r \in R$, it has preceding states $p \in P$ and $q \in \hat{Q}$ such that $(p, w, r), (q, w, r) \in \Delta^*$.
\end{enumerate}

Furthermore, these properties imply that any state $p \in R \cup P \cup \hat{Q}$ has at least one preceding state with a transition to $p$. This characteristic allows us to construct an infinite tree of states.

\paragraph{Infinite Run Tree} An \emph{infinite run tree} is a tuple $(t, T, q)$, where:

\begin{itemize}
    \item $t$ is an infinite set of nodes
    \item $T: t \mapsto Q$ is a labeling function that assigns states to nodes
    \item $q \in Q$ is the label of the root node, i.e., $T(k_0) = q$, where $k_0$ denotes the root at depth 0
\end{itemize}

The tree structure is defined recursively starting from the root as follows (Shown in Figure~\ref{fig:ECFtree}):

\begin{enumerate}
    \item If a node $k$ is at an even depth:
    \begin{itemize}
        \item $T(k) \in R$.
        \item $k$ has two children, $k_1$ and $k_2$, where:
        \begin{itemize}
            \item The left child $T(k_1) \in P$ is a preceding state of $T(k)$.
            \item The right child $T(k_2) \in \hat{Q}$ is a preceding state of $T(k)$.
        \end{itemize}
    \end{itemize}
    \item If a node $k$ is at an odd depth:
    \begin{itemize}
        \item $T(k) \in P \cup \hat{Q}$.
        \item $k$ has one child $k'$, where $T(k') \in R$ is a preceding state of $T(k)$.
    \end{itemize}
\end{enumerate}

The tree structure illustrates the infinite sequence of predecessors for a state in $R$, derived from transitions in $\cA$, highlighting the alternation between states in $R$ and $P \cup \hat{Q}$.
This construction enables us to identify pairs of states $p, q \in Q$ that share an exponentially infinite common future within the tree and $\cA$.

    Given a state $r_0 \in R$, we define an infinite run tree $(t, T, r_0)$ with root $k_0$ labeled $r_0$.

    Consider the depth level $2 \cdot \lceil \log_2(|S|) \rceil$ below the root. This level contains nodes $k_1, \ldots, k_K$ with labels $r_1, \ldots, r_K$, where $K \geq |R|$. By the pigeonhole principle, there exist two states $r_i, r_j \in Q$ that are equal.

    Let $k$ be the \emph{lowest common ancestor (LCA)} of $k_i$ and $k_j$ with label $r$. By the tree's structure, $k$ has an even depth and two children. Let $p \in P$ and $q \in \hat{Q}$ be the labels of these children.
    
    Define $A(k)$ as the set of labels of all ancestors of $k$ (including itself) with even depth.

    We consider two cases:

    \begin{enumerate}
        \item If $r_i \in A(k)$:
        We can find the following transitions in $\cA$:

        \begin{align*}
                r_i \ \xrightarrow{(aw)^{n_1}} \ r_1 \ \xrightarrow{a} \ p \ \xrightarrow{w(aw)^{n_2}} \ r_i  \\
                r_i \ \xrightarrow{(aw)^{n_1}} \ r_2 \ \xrightarrow{a} \ q \ \xrightarrow{w(aw)^{n_2}} \ r_i
        \end{align*}

    This implies $(p, q) \in \ECF_{\cA}$ (Figure~\ref{fig:ECFproof}). 
				
		This is a contradiction, since it leads to $\cA_{\cO_{\ECF}}$ being exponentially ambiguous.
    When constructing the transitions from $R$ that contains the states $r_1$ and $r_2$ on letter $a$ in $\Delta_{\cO_{\ECF}}$, $p$ and $q$ are part of the same connected component in $\kappa(\ECF_{\cA} \ \bigcap \ (\Delta_{\cO_{\ECF}}(R, a) \times \Delta_{\cO_{\ECF}}(R, a)))$, therefore $P = \hat{Q}$, which contradicts the initial assumption.
        
        \item If $r_i \notin A(k)$:
        
        We repeat the argument starting from node $k_i$. We find another pair of equal states $r_i', r_j'$ and examine the ancestors of their LCA $k'$.
        Note that $|A(k')| \geq |A(k)| + 1$, as $r_i$ is added to the set of ancestors.

        Therefore, if we don't find the state $r_i'$ among the ancestors, we can repeat the argument multiple times until the set of ancestors is equal to $Q$ for some $r_i^*$, where $r_i^*$ will be present among the ancestors. 
        In conclusion, the argument will always lead to case 1.
    
    \end{enumerate}

    Therefore, in both cases we will find that $P = \hat{Q}$, contradicting our initial assumption.
    This concludes that $\cA_{\cO_{\ECF}}$ is a polyFA.

    Furthermore, if $\cA$ is already a polyFA, then no two distinct states share an exponentially infinite common future.  
    Consequently, the oracle $\cO_{\ECF}$ partitions the state set into singletons.  
    By Proposition~\ref{prop:frameworkIsomorphic}, it follows that $\cA_{\cO_{\ECF}}$ is isomorphic to $\cA$.

\end{proof}

\begin{figure}
    \centering
    \input{./figures/kDisambiguation/ECFtree.tex}
    \caption{Illustration of an infinite run tree. States at even depths are labeled by $r \in R$, while states at odd depths are labeled alternately by $p \in P$ and $q \in \hat{Q}$. Edges are directed and labeled according to the corresponding transitions in $\Delta_{\cO_{\ECF}}^*$. The tree structure demonstrates the alternating pattern between states from $R$ and states from $P \cup \hat{Q}$.}
    \label{fig:ECFtree}
\end{figure}

\begin{figure}
    \centering
    \input{./figures/kDisambiguation/ECFproof.tex}
    \caption{Illustration of the states used in the proof within the infinite run tree with $r_i = r_j$.}
    \label{fig:ECFproof}
\end{figure}

\subsection{Proof of Theorem~\ref{theo:minimalityICF-ECF}}

\minimalityICFECF*

We separate the proof in two parts.
\begin{lemma}
	Let $\cA$ be an NFA and $R$ be a reflexive and symmetric relation between states of~$\cA$. If
	$\cA_{\cO_R}$ is finitely ambiguous and $R \subseteq \ICF_{\cA}$, then $\cO_{\ICF}$ refines $\cO_R$ over $\cA$.\end{lemma}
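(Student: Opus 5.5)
The plan is to show that, under the hypotheses, $\cO_{\ICF}(S)$ and $\cO_R(S)$ coincide on every relevant set $S$. Here the relevant sets are $S = I$ and every $S = \Delta(S',a)$ with $S' \in \reach(\cA_{\cO_R})$. Equality is stronger than the refinement we need.

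\textbf{The easy direction.} Since $R \subseteq \ICF_{\cA}$, we have $R \cap (S\times S) \subseteq \ICF_{\cA} \cap (S \times S)$. Hence every connected component of $R \cap (S \times S)$ lies inside a component of $\ICF_{\cA} \cap (S\times S)$, that is, $\cO_R(S)$ refines $\cO_{\ICF}(S)$.

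\textbf{The converse, by contradiction.} Suppose $\cO_{\ICF}(S)$ does not refine $\cO_R(S)$. Walking along an $\ICF$-path inside $S$ that connects two different $R$-components, we find $p,q \in S$ with $(p,q) \in \ICF_{\cA}$ and $S_p \neq S_q$, where $S_p, S_q \in \cO_R(S)$ are the blocks containing $p$ and $q$. Both $S_p$ and $S_q$ are reachable in $\cA_{\cO_R}$ by a common word $x$, because $S$ is either $I$ or the $a$-successor set of a reachable state. Fix the $\ICF$ witness: $r,s\in Q$ and $u,v\in\Sigma^*$ with
\begin{itemize}
\item $p,q\in\Delta(r,u)$,
\item $r\in\Delta(p,v)$ and $s \in \Delta(q,v)$,
\item $s\in\Delta(s,uv)$.
\end{itemize}
The goal is to turn this into an instance of (IDA)' inside $\cA_{\cO_R}$. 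By Proposition~\ref{app:idaprime} and the Weber--Seidl criterion, that contradicts finite ambiguity of $\cA_{\cO_R}$.

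\textbf{Lifting runs.} As in the proof of Proposition~\ref{prop:frameworkEquivalence}, any run of $\cA$ is lifted to $\cA_{\cO_R}$ from a starting block containing its first state. At each step we choose the unique block of $\cO_R(\Delta(X,b))$ containing the tracked state. Such a lifted run is completely determined by the pair (current block, tracked state).
\begin{enumerate}
\item From $S_p$, track $p \xrightarrow{v} r \xrightarrow{u} p \xrightarrow{v} \cdots$ over $(vu)^n$.
\item From $S_q$, track $q \xrightarrow{v} s \xrightarrow{uv} s \cdots$ over $v(uv)^n$.
\end{enumerate}
There are finitely many pairs (block, tracked state). By pigeonhole, both lifted runs become periodic: some block $X \ni p$ satisfies $X \xrightarrow{(vu)^k} X$, and some block $Y\ni s$ satisfies $Y\xrightarrow{(uv)^k} Y$, with a common period $k$ obtained by taking multiples. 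We then prepend the common prefix $x$ and pump, exactly as in the proof that $\cA_{\cO_{\ICF}}$ is finitely ambiguous. This yields two runs of $\cA_{\cO_R}$ over the same word $x(vu)^{N}$: one cycling on a block containing $r$, and one reaching a block containing $s$ that cycles on $(uv)$-powers. This is the (IDA) shape on the pair (block containing $r$, block containing $s$).

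\textbf{Main obstacle.} The difficulty is guaranteeing that the two lifted runs genuinely diverge, i.e.\ that they are distinct runs of $\cA_{\cO_R}$. Concretely, when both leave a common block containing $r$ on reading $u$, the blocks containing $p$ and $q$ must differ. For the original $S$ this holds by assumption. After pumping, however, the successor set $\Delta(\cdot,\text{last letter of } u)$ may differ from $S$, and $p,q$ might become $R$-connected there.

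My first attempt is to choose the pumping point so that the block containing $r$ on the cycle is exactly the predecessor block $S'$ of $S$. That is, I would start the cycle search from $S'$, tracking $r$ backwards via Lemma~\ref{lemma:predecessor}, instead of starting from $S_p$. Then reading $u$ from that block reproduces $S$ itself, so $p$ and $q$ land in the distinct blocks $S_p\neq S_q$.

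If that fails, the fallback is to argue by minimal counterexample. Take the relevant $S$ reached by the shortest word exhibiting the separation. On the cycle, $p$ and $q$ are either separated again in some successor set, which yields the divergence, or they are $R$-joined there. In the joined case I would use reflexivity, symmetry and $R\subseteq\ICF_{\cA}$ to transport the separation back and contradict minimality.
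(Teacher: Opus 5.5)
The step you call the main obstacle is a genuine gap, and it cannot be closed. Your easy direction and the reduction to an adjacent pair $(p,q)\in\ICF_{\cA}$ with $S_p\neq S_q$ are fine.

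The problem is with the two lifted runs over $x(vu)^N$. They separate only once, at $S_p\neq S_q$, \emph{before} either run enters its cycle. That gives two runs, not an (IDA) pattern. For (IDA) in $\cA_{\cO_R}$ you need $p$ and $q$ to be $R$-separated in a set obtained by reading $u$ from a block $X\ni r$ that lies \emph{on} the lifted $r$-cycle. Nothing in the hypotheses forces this.

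Your first fix fails because the $\ICF$-witness $(r,u)$ is unrelated to how $S$ was reached:
\begin{itemize}
\item $r$ need not lie in the predecessor block $S'$; Lemma~\ref{lemma:predecessor} only gives predecessors of $p$ and $q$ in $S'$.
\item $u$ need not be the letter $a$ with $S=\Delta(S',a)$, so reading $u$ from the block of $r$ does not reproduce $S$.
\end{itemize}
The minimal-counterexample fallback cannot work either, because the lemma is false as stated.

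The paper's own proof has the same hole. Its contradiction hypothesis is that $p,q$ lie in distinct $\cO_R$-blocks of \emph{every} relevant $S$ containing both. The negation of refinement only yields \emph{one} such $S$. The paper then uses the stronger hypothesis to assert $S_p\neq S_q$ at the pumped occurrence.

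\textbf{Counterexample.} Let $Q=\{i,r,p,q,z\}$, $I=\{i\}$, $F=\{r\}$, with transitions $(i,c,p)$, $(i,c,q)$, $(r,a,p)$, $(r,a,q)$, $(r,a,z)$, $(p,b,r)$, $(q,b,r)$, $(z,b,r)$. This NFA is trimmed.
\begin{itemize}
\item Taking $u=a$, $v=b$, $s=r$ shows $(p,q),(p,z),(q,z)\in\ICF_{\cA}$, and even in $\ECF_{\cA}$.
\item Let $R$ be the reflexive--symmetric closure of $\{(p,z),(q,z)\}$. Then $R\subseteq\ICF_{\cA}$.
\item The reachable part of $\cA_{\cO_R}$ consists of $\{i\}\xrightarrow{c}\{p\}$, $\{i\}\xrightarrow{c}\{q\}$, $\{p\}\xrightarrow{b}\{r\}$, $\{q\}\xrightarrow{b}\{r\}$, $\{r\}\xrightarrow{a}\{p,q,z\}$ and $\{p,q,z\}\xrightarrow{b}\{r\}$. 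The set $\{p,q,z\}$ forms a single block because $p$ and $q$ are joined through $z$.
\item Every accepted word $cb(ab)^n$ has exactly two accepting runs, so $\cA_{\cO_R}$ is $2$-ambiguous.
\item For the relevant set $S=\Delta(\{i\},c)=\{p,q\}$ we have $\cO_{\ICF}(S)=\{\{p,q\}\}$ but $\cO_R(S)=\{\{p\},\{q\}\}$.
\end{itemize}
Hence $\cO_{\ICF}$ does not refine $\cO_R$ over $\cA$, and the same automaton refutes the $\ECF$ version as well.

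This is exactly the ``merged only in the presence of $z$'' phenomenon from the paper's own $\ECF$ counterexample. The hypothesis $R\subseteq\ICF_{\cA}$ does not exclude it once $(q,z)$ itself belongs to $\ICF_{\cA}$. Your diagnosis of where the difficulty lies is correct. What is missing is not a cleverer pumping argument: the statement itself needs an additional hypothesis or a weaker conclusion.
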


\begin{proof}
    Let $R \subseteq \ICF_{\cA}$ be a relation whose corresponding oracle $\cO_{R}$ defines a finFA $\cA_{\cO_{R}} = (2^Q, \Sigma, \Delta_{\cO_{R}}, I', F')$.
    Assume, for contradiction, that the partition $\cO_{\ICF}$ does not refine $\cO_{R}$.

    If $\kappa(R) = \kappa(\ICF_{\cA})$, then $\cO_{\ICF} = \cO_{R}$.

    Otherwise, there exist states $p, q \in Q$ such that $(p, q) \in \ICF_{\cA}$ and for all subsets $S \in \{\Delta(S',a) \mid S' \in \reach(\cA_{\cO_{R}}) \wedge a \in \Sigma\}$ containing both $p$ and $q$ we have that they are contained in distinct sets $S_{p}, S_{q} \in \cO_{R}(S)$ with $p \in S_{p}$ and $q \in S_{q}$.

    By the definition of the (IDA)' property, there exist words $u,v \in \Sigma^*$ and states $s,r \in Q$ such that the following transitions hold in $\cA$:

    \[
    (r, u, p), \quad (r, u, q), \quad (p, v, r), \quad (q, v, s), \quad (s, uv, s) \in \Delta^*.
    \]

    We will use the notation for sets $S_z$, indicating that $z \in S_z$.

    Let $w := uv$. There exist a partial run in $\cA_{\cO_{R}}$ such that:

    \[
        \begin{array}{rcl}
            \rho^* & := &  S_r \ \xrightarrow{w} \ S'_{r} \\
        \end{array}
    \]

    Applying the pigeonhole principle to the iterated run over $w$, we find a state set $S_r^*$ and an integer $k > 0$ such that:
    \[
        \begin{array}{rcl}
             S_r^* \ \xrightarrow{w^k} \ S_{r}^* \\
        \end{array}
    \]

    Since $(r, w, s) \in \Delta$, we obtain an infinite sequence of states:
    \[
        \begin{array}{rcl}
             S_r^* \ \xrightarrow{w^k} \ S1^{+}_{s} \ \xrightarrow{w^k} \ S2^+_{s} \ \xrightarrow{w^k} \ldots  \\
        \end{array}
    \]

    Again, by pigeonhole principle, there exist integers $z_1, z_2$ and a set of states $S_s^{+}$ such that:
    \[
        \begin{array}{rcl}
             S_r^* \ \xrightarrow{w^{k \cdot z_1}} \ S^{+}_{s} \ \xrightarrow{w^{k \cdot z_2}} \ S^+_{s}  \\
        \end{array}
    \]

    Without loss of generality, assume $z_1 \leq z_2$ (otherwise, replace $z_2$ by one of its multiples).
    Since $(r,w,s) \in \Delta$, there exists a set of states $P_s$ such that:

    \[
        \begin{array}{rcl}
             S_r^* \ \xrightarrow{w^{k \cdot z_1}} \ S^{+}_{s} \ \xrightarrow{w^k} \ P_s  \ \xrightarrow{w^{k \cdot (z_2-1)}} \ S^+_{s}  \ \xrightarrow{w^k} \ P_s  \\
        \end{array}
    \]
    Iterating this argument extends $z_1$, obtaining a state set $S_s^*$ satisfying:

    \[
        \begin{array}{rcl}
            S_r^* \ \xrightarrow{w^{k \cdot z_2}} \ S^{*}_{s} \ \xrightarrow{w^{k \cdot z_2}} \ S^{*}_{s} \\
        \end{array}
    \]

    Because the runs $r \xrightarrow{u} p$ and $r \xrightarrow{u} q$ exist in $\cA$, there exist runs $\rho'$ and $\rho'_2$ such that:

    \[
        \begin{array}{rcl}
            \rho' & := & S_r^* \ \xrightarrow{u} \ S_{p} \ \xrightarrow{vw^{k \cdot z_2 - 1}} \ S_r^* \\
            \rho_2' & := & S_r^* \ \xrightarrow{u} \ S_{q} \ \xrightarrow{vw^{k \cdot z_2 - 1}} \ S^{*}_{s} \ \xrightarrow{w^{k \cdot z_2}} \ S^{*}_{s} \\
        \end{array}
    \]

    By the initial assumption, $S_{p} \neq S_{q}$, thus the pair $(S_{p}, S_{q})$ satisfies (IDA)', implying that $\cA_{\cO_{R}}$ is a polyFA.

    This contradicts the assumption that $\cA_{\cO_{R}}$ is a finFA, and therefore it cannot satisfy (IDA)'.
    Hence, no finer oracle $\cO_{R}$ exists, and $\cO_{\ICF}$ is minimal.   

\end{proof}

\begin{lemma}
	Let $\cA$ be an NFA and $R$ be a reflexive and symmetric relation between states of~$\cA$. If 
	$\cA_{\cO_R}$ is polynomially ambiguous and $R \subseteq \ECF_{\cA}$, then $\cO_{\ECF}$ refines $\cO_R$ over $\cA$.
\end{lemma}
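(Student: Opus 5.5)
We mirror the proof of the previous lemma for $\ICF$, replacing (IDA)' by (EDA)'. The plan is to argue by contradiction: assume $\cO_{\ECF}$ does not refine $\cO_R$ over $\cA$. If $\kappa(R)$ and $\kappa(\ECF_{\cA})$ coincide on every relevant set, the two oracles agree and there is nothing to show. Otherwise, since $R \subseteq \ECF_{\cA}$, every block of $\cO_R(S)$ lies inside a block of $\cO_{\ECF}(S)$. So failure of refinement means some block of $\cO_{\ECF}(S)$ is split by $\cO_R$. Walking along an $\ECF$-path inside that block, we find a pair $(p,q)\in\ECF_{\cA}$ with $p,q \in S$ lying in distinct blocks $S_p \neq S_q$ of $\cO_R(S)$. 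Here $S = \Delta(S',a')$ for some $S'\in\reach(\cA_{\cO_R})$ and $a' \in \Sigma$.

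By the definition of $\ECF$, there are $r \in Q$ and $u,v\in\Sigma^*$ with $p,q\in\Delta(r,u)$ and $r\in\Delta(p,v)\cap\Delta(q,v)$. Put $w := vu$. We want a set $S^*$ of $\cA_{\cO_R}$ containing $r$ that lies on a $w$-cycle, and from which $u$ leads to two distinct blocks containing $p$ and $q$. These blocks must then return to $S^*$ after reading $v$ followed by some power of $w$. That yields (EDA)' (equivalently (EDA), by Proposition~\ref{app:edaprime}) in $\cA_{\cO_R}$, which contradicts polynomial ambiguity.

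The pumping runs as follows. Starting from $S_p$, read $v$ to reach a set containing $r$. Then iterate $w$: since $r\in\Delta(r,w)$, each successive set contains $r$. By the pigeonhole principle on the finitely many subsets, we obtain $S_r^*\ni r$ and $k>0$ with $(S_r^*, w^k, S_r^*)\in\Delta_{\cO_R}^*$, where each intermediate set contains $r$ along the chosen run of $\cA$. From $S_r^*$, reading $u$ gives $\Delta(S_r^*,u)\supseteq\{p,q\}$. Reading $v\,w^{k-1}$ from the block containing $p$ (resp.\ $q$) returns to a set containing $r$. The two returns may land in different sets. To fix this, pump further and take a common period, as in the $\ICF$ proof, so that both runs close back at the same $S_r^*$.

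The main obstacle is that $\cO_R$ is applied to $\Delta(S_r^*,u)$, not to the original $S$. So $p$ and $q$ need not be separated there, and the two return sets need not coincide. For separation, we use that $\cO_R$ is relational. Since $R\subseteq\ECF_{\cA}$, we can argue about the component structure: if $p,q$ were merged in $\cO_R(\Delta(S_r^*,u))$, they would be $R$-connected through states of that set. We then need the set reached at the critical point to be exactly one where they are split. Our first attempt is to choose $S$ itself as the starting point, pumping forward from $S$ so that the cycle begins at the split. Following the paper's $\ICF$ argument, we take the hypothesis ``for all such $S$ containing both, they are separated'', which makes separation automatic at $\Delta(S_r^*,u)$. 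For the convergence of the returns, we apply pigeonhole twice: first on the sequence of sets reached from $S_p$ by $v w^{j}$, then from $S_q$. We pick a common multiple of the periods and a common preperiod so that both sequences reach the same eventually periodic set $S_r^*$. Two distinct runs $S_r^* \to S_p \to S_r^*$ and $S_r^*\to S_q\to S_r^*$ over the same word then give $\daa(S_r^*,x,S_r^*)\ge 2$, i.e.\ (EDA).
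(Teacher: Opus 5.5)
\textbf{Gap in the separation step.} You follow the paper's route. Your reduction to one relevant set $S=\Delta(S',a')$ in which some $(p,q)\in\ECF_\cA$ is split by $\cO_R(S)$ is fine, and you correctly identify the real obstacle. The gap is how you resolve it: you assume that $p$ and $q$ are split in \emph{every} relevant set containing both. Failure of refinement only yields \emph{one} such set. On the sets visited along the $\ECF$-cycle through $r$, the oracle $\cO_R$ may merge $p$ and $q$ through a third state $z$ with $(p,z),(z,q)\in R$. The paper's proof makes exactly this move. It also claims that $\kappa(R)=\kappa(\ECF_\cA)$ forces $\cO_R=\cO_{\ECF}$, which ignores that $\cO_R(S)$ depends on $R\cap(S\times S)$.

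\textbf{The statement is false.} So the gap cannot be filled. Let $Q=\{i,p,q,z,r\}$, $I=\{i\}$, $F=\{r\}$, with transitions $i\xrightarrow{b}p$, $i\xrightarrow{b}q$, and $r\xrightarrow{a}x\xrightarrow{c}r$ for each $x\in\{p,q,z\}$. This NFA is trimmed, and $(p,q),(p,z),(q,z)\in\ECF_\cA$ (witness $r$, $u=a$, $v=c$). Let $R$ be the identity plus $(p,z),(z,p),(q,z),(z,q)$. Then $R\subseteq\ECF_\cA$, and even $\kappa(R)=\kappa(\ECF_\cA)$. The automaton $\cA_{\cO_R}$ has states $\{i\},\{p\},\{q\},\{r\},\{p,q,z\}$ and transitions $\{i\}\xrightarrow{b}\{p\}$, $\{i\}\xrightarrow{b}\{q\}$, $\{p\}\xrightarrow{c}\{r\}$, $\{q\}\xrightarrow{c}\{r\}$, $\{r\}\xrightarrow{a}\{p,q,z\}\xrightarrow{c}\{r\}$. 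Every accepted word $bc(ac)^n$ has exactly two accepting runs, so $\cA_{\cO_R}$ is finitely, hence polynomially, ambiguous. But $\{p,q\}=\Delta(\{i\},b)$ with $\{i\}$ reachable, and $\cO_{\ECF}(\{p,q\})=\{\{p,q\}\}$ while $\cO_R(\{p,q\})=\{\{p\},\{q\}\}$. So $\cO_{\ECF}$ does not refine $\cO_R$ over $\cA$.

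\textbf{Convergence step.} This step fails separately, even under your stronger hypothesis. The sets reached from $S_p$ and from $S_q$ by lifting the same run of $\cA$ are orbits of one map on subsets of $Q$. Two such orbits can end in different cycles, so no common period and preperiod makes them meet. Also, $S_r^*$ is built from $S_p$, whereas you need both runs leaving $S_r^*$ to return to $S_r^*$ itself. In addition, the loop at $r$ is on $uv$, not on $w=vu$.

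\textbf{What your idea does prove.} It proves a weaker statement: if some $(p,q)\in\ECF_\cA$ is split in every relevant set containing both, then $\cA_{\cO_R}$ is exponentially ambiguous. Consider the reachable states of $\cA_{\cO_R}$ containing $r$. Give each one edge for the lift of $r\xrightarrow{u}p\xrightarrow{v}r$ and one for the lift of $r\xrightarrow{u}q\xrightarrow{v}r$. The hypothesis makes these two lifts differ within the $u$-block. From a state $Y$ in a bottom strongly connected component, both edges can be closed back to $Y$ by walks of equal length, which gives (EDA) at $Y$.
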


\begin{proof}
    Let $R \subseteq \ECF_{\cA}$ be a relation whose corresponding oracle $\cO_{R}$ defines a polyFA $\cA_{\cO_{R}} = (2^Q, \Sigma, \Delta_{\cO_{R}}, I', F')$.
    Assume, for contradiction, that the partition $\cO_{\ECF}$ does not refine $\cO_{R}$.

    If $\kappa(R) = \kappa(\ECF_{\cA})$, then $\cO_{\ECF} = \cO_{R}$.

    Otherwise, there exist states $p, q \in Q$ such that $(p, q) \in \ECF_{\cA}$ and for all subsets $S \in \{\Delta(S',a) \mid S' \in \reach(\cA_{\cO_{R}}) \wedge a \in \Sigma\}$ containing both $p$ and $q$ we have that they are contained in distinct sets $S_{p}, S_{q} \in \cO_{R}(S)$ with $p \in S_{p}$ and $q \in S_{q}$.

    By definition of $\ECF_{\cA}$, there exists a word $w \in \Sigma^*$, a letter $a \in \Sigma$ and a state $r \in Q$ such that the following transitions hold in $\cA$.

    \[
    (r, a, p), \quad (r, a, q), \quad (p, w, r), \quad (q, w, r) \in \Delta^*.
    \]

    Moreover, by Lemma~\ref{lemma:predecessor}, the following hold:

    \begin{enumerate}
        \item For any state $S_p$ containing $p$, there exists a preceding state set $S_{r}$ such that $(S_{r}, a, S_p) \in \Delta_{\cO_{\ECF}}^*$ and $r \in S_{r}$.
        \item For any state $S_q$ containing $q$, there exists a preceding state set $S_{r}$ such that $(S_{r}, a, S_q) \in \Delta_{\cO_{\ECF}}^*$ and $r \in S_{r}$.
        \item For any state $S_{r}$ containing $r$, there exist preceding state sets $S_p$ and $S_q$ such that
        \[
            (S_p, w, S_{r}), \quad (S_q, w, S_{r}) \in \Delta_{\cO_{\ECF}}^*,
        \]
        with $p \in S_p$ and $q \in S_q$.
    \end{enumerate}

    Let $S_{r} \in \textsf{Reach}(\cA)$ be a reachable state set containing $r$. 
    Similarly to Proposition~\ref{theo:finFA-polyFA-disambiguation}, define an infinite run tree $(t, T, S_{r})$ with labeling function $T: t \mapsto 2^Q$ as follows:

    \begin{enumerate}
        \item If a node $k$ is at an even depth:
        \begin{itemize}
            \item $T(k)$ contains $r$.
            \item $k$ has two children, $k_1$ and $k_2$, where:
            \begin{itemize}
                \item $T(k_1)$ contains $p$ and is a preceding state set of $T(k)$.
                \item $T(k_2)$ contains $q$ and is a preceding state set of $T(k)$.
            \end{itemize}
        \end{itemize}
        \item If a node $k$ is at an odd depth:
        \begin{itemize}
            \item $T(k)$ contains either $p$ or $q$.
            \item $k$ has one child $k'$, such that $T(k')$ contains $r$ and is a preceding state set of $T(k)$.
        \end{itemize}
    \end{enumerate}

    Using the same argument as in Theorem~\ref{theo:finFA-polyFA-disambiguation}, there exist integers $n_1, n_2$ and states $S_p, S_q \in 2^Q$ such that:

    \begin{align*}
        S_{r} \ \xrightarrow{a(wa)^{n_1}} \ S_p \ \xrightarrow{w(aw)^{n_2}} \ S_{r}  \\
        S_{r} \ \xrightarrow{a(wa)^{n_1}} \ S_q \ \xrightarrow{w(aw)^{n_2}} \ S_{r} \\
    \end{align*}

    By the initial assumption, $S_p \neq S_q$, thus the pair $(S_p, S_q)$ satisfies (EDA)', implying that $\cA_{\cO_{R}}$ is expFA.

    This contradicts the assumption that $\cA_{\cO_{R}}$ is a polyFA, and therefore it cannot satisfy (EDA)'.
    Hence, no finer oracle $\cO_{R}$ exists, and $\cO_{\ECF}$ is minimal.   

\end{proof}

\subsection{Minimality counterexample}

Theorem~\ref{theo:UFA-minimality} does not hold for relations $\ICF$ and $\ECF$.
Consider the NFA $\cA$ in Figure~\ref{fig:edaExamples} and its polynomial disambiguation $\cA_{\cO_{\ECF}}$, its exponentially infinite common future relation is $\ECF_{\cA} = \{(p, q), (q, p), (p, z), (z, p)\}$.
There exist a relation $R=\{(p, z), (q, z), (z, p), (z, q)\}$ such that $\cO_{R}(\{p, q\})$ refines $\cO_{\ECF}(\{p, q\})$, while $\cA_{\cO_{R}}$ is polynomially ambiguous.

In $\cA_{\cO_{R}}$, states $p$ and $q$ are only merged in the set of states that contains $z$, effectively preventing exponetial degree of ambiguity.

\begin{figure}
\vspace{5em}
    \hspace{-0.5em}
    \subfloat[$\cA$]{%
        \input{./figures/EDAexample/original.tex}%
    }
    \hspace{0.5em}
    \begin{minipage}{25mm}
    \vspace{-10em}
    \subfloat[$\cA_{\cO_{\ECF}}$]{%
        \input{./figures/EDAexample/dis.tex}%
    }
    \vspace{0.5em}
    \subfloat[$\cA_{\cO_{R}}$]{%
        \input{./figures/EDAexample/alternative.tex}%
    }
    \end{minipage}
    
    \caption{counterexample for minimality of $\ECF$.}
    \label{fig:edaExamples}
\end{figure}

To determine that states $p$ and $q$ should not be merged, the procedure needs knowledge of their future behavior on $\cA_{\cO_{R}}$  to verify they are not part of an (EDA) condition.
Consequently, no disambiguation procedure that cannot be computed on-the-fly can refine the states formed by oracle $\ECF$ and define a polynomially ambiguous automaton.

Note that in the example, the relation $\ICF_{\cA}$ is equivalent to $\ECF_{\cA}$, then the same NFA serves as a counter example for the minimality of $\ICF$.

%% file: figures/kDisambiguation/IDA.tex
\begin{tikzpicture}[every label/.style={black!60}, initial text=, ->,>=stealth', node distance=4cm]
  \node[state]          (p)                {$r$};
  \node[state]          (q)    [right of=p]            {$s$};
  
  \path[->] (p) edge [loop above] node [above]   {$uav$} (p)
            (p) edge  node [above]   {$uav$} (q)
            (q) edge [loop above]  node [above]   {$uav$} (q);
\end{tikzpicture}

%% file: figures/kDisambiguation/IDAalternative.tex
\begin{tikzpicture}[every label/.style={black!60}, initial text=, ->,>=stealth', node distance=2cm]
  \node[state]          (p)                {$r$};
  \node[state]          (r)    [right of=p]            {$t$};
  \node[state]          (p')    [above of=r]            {$p$};
  \node[state]          (q')    [right of=r]            {$q$};
  \node[state]          (q)    [right of=q]            {$s$};
  
  \path[->] (p) edge  node [above]   {$u$} (r)
            (r) edge  node [below right]   {$a$} (p')
            (r) edge  node [above]   {$a$} (q')
            (p') edge  node [above]   {$v$} (p)
            (q') edge  node [above]   {$v$} (q)
            (q) edge [loop above]  node [above]   {$uav$} (q);
\end{tikzpicture}

%% file: figures/kDisambiguation/EDA.tex
\begin{tikzpicture}[every label/.style={black!60}, initial text=, ->,>=stealth', node distance=1cm]
  \node[state]          (p)                {$s$};
  
  \path[->] (p) edge [loop above] node [above]   {$uav$} (p)
            (p) edge [loop below]  node [below]   {$uav$} (p);
\end{tikzpicture}

%% file: figures/kDisambiguation/EDAintermediate.tex
\begin{tikzpicture}[every label/.style={black!60}, initial text=, ->,>=stealth', node distance=1cm]
  \node[state]          (p)                {$s$};
  \node[state]          (p') [right=of p] {$r$};
  \node[state]          (q_1) [above right=of p'] {$p$};
  \node[state]          (q_2) [below right=of p'] {$q$};
  
  \path[->] (p) edge                 node [above]   {$u$} (p')
            (p') edge                node [above left]   {$a$} (q_1)
            (p') edge                node [below left]   {$a$} (q_2)
            (q_1) edge [bend right]  node [above]   {$v$} (p)
            (q_2) edge [bend left]  node [below]   {$v$} (p);
\end{tikzpicture}

%% file: figures/kDisambiguation/EDAalternative.tex
\begin{tikzpicture}[every label/.style={black!60}, initial text=, ->,>=stealth', node distance=1cm]
  \node[state]          (p')                {$r$};
  \node[state]          (q_1) [above right=of p'] {$p$};
  \node[state]          (q_2) [below right=of p'] {$q$};
  
  \path[->] (p') edge  [bend right]    node [below right]   {$a$} (q_1)
            (p') edge  [bend left]     node [above right]   {$a$} (q_2)
            (q_1) edge [bend right]    node [above left]   {$v \cdot u$} (p')
            (q_2) edge [bend left]     node [below left]   {$v \cdot u$} (p');
\end{tikzpicture}

%% file: figures/kDisambiguation/IDAproof1.tex
\begin{tikzpicture}[every label/.style={black!60}, initial text=, ->,>=stealth', node distance=0.7cm]

  \node (s1)                       {$s_1$};
  \node (si)  [below =of s1]   {$s_i$};
  \node (si1) [below right=of si]  {$s_{i+1}$};
  \node (si2) [above right=of si1]       {$s_{i+2}$};
  \node (sj1) [above left=of si2] {$s_{j-1}$};

  \node[draw,circle,fit=(s1)(si)(si1)(si2)(sj1),inner sep=0.1cm,label=above:$S$] {};

  \path (si)  edge                 node [left]       {$w^x$} (s1)
        (si1)  edge                 node [above right] {$w$}   (si)
        (si2) edge                 node [above left]      {$w$}   (si1)
        (sj1) edge         node [below left]{$w^y$} (si2)
        (si) edge         node  [below right] {$w$} (sj1);

\end{tikzpicture}

%% file: figures/kDisambiguation/IDAproof2.tex
\tikzset{
  nomark/.style   = {inner sep=0pt,font=\small},            
  state/.style    = {draw,circle,minimum size=8mm,font=\small},
  lab/.style      = {font=\scriptsize,inner sep=1pt},
}

\begin{tikzpicture}[->]

\node (r1)                       {$r_1$};
\node (ri)  [below =0.9cm of r1] {$r_i$};
\node (ri1) [below right =0.8cm of ri] {$r_{i+1}$};
\node (ri2) [above right =0.8cm of ri1] {$r_{i+2}$};

\node[draw,circle,fit=(r1)(ri)(ri1)(ri2),
      inner sep=3pt,label=above:$R$] (R) {};

\node[state] (t) [right=4cm of r1] {$t$};
\node[state] (q) [right=of t]        {$q$};
\node[state] (s) [right=of q]        {$s$};

\node[above=0pt of t] {$T$};
\node[above=0pt of q] {$\hat{Q}$};
\node[above=0pt of s] {$S$};

\path (ri)  edge[              ] node[left]        {$w^{s}$} (r1)    
      (ri)  edge[        ] node[above] {$w^y$    } (ri2)   
      (ri2) edge[              ] node[below right ] {$w$    } (ri1)   
      (ri)  edge[              ] node[below left ] {$w$   } (ri1);

\path (r1) edge node[above] {$u$} (t)
              (q)  edge node[above] {$v$} (s);
\path         (t)  edge node[above] {$a$} (q);
\end{tikzpicture}

%% file: figures/kDisambiguation/IDAproof3.tex
\tikzset{
  nomark/.style   = {inner sep=0pt,font=\small},            
  state/.style    = {draw,circle,minimum size=8mm,font=\small},
  lab/.style      = {font=\scriptsize,inner sep=1pt},
}

\begin{tikzpicture}[->]

\node (r1)                       {$r$};

\node[draw,circle,fit=(r1),
      inner sep=18pt,label=above:$R$] (R) {};

\node[state] (t) [right=2cm of r1] {$t$};
\node[state] (q) [right=of t]        {$q$};
\node (s) [right=2cm of q]        {$s$};

\node[draw,circle,fit=(s),
      inner sep=18pt,label=above:$S$] ( S) {};

\node[above=0pt of t] {$T$};
\node[above=0pt of q] {$\hat{Q}$};

\path (r1) edge node[above] {$w^{n \cdot m - 1} \cdot u$} (t)
              (r1) edge [loop below] node[right=] {$w^{n \cdot m}$} (r1)
              (s) edge [loop below] node[right] {$w^{n \cdot m}$} (s)
              (q)  edge node[above left] {$v$} (s);
\path         (t)  edge node[above] {$a$} (q);
\end{tikzpicture}

%% file: figures/kDisambiguation/IDAproof4.tex
\tikzset{        
  state/.style    = {draw,circle,minimum size=8mm,font=\small},
}

\begin{tikzpicture}[->]

\node (r1)                       {$r$};

\node[draw,circle,fit=(r1),
      inner sep=18pt,label=above:$R$] (R) {};

\node (t) [right=2.5cm of r1] {$t$};

\node (t_2) [below=0.8cm of t] {$t_2$};
\node[state] (q) [right=of t]        {$q$};
\node[state] (p) [below left=2cm of t_2] {$p$};
\node (s) [right=2cm of q]        {$s$};

\node[draw,circle,fit=(s),
      inner sep=18pt,label=above:$S$] ( S) {};

\node[draw,circle,fit=(t)(t_2),
      inner sep=5pt,label=above:$T$] ( T) {};

\node[above=0pt of q] {$\hat{Q}$};
\node[above=0pt of p] {$P$};

\path (r1) edge node[above] {$w^{n \cdot m - 1} \cdot u$} (t)
              (r1) edge node[above right] {$w^{n \cdot m - 1} \cdot u$} (t_2)
              (s) edge [loop below] node[right] {$w^{n \cdot m}$} (s)
              (q)  edge node[above left] {$v$} (s)
              (p)  edge node[right] {$v$} (r1);
\path         (t)  edge node[above] {$a$} (q)
              (t_2)  edge node[above left] {$a$} (p);
\end{tikzpicture}

%% file: figures/kDisambiguation/ECFtree.tex
\begin{tikzpicture}[every label/.style={black!60}, initial text=, ->,>=stealth', node distance=0.8cm]

  \node[state] (q0) {$r$};
  \node[state] (p1) [below left=0.8cm and 1.5cm of q0] {$p$};
  \node[state] (p2) [below right=0.8cm and 1.5cm of q0] {$q$};

  \node[state] (q1) [below=of p1] {$r$};
  \node[state] (q2) [below=of p2] {$r$};

  \node[state] (p3) [below left= of q1] {$p$};
  \node[state] (p4) [below right= of q1] {$q$};

  \node[state] (p5) [below left= of q2] {$p$};
  \node[state] (p6) [below right= of q2] {$q$};

  \node[state] (q3) [below=of p3] {$r$};
  \node[state] (q4) [below=of p4] {$r$};

  \node[state] (q5) [below=of p5] {$r$};
  \node[state] (q6) [below=of p6] {$r$};

  \path[<-] (q0) edge node[left] {$w$} (p1)
            (q0) edge node[right] {$w$} (p2);

  \path[<-] (p1) edge node[left] {$a$} (q1)
            (p2) edge node[right] {$a$} (q2);

  \path[<-] (q1) edge node[left] {$w$} (p3)
            (q1) edge node[right] {$w$} (p4);

  \path[<-] (q2) edge node[left] {$w$} (p5)
            (q2) edge node[right] {$w$} (p6);

  \path[<-] (p3) edge node[left] {$a$} (q3)
            (p4) edge node[right] {$a$} (q4)
            (p5) edge node[left] {$a$} (q5)
            (p6) edge node[right] {$a$} (q6);

\end{tikzpicture}

%% file: figures/kDisambiguation/ECFproof.tex
\begin{tikzpicture}[every label/.style={black!60}, initial text=, ->,>=stealth', node distance=0.8cm]

  \node[state] (q0) {$r_0$};
  \node[state] (qi) [below=of q0] {$r_i$};
  \node[state] (qf) [below=of qi] {$r$};

  \node[state] (p1) [below left=of qf] {$p$};
  \node[state] (p2) [below right=of qf] {$q$};

  \node[state] (r1) [below =of p1] {$r_1$};
  \node[state] (r2) [below =of p2] {$r_2$};

  \node[state] (qi1) [below=of r1] {$r_{i}$};
  \node[state] (qj1) [below=of r2] {$r_{j}$};

  \path[<-] (q0) edge node[right] {$(aw)^x$} (qi)
            (qi) edge node[right] {$(aw)^{n_2}$} (qf);

  \path[<-] (qf) edge node[left] {$w$} (p1)
            (qf) edge node[right] {$w$} (p2);

  \path[<-] (r1) edge node[left] {$(aw)^{n_1}$} (qi1)
            (r2) edge node[right] {$(aw)^{n_1}$} (qj1)
            (p1) edge node[left] {$a$} (r1)
            (p2) edge node[right] {$a$} (r2);

\end{tikzpicture}

%% file: sections/app-weighted.tex
We present several propositions and definitions that will be used in the subsequent proofs.

We denote by $\Delta^*$ the transitive closure of $\Delta$, representing word transitions.

\begin{proposition}\label{prop:wfaFrameworkIsomorphic}
  Given a WFA $\cW \ = \ (Q, \Sigma, \Delta, I, F)$ over $\mathbb{S}$ and ($\fac$, $\res$) be an identity factorization.
  Let $\cF = (\Pi, \fac, \res)$ be a PF-oracle where $\Pi$ have disjoint vector support.
  If the support of every vector in $\cW_{\cF} \ = \ (\bbS^Q, \Sigma, \Delta_{\cF}, I_\cF, F_{\cF})$ is a singleton, then $\cW_{\cF}$ is isomorphic to $\cW$.

\end{proposition}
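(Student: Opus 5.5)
The plan mirrors the proof of Proposition~\ref{prop:frameworkIsomorphic}, with weights handled by the identity factorization. Every reachable state of $\cW_{\cF}$ is a vector $V$ with $\supp(V)=\{q\}$ for some $q\in Q$, so $V = V[q]\odot e_q$. Here $e_q\in\{\zero,\one\}^Q$ is the indicator vector of $q$. The candidate isomorphism is $f(q)=e_q$, and I would show that every reachable state of $\cW_{\cF}$ is exactly some $e_q$ (not merely a scalar multiple of it), with matching initial weights, final weights and transition weights.

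\textbf{Initial states.} Since $\Pi$ has disjoint vector support and the elements of $\Pi(\vI)$ $\oplus$-sum to $\vI$, every $V\in\Pi(\vI)$ is nonzero with singleton support $\{q\}$. No two of them share a support point. Hence $V[q]=I(q)$ for each $q\in\supp(\vI)$, so $V = I(q)\odot e_q$; I will check that for singleton supports the sum condition forces this. By the identity factorization, $\res(V)=e_q$ and $\fac(V)=I(q)$. So the initial states of $\cW_{\cF}$ are exactly the $e_q$ with $I(q)\neq\zero$, with $I_{\cF}(e_q)=I(q)$. The final weights are immediate, since $F_{\cF}(e_q)=e_q^t\odot\vF=F(q)$.

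\textbf{Transitions.} I would argue by induction on distance from the initial states, as in Proposition~\ref{prop:frameworkIsomorphic}, that each reachable state is some $e_p$. Take such a state $e_p$ and a letter $a$. Then $\vec{\Delta}(e_p,a)$ is the row $\Delta_a[p,\cdot]$, whose support is $\DeltaFA(p,a)$. By hypothesis each $V'\in\Pi(\vec{\Delta}(e_p,a))$ has singleton support. Disjointness plus the sum condition give, for each $q$ with $(p,a,s,q)\in\Delta$, exactly one $V'=s\odot e_q$. The identity factorization then yields exactly one transition $(e_p,a,s,e_q)\in\Delta_{\cF}$. Conversely, every transition of $\Delta_{\cF}$ from $e_p$ arises this way. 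This gives a weight-preserving bijection between transitions. Reachability of every $e_q$ follows because $\cW$ is trimmed, and states of $\cW_{\cF}$ are trimmed to the reachable ones.

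\textbf{Main obstacle.} The delicate point is the step where I claimed $V'[q]$ equals the original entry. From $\bigoplus_{V'\in\Pi(V)}V'=V$ and pairwise disjoint supports, each coordinate $q$ receives a nonzero contribution from at most one $V'$. The sum at coordinate $q$ is then that single value, since $\oplus$ with $\zero$ is the identity. So for $q\in\supp(V)$, exactly one $V'$ has $q$ in its support and carries $V[q]$. For $q\notin\supp(V)$, no $V'$ can have $q$ in its support, as otherwise the sum there would be nonzero. This resolves the issue without any zero-divisor or zero-sum-free assumption.
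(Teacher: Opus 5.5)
Your proposal is correct and follows essentially the same route as the paper's proof. Both argue by induction outward from the initial states, use disjoint vector support to split each successor vector $\vec{\Delta}(e_p,a)$ into singleton-supported pieces, and use the identity factorization to turn each piece $s\odot e_q$ into the state $e_q$ with transition weight $s$. Your coordinate-wise argument that each piece carries exactly the original entry, together with your checks of initial and final weights, makes explicit what the paper only asserts. The one silent step is that the hypothesis concerns the states $\res(V')$; this suffices because $\supp(V')\subseteq\supp(\res(V'))$ (as $V'=\fac(V')\odot\res(V')$) and $V'\neq\zero^Q$.
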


\begin{proof}

Define, for each state $p \in Q$, the zero-one vector $V_p \in \mathbb{S}^Q$ such that:
\[
    V_p[q] = 
    \begin{cases}
    \overline{1} & \text{if } q = p, \\
    \overline{0} & \text{otherwise},
    \end{cases}
\]

We prove the isomorphism by induction on the length of input words $w$ using the bijection $f: Q \rightarrow \mathbb{S^Q}$ such that $f(q) = q \; \forall q \in Q$. 

\textbf{Base case ($w = \varepsilon$):}  
By definition of $I_{\cF}$, for every initial state $p \in \supp(\vI)$ of $\cW$ there exists a unique vector $V_p \in \supp(I_{\cF})$ with singleton support $\{p\}$.

\textbf{Inductive step:}  
Assume that for any word $w$ of length $n$, every vector reachable by reading $w$ in $\cW_{\cF}$ corresponds to some vector $V_p$ for a state $p \in Q$ with singleton support $\{p\}$.

Consider a letter $a \in \Sigma$ and a vector $V_p$ reachable by $w$ to some state $p$.  

By the definition of the transition relation $\Delta_{\cF}$ in the disambiguation construction, the set of successor vectors after reading $a$ from $V_p$ is
\[
\vec{\Delta_{\cF}}(V_p, a) = \{ \fac(V) \mid V \in \Pi((V_p^t \odot \Delta_a)^t) \}
\]

Note that $(V_p^t \odot \Delta_a)^t$ corresponds to the weight vector of all possible successor states of $p$ via $a$ in $\cW$. Since $V_p$ has singleton support $\{p\}$, the support of $(V_p^t \odot \Delta_a)^t$ matches the set $\{q \mid (s,q) \in \Delta(p, a)\}$.

Because $\Pi$ has disjoint vector support and $\cW_{\cF}$ has only vectors with singleton support by hypothesis, the partition $\Pi((V_p^t \odot \Delta_a)^t)$ splits $(V_p^t \odot \Delta_a)^t$ into a set of vectors, each supported on a unique state $q$ reachable from $p$ by $a$. 
Applying the identity factorization, we obtain: 

\[
\vec{\Delta_{\cF}}(V_p, a) = \{ V_q \mid (q, s) \in \Delta(p, a) \}
\]

Hence, for each transition $(p, a, s, q) \in \Delta$, there is a unique vector $V_q$ reachable by $wa$ and a corresponding transition $(V_p, a, s, V_q) \in \Delta_{\cF}$.

This establishes a one-to-one correspondence between states and transitions of $\cW_{\cF}$ and $\cW$, proving that $\cW_{\cF}$ is isomorphic to $\cW$.
\end{proof}

\begin{lemma}\label{lem:wfaDeltaFinite}
  Given a WFA $\cW = (Q, \Sigma, \Delta, I, F)$ over $\bbS$ and ($\fac$, $\res$) be any factorization.
  Let $\cF = (\Pi, \fac, \res)$ where $\Pi$ have disjoint vector support and $\cW_{\cF} = (\bbS, \Sigma, \Delta_{\cF}, I_\cF, F_{\cF})$.
  For all words $w$, initial states $V_i \in \supp(I_{\cF})$, $|\vec{\Delta}(V_i, w)|$ is finite. 
\end{lemma}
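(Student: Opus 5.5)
The plan is to read $\vec{\Delta}(V_i, w)$ as the set of states (vectors) of $\cW_{\cF}$ that can be reached from $V_i$ by reading $w$, that is, the $w$-successors of $V_i$ under $\Delta_{\cF}$. I will argue by induction on $|w|$ that this set is finite. Along the way I will record the explicit bound $|Q|^{|w|}$, which follows from the disjoint vector support of $\Pi$.

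The first step is a per-letter bound. Fix any $V \in \bbS^Q$ and $a \in \Sigma$. By definition of $\Delta_{\cF}$, the $a$-successors of $V$ are exactly the vectors $\res(V')$ with $V' \in \Pi(\vec{\Delta}(V,a))$. Hence there are at most $|\Pi(\vec{\Delta}(V,a))|$ of them.

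The definition of a weight partition function already states that $\Pi(U)$ is finite for every $U$, which is enough for finiteness. To obtain a uniform bound, I use that every element of $\Pi(U)$ lies in $\bbS^Q \setminus \{\zero^Q\}$ and so has nonempty support. Disjoint vector support then says that distinct elements of $\Pi(U)$ have pairwise disjoint, nonempty supports inside $Q$. Therefore $|\Pi(U)| \le |Q|$. The case $U = \zero^Q$ causes no trouble: since $\Pi(\zero^Q)$ must $\oplus$-aggregate to $\zero^Q$ and its elements are nonzero vectors with disjoint supports, it is empty. In all cases each state of $\cW_{\cF}$ has at most $|Q|$ successors per letter.

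The induction then runs as follows. For $w = \varepsilon$ the reachable set is $\{V_i\}$. For $w = w'a$, the set reached by $w$ is the union, over the finitely many $V$ reached by $w'$ (at most $|Q|^{|w'|}$ of them, by the induction hypothesis), of the at most $|Q|$ $a$-successors of each $V$. This gives a finite set of size at most $|Q|^{|w|}$. Taking the union over the finitely many initial vectors in $\supp(\vec I_{\cF})$ also yields finiteness, if that reading is intended.

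I do not expect a real obstacle. The only point needing care is fixing the intended meaning of the notation $\vec{\Delta}(V_i,w)$ for $\cW_{\cF}$, namely the set of reachable states rather than an aggregated vector. A secondary point is the harmless corner case $\Pi(\zero^Q)$, handled above.
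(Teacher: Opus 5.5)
Your proof is correct and follows the same induction on $|w|$ as the paper: you bound the successors of each reachable vector by $|\Pi(\vec{\Delta}(V,a))|$ and sum over the finitely many vectors reached by the prefix. Your extra observation that disjoint vector support forces $|\Pi(U)| \le |Q|$, giving the explicit bound $|Q|^{|w|}$, is a genuine sharpening; the paper's argument only uses that each $\Pi(U)$ is finite, which is already part of the definition of a weight partition function.
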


\begin{proof}
	By induction on $|w|$. For the empty word $\varepsilon$, $\vec{\Delta}(V_i, \varepsilon)$ is finite since $\Pi$ has disjoint vector support.  
	Assuming $\vec{\Delta}(V_i, w)$ is finite for a word $w$, then
	\[
	|\vec{\Delta}(V_i, wa)| \leq \sum_{V \in \vec{\Delta}(V_i, w)} |\Pi((V^t \odot \Delta_a)^t)|
	\]
	is a finite union of finite sets, therefore it is finite.
\end{proof}

We extend the definition of $\runs_{\cW}$ by setting $\runs_{\cW}(w,p)$ to be the set of all runs that end in $p$ after reading $w$, and $\runs_{\cW}(q,w,p)$ to be the set of all partial runs that start in $q$ and end in $p$ after reading $w$.

\paragraph{Victorious run} A run $\rho := p_0 \xrightarrow{w, v} p$ is \emph{victorious} if $v = \Delta_w[p_0, p]$.
For all tropical WFA $\cW$, it is known that for all words $w$ and states $p_0, p \in Q$, if $\Delta_w[p_0, p] \neq 0$, there exists a victorious run in $\runs_{\cW}(p_0, w, p)$ (see~\cite{KirstenOnDeterminizationWeighted} for example).
We extend this definition for $v = \Delta_w[I, p]$, there always exists a victorious run in $\runs_{\cW}(w, p)$.

Given a partial run $\rho$ we define $\omega_I(\rho) := I(p_0) \odot s_1 \odot \cdots \odot s_n$ as the weight of $\rho$ without the final function.

\begin{proposition}\label{prop:wfaResidual}
  Given a tropical WFA $\cW \ = \ (Q, \Sigma, \Delta, I, F)$.
  Let $\cF = (\Pi, \facMohri, \resMohri)$ where $\Pi$ have disjoint vector support and $\cW_{\cF} = (\bbT, \Sigma, \Delta_{\cF}, I_\cF, F_{\cF})$.
  For every word $w$ and vector $V \in \bbT^Q$ reachable by $w$, and for every run $\rho' := V_0 \xrightarrow{w/ v} V$:
\end{proposition}

$V[q] = \min_{q_i \in \supp(\vI)}(I(q_i)+ \Delta_w[q_i, q]) - v - I_{\cF}(V_0)$

\begin{proof}
	The following holds by induction on the length of \( w \):

	Base case with $w = \varepsilon $, $q_i = q$ and $V_0 = V$:
	\begin{align*}
			(I(q) + 0) - 0 - I_{\cF}(V) 
	\end{align*}

  There exists $V^* \in \Pi(\vI)$ such that $V = \res(V^*)$ and $V^*[q] = I(q)$ since $\Pi$ has disjoint vector support:
\[
			V^*[q] - I_{\cF}(\res(V^*)) = V^*[q] - \fac(V^*) = V[q]
\]

	By induction, suppose the proposition holds for $|w| \leq n$. 
	Let $|w \cdot a| = n+1$ and $V \in \vec{\Delta}(I, w)$ with a run $\rho' := V_0 \xrightarrow{w, v'} V' \xrightarrow{a, v} V$, then:

	For \( V \in \Pi(I) \),
	\begin{align*}
  &\!\!\!\!\!\!\!\!\!\!\!\!\!\!\!\!\!\!\min_{q_i \in \supp(\vI)}(I(q_i)+ \Delta_{wa}[q_i, q]) - v' - v - I_{\cF}(V_0)\\
  &=\min_{q_i\in \supp(\vI)}\min_{q' \in \supp(V')}(I(q_i)+ \Delta_{w}[q_i, q'] + \Delta_a[q', q]) - v' -v - I_{\cF}(V_0),\\
  &=\min_{q' \in \supp(V')} (\Delta_a[q', q] + \min_{q_i \in \supp(\vI)}(I(q_i)+ \Delta_{w}[q_i, q'])) - v' -v - I_{\cF}(V_0),\\
  &=\min_{q' \in \supp(V')} (\Delta_a[q', q] + \min_{q_i \in \supp(\vI)}(I(q_i)+ \Delta_{w}[q_i, q']) -v' - I_{\cF}(V_0)) - v.\\
  \intertext{By inductive hypothesis:}
  &=\min_{q' \in \supp(V')} (\Delta_a[q', q] + V'(q')) - v.\\
  \intertext{By definition of $\Delta_a$:}
  &=(V' + \Delta_a)[q] - v.\\
  \intertext{Because $(V', a, v, V) \in \Delta_{\cF}$, we know that $V = \res(V' + \Delta_a) = (V' + \Delta_a) - v$:}
  &=V[q].
	\end{align*}
\end{proof}

\begin{corollary}\label{cor:wfaResidualVictorious}
	For every words $w$ and vector $V \in \bbT^Q$ reachable by $w$, and for every run $\rho' := V_0 \xrightarrow{w, v} V$ and the victorious run $\rho$ from $I$ to $q$:
	$V[q]$ = $\omega_{I}(\rho) - v - I_{\cF}(V_0)$
\end{corollary}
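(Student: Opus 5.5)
This corollary follows directly from Proposition~\ref{prop:wfaResidual} once the right-hand side there is identified with the weight of a victorious run. Fix $w$, a vector $V$ reachable by $w$, a run $\rho' := V_0 \xrightarrow{w, v} V$ of $\cW_{\cF}$, and a state $q$. Proposition~\ref{prop:wfaResidual} gives
\[
V[q] \;=\; \min_{q_i \in \supp(\vI)}\bigl(I(q_i) + \Delta_w[q_i, q]\bigr) - v - I_{\cF}(V_0).
\]
So it suffices to show that the minimum equals $\omega_I(\rho)$, where $\rho$ is a victorious run over $w$ from $I$ to $q$.

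First, unfold the matrix product over the tropical semiring. The entry $\Delta_w[q_i,q]$ is the minimum over all partial runs from $q_i$ to $q$ on $w$ of the sum of transition weights. Hence $\min_{q_i}(I(q_i)+\Delta_w[q_i,q]) = (\vI^t \odot \Delta_w)[q]$. This is exactly the minimum of $\omega_I(\rho'')$ over all runs $\rho'' \in \runs_{\cW}(w,q)$; it is the quantity written $\Delta_w[I,q]$ in the extended definition of victorious run.

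Second, invoke the fact recalled just before Proposition~\ref{prop:wfaResidual}. If $\Delta_w[I,q] \neq \zero$, there is a victorious run $\rho \in \runs_{\cW}(w,q)$ attaining this minimum, i.e.\ $\omega_I(\rho) = \Delta_w[I,q]$. This holds because runs have finitely many choices at each step, so the minimum over the finite set of runs is attained. Substituting $\omega_I(\rho)$ for the minimum in the display above yields the claim.

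The only delicate point is the case $q \notin \supp(V)$, or more generally $\Delta_w[I,q] = \infty$, where no victorious run exists. I would restrict the statement to $q \in \supp(V)$. Then Proposition~\ref{prop:wfaResidual} gives $V[q] \neq \infty$, so the minimum is finite and $\rho$ exists. I would also check the identification of $I_{\cF}(V_0)$ with $\facMohri$ of the initial split vector, which is already built into Proposition~\ref{prop:wfaResidual}. No other obstacle is expected.
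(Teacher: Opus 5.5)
Your reduction is what the paper intends. The corollary has no separate proof and is meant to follow from Proposition~\ref{prop:wfaResidual} by reading $\min_{q_i}(I(q_i)+\Delta_w[q_i,q])$ as $\omega_I$ of a minimal run into $q$, and your restriction to $q\in\supp(V)$ is a necessary repair. But your claim that no other obstacle arises is wrong. Proposition~\ref{prop:wfaResidual} is false for a general $\Pi$ with disjoint vector support, so the derivation rests on an invalid lemma and the statement itself fails.

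Take the acyclic WFA with transitions $(q_1,a,0,q)$, $(q_2,a,0,q)$, $(q,b,0,f)$ and weights $I(q_1)=0$, $I(q_2)=5$, $F(f)=0$. Here $\ICF_{\cW}$ is the identity, so with $\cF=(\Pi_{\cO_{\ICF}},\facMohri,\resMohri)$ every vector is split into singletons. Write $e_x$ for the vector with $0$ at $x$ and $\infty$ elsewhere. Then $\cW_{\cF}$ has initial states $e_{q_1},e_{q_2}$ with $I_{\cF}(e_{q_2})=5$, and it has the run $\rho'=e_{q_2}\xrightarrow{a/0}e_q$. The victorious run $q_1\xrightarrow{a/0}q$ has $\omega_I(\rho)=0$. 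The statement therefore predicts $e_q[q]=0-0-5=-5$, whereas $e_q[q]=0$, and $q\in\supp(e_q)$, so your restriction does not rescue it.

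The faulty step is the first equality in the inductive step of Proposition~\ref{prop:wfaResidual}. It lets the intermediate state $q'$ range only over $\supp(V')$, so runs of $\cW$ that reach $q$ outside the blocks $V_0,\dots,V_n$ visited by $\rho'$ are silently dropped. The same induction does prove a correct local version. For $q\in\supp(V)$, we have $V[q]=\omega_I(\pi)-v-I_{\cF}(V_0)$, where $\pi$ has minimal $\omega_I$ among the runs $p_0\xrightarrow{a_1}\cdots\xrightarrow{a_n}p_n=q$ of $\cW$ with $p_j\in\supp(V_j)$ for every $j$. This is what your argument can honestly establish.

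The global version you state does hold for $\Pi_{\cO_{\CF}}$. Suppose a run of $\cW$ into $q$ is not tracked by $\rho'$. Then it is tracked by another run of $\cW_{\cF}$, whose support sequence differs from that of $\rho'$ at the first index where the run leaves $\supp(V_j)$. This yields two distinct runs of the disambiguated underlying NFA over $w$, both ending in sets that contain the co-reachable state $q$. That contradicts Theorem~\ref{theo:UFA-disambiguation}(1). For $\ICF$ and $\ECF$ the global version fails, as shown above, which also undermines the use of this corollary in Theorem~\ref{theo:WFAtermination} for those relations.
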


\subsection{Proof of Proposition~\ref{prop:wfaFrameworkEquivalence}}

\wfaFrameworkEquivalence*

\begin{proof}
  Given a WFA $\cW = (Q, \Sigma, \Delta, I, F)$ over $\bbS$ and a PF-oracle $\cF = (\Pi, \fac, \res)$, we define $\cW_{\cF} \ = \ (\bbS^Q, \Sigma, \Delta_{\cF}, I_\cF, F_{\cF})$.

We will use the following result to prove the equivalence for all $w \in \Sigma^*$:
	\[
			\vI^t \odot \Delta_w =  \bigoplus_{V_i \in \supp(I_{\cF})} \bigoplus_{V \in \vec{\Delta_\cF}(V_i, w)} I_{\cF}(V_i) \odot (\Delta_{\cF})_w[V_i, V] \odot V^t
	\]
We will prove this result by induction. Let $w = \eps$:
	\begin{align*}
		\vI^t &= \bigoplus_{V_i \in \supp(I_{\cF})} \bigoplus_{V \in \vec{\Delta_\cF}(V_i, \eps)} I_{\cF}(V_i) \odot V^t \\
\intertext{Since $V = V_i$:} %
		&= \bigoplus_{V_i \in \supp(I_{\cF})}  I_{\cF}(V_i) \odot V_i^t \\
\intertext{Because for every vector $V_i \in \supp(I_{\cF})$ there exist a vector $V \in \Pi(\vec{I})$ such that $V_i = \res(V)$:}%
		&= \bigoplus_{V \in \Pi(\vI)}  (I_{\cF}(\res(V)) \odot \res(V))^t \\
		&= \bigoplus_{V \in \Pi(\vI)}  (\fac(V) \odot \res(V))^t\\
		&= \bigoplus_{V \in \Pi(\vI)}  V^t \\
\intertext{Since $\bigoplus_{V' \in \Pi(V)} V' = V$ for every $V \in \bbS^Q$:} %
		\bigoplus_{V \in \Pi(\vI)}  V^t &= \vI^t %
	\end{align*}
	By induction, suppose the result holds for $|w| \leq n$. Let $|w| = n$, then:
	\begin{align*}
		\vI^t \odot \Delta_{wa} &= \bigoplus_{V_i \in \supp(I_{\cF})} \bigoplus_{V \in \vec{\Delta_{\cF}}(V_i, wa)} I_{\cF}(V_i) \odot (\Delta_{\cF})_{wa}[V_i, V] \odot V^t  \\
		 &= \bigoplus_{V_i \in \supp(I_{\cF})} \bigoplus_{V' \in \vec{\Delta_{\cF}}(V_i, w)} \bigoplus_{V^* \in \Pi(\vec{\Delta}(V', a))}  I_{\cF}(V_i) \odot (\Delta_{\cF})_w[V_i, V'] \odot (\fac(V^*) \odot \res(V^*))^t, \\
		 &= \bigoplus_{V_i \in \supp(I_{\cF})} \bigoplus_{V' \in \vec{\Delta_{\cF}}(V_i, w)} \bigoplus_{V^* \in \Pi(\vec{\Delta}(V', a))}  I_{\cF}(V_i) \odot (\Delta_{\cF})_w[V_i, V'] \odot (V^*)^t. \\
    \intertext{Since $\bigoplus_{V' \in \Pi(V)} V' = V$ for every $V \in \bbS^Q$:} %
		 &= \bigoplus_{V_i \in \supp(I_{\cF})} \bigoplus_{V' \in \vec{\Delta_{\cF}}(V_i, w)} I_{\cF}(V_i) \odot (\Delta_{\cF})_w[V_i, V'] \odot \vec{\Delta}(V', a)^t, \\
		 &= \bigoplus_{V_i \in \supp(I_{\cF})} \bigoplus_{V' \in \vec{\Delta_{\cF}}(V_i, w)} I_{\cF}(V_i) \odot (\Delta_{\cF})_w[V_i, V'] \odot V'^t \odot \Delta_a, \\
\intertext{By distributability of $\odot$ over $\oplus$:} %
		&= \left( \bigoplus_{V_i \in \supp(I_{\cF})} \bigoplus_{V' \in \vec{\Delta_{\cF}}(V_i, w)} I_{\cF}(V_i) \odot (\Delta_{\cF})_w[V_i, V'] \odot V'^t \right) \odot  \Delta_a. \\
		\intertext{By inductive hypothesis:} %
		&= \vI^t \odot \Delta_w \odot  \Delta_a = \vI^t \odot \Delta_{wa}. %
	\end{align*}

	Finally, we can prove the equivalence of $\cW$ and $\cW_{\cF}$ for all $w \in \Sigma^*$:

	\begin{align*}
		\sem{\cW_{\cF}}(w)&= \bigoplus_{V_i \in \supp(I_{\cF})} \bigoplus_{V \in \vec{\Delta_{\cF}}(V_i, w)} I_{\cF}(V_i) \odot (\Delta_{\cF})_w[V_i, V] \odot F_{\cF}(V). \\		
		\intertext{By definition of $F_{\cF}$:} 
		&= \bigoplus_{V_i \in \supp(I_{\cF})} \bigoplus_{V \in \vec{\Delta_{\cF}}(V_i, w)} I_{\cF}(V_i) \odot (\Delta_{\cF})_w[V_i, V] \odot V \odot \vec{F}. \\
		\intertext{Using the previous result:} 
		&= \vI^t \odot \Delta_{w} \odot \vec{F},\\
		&= \sem{\cW}(w).
	\end{align*}\end{proof}

\paragraph{On degree of ambiguity conditions} 
The (IDA)' and (EDA)' conditions can also be used to characterize the degree of ambiguity of a WFA, since this depends only on the number of accepting runs, while the weights do not matter.
For the following theorem, we will adapt those conditions with weights.

\subsection{Proof of Theorem~\ref{theo:WFA-disambiguation}}

\WFAdisambiguation*

We start by proving the correctness of the disambiguation construction for each ambiguity class and then we will prove the isomorphism property.

\subsubsection{Case UWFA}

\begin{proof}
  Given a WFA $\cW = (Q, \Sigma, \Delta, I, F)$ over $\bbS$ and any factorization ($\fac$, $\res$).
  Let $\cF = (\Pi_{\cO_{\CF}}, \fac, \res)$ and $\cW_{\cF} \ = \ (\bbS^Q, \Sigma, \Delta_{\cF}, I_\cF, F_{\cF})$.
 
	Assume that $\cW_{\cF}$ is finite.

  Suppose, for contradiction, that $\cW_{\cF}$ is not unambiguous.
	Then there exist two different accepting runs $\rho_1$ and $\rho_2$ over the same word $w= a_1a_2\ldots a_n$ such that:

	\[
		\begin{array}{rcl}
			\rho_1 \ & := & \ V_0 \, \xrightarrow{a_1/ v_1} \ldots \xrightarrow{a_{i-1}/ v_{i-1}} V_{i-1} \xrightarrow{a_i/ v_i} V_i \xrightarrow{a_{i+1}/ v_{i+1}} \ldots \xrightarrow{a_n/ v_n} \, V_n 	\\
			\rho_2 \ & := & \ V_0' \, \xrightarrow{a_1/ v_1'} \ldots \xrightarrow{a_{i-1}/ v_{i-1}'} V_{i-1}' \xrightarrow{a_i/ v_i'} V_i' \xrightarrow{a_{i+1}/ v_{i+1}'}  \ldots \xrightarrow{a_n/ v_n'} \, V_n' 	
		\end{array}	
	\]

	Let $i$ be the largest index such that $V_j = V_j'$ for all $j < i$, and $V_i \neq V_i'$.
	Define $S_i = \textsf{supp}(V_i)$ and $S_i' = \textsf{supp}(V_i')$.

	\textbf{Case 1:} $S_i = S_i'$. Since $\Pi_{\cO_{\CF}}$ has disjoint vector support then $V_i = V_i'$, and it contradicts the assumption that $V_i \neq V_i'$.

  \textbf{Case 2:} $S_i \neq S_i'$. This case is analogous to the proof of Theorem~\ref{theo:UFA-disambiguation}, and similarly implies that $V_i = V_i'$, again contradicting the assumption.

	Therefore $\cW_{\cF}$ must be unambiguous.

\end{proof}

\subsubsection{Case finWFA}
\begin{proof}
  Given a WFA $\cW = (Q, \Sigma, \Delta, I, F)$ over $\bbS$ and any factorization ($\fac$, $\res$).
  Let $\cF = (\Pi_{\cO_{\ICF}}, \fac, \res)$ and $\cW_{\cF} \ = \ (\bbS^Q, \Sigma, \Delta_{\cF}, I_\cF, F_{\cF})$.
 
	Assume that $\cW_{\cF}$ is finite.

	Suppose that the WFA $\cW_{\cO_{\ICF}}$ is not finWFA and satisfies (IDA)', then there exist vectors $R, P, T, \hat{Q}, S \in \bbS^Q. \; P \neq \hat{Q}$, weights $s_1, s_2, s_3, s_4, s_5, s_6 \in \bbS$, a letter $a$ and words $u, v$ such that:

    \[
        \begin{array}{rcl}
            (R, u, s_1, T), (T, a, s_2, P), (P, v, s_3, R), (T, a, s_4, \hat{Q}), (\hat{Q}, v, s_5, S), (S, uav, s_6, S) & \in & \Delta_{\cF}^* \\
        \end{array}
    \]

	We can use the same proof strategy defined in the proof of Theorem~\ref{theo:UFA-disambiguation} to obtain a pair of states $p' \in \supp(P')$ and $q' \in \supp(S')$ that share an infinite common future.
	This implies that $P' = S'$ because $\Pi_{\cO_{\ICF}}$ partitions vectors using the oracle $\cO_{\ICF}$.
	However, this is a contradiction to the initial assumption.

	In conclusion, $\cW_{\cF}$ is finWFA.

\end{proof}

\subsubsection{Case polyWFA}

\begin{proof}
  Given a WFA $\cW = (Q, \Sigma, \Delta, I, F)$ over $\bbS$ and any factorization ($\fac$, $\res$).
  Let $\cF = (\Pi_{\cO_{\ECF}}, \fac, \res)$ and $\cW_{\cF} \ = \ (\bbS^Q, \Sigma, \Delta_{\cF}, I_\cF, F_{\cF})$.
  Assume that $\cW_{\cF}$ is finite.

	Suppose that the WFA $\cW_{\cO_{\ECF}}$ is not polyWFA and satisfies (EDA)', then there exist states $R, \hat{Q}, P \in \bbS^Q. \; \hat{Q} \neq P$, weights $s_1, s_2, s_3, s_4 \in \bbS$, a letter $a$ and a word $w$ such that:

	$$
			(R, a, s_1, P), (R, a, s_2, \hat{Q}), (P, w, s_3, R), (\hat{Q}, w, s_4, R) \in \Delta_{\cF}^*
	$$

	We can use the Infinite Run Tree defined in the proof of Theorem~\ref{theo:UFA-disambiguation} to obtain a pair of states $p \in \supp(S_1)$ and $p \in \supp(S_2)$ that share an exponentially infinite common future.
	This implies that $S_1 = S_2$ because $\Pi_{\cO_{\ECF}}$ partitions vectors using the oracle $\cO_{ECF}$.
	However, this is a contradiction to the initial assumption.

	In conclusion, $\cW_{\cF}$ is polyWFA.

	\end{proof}

\subsubsection{Isomorphism}

\begin{proof}
Given a WFA $\cW = (Q, \Sigma, \Delta, I, F)$ over $\bbS$, any factorization ($\fac$, $\res$) and $R \in \{\CF_{\cW}, \ICF_{\cW}, \ECF_{\cW}\}$.
Let $\cF = (\Pi_{\cO_{R}}, \fac, \res)$ and $\cW_{\cF} \ = \ (\bbS^Q, \Sigma, \Delta_{\cF}, I_\cF, F_{\cF})$.
  
If $\cW$ is already UWFA, finWFA or polyWFA respectively and ($\fac$, $\res$) is an identity factorization, then no two distinct states $p, q \in Q$ exist such that $(p, q) \in R$.  
Consequently, $\Pi_{\cO_{R}}$ partitions every vector into vectors with singleton support.  
By Proposition~\ref{prop:wfaFrameworkIsomorphic}, it follows that $\cW_{\cF}$ is isomorphic to $\cA$.

\end{proof}

\subsection{Proof of Theorem~\ref{theo:WFAtermination}}

\WFAtermination*

\begin{proof}
  Given a tropical WFA $\cW = (Q, \Sigma, \Delta, I, F)$ and $R \in \{\CF_{\cW}, \ICF_{\cW}, \ECF_{\cW}\}$.
  Let $\cF = (\Pi_{\cO_R}, \facMohri, \resMohri)$ and $\cW_{\cF} \ = \ (Q_{\cF}, \Sigma, \Delta_{\cF}, I_\cF, F_{\cF})$.

	Suppose that $Q_{\cF}$ is infinite. Then, there exists a subset of states $S = \{q_0, \ldots, q_m\} $ such that there are infinitely many states $V \in Q_{\cF}$ with $\textsf{supp}(V) = S$.
	Note that $m \geq 1$. If $S = \{q_0\}$, there is only a single element, then for any state $V$, we have $V[q_0] = v$, $\fac(V) = v$ and $\res(V)[q_0] = 0$, so there is only one possible $V$.

	Let $K_1$ be the set of words $w$ such that:

	\[
		\exists V \in \vec{\Delta}(\vI, w).\; \textsf{supp(V)} = S
	\]

	For each $w$, the set $|\vec{\Delta}(\vI, w)|$ is finite (Lemma~\ref{lem:wfaDeltaFinite}). Therefore, $K_1$ must be infinite.

	Additionally, for each state $V$, at least one state $q$ exists such that $V[q] = 0$. 
	This is because the factorization is defined as the minimum of the residuals.
	Since $K_1$ is infinite and there exists a residual with value $0$, there exists $i \in [0, m]$ such that for an infinite number of words $w \in K_1$, we have that $\exists V \in \vec{\Delta}(\vI, w). \; V[q_{i}] = 0$.
	Let $K_2 \subseteq K_1$ be the set of such words and without loss of generality, assume that $i = 0$.

	For every word $w \in K_2$, since $\vec{\Delta}(\vI, w)$ has disjoint vector support, there exists a unique vector $V$ such that $q_0 \in \supp(V)$.
	Let $V^w$ be such a vector.

	There exists $j \in [m]$ such that the residual $V^w[q_j]$ is different for an infinite number of words $w \in K_2$.
	Then $q_0$ and $q_j$ are part of the same connected component of $\kappa(R \ \bigcap \ (\supp(V^w) \ \times \ \supp(V^w))$, let $q_0, q_1, \ldots, q_j$ be consecutive states such that $(q_i, q_{i+1}) \in R$.
	Let $z$ be the minimum index such that $\{V^w[q_{z}] \mid w \in K_2 \}$ is finite and  $\{V^w[q_{z+1}] \mid w \in K_2 \}$ is infinite.

	Let $K_3$ be an infinite set of words $w$ with all residuals $V^w[q_{z+1}]$ different.

	Define the difference run weight set $\textsf{diff}(q_z, q_{z+1})$ as:

	\begin{align*}
		\textsf{diff}(q_z, q_{z+1}) = \{\, & \omega_I(\rho_{z+1}) - \omega_I(\rho_z) \;\mid \\
		& \rho_z \in \runs_{\cW}(w, q_z) \land \rho_{z+1} \in \runs_{\cW}(w, q_{z+1}) \land |w| \leq |Q|^2-1 \}
	\end{align*}

	In other words, $\textsf{diff}(q_z, q_{z+1})$ contains the weight differences between all pair of runs that ends in $q_z$ and $q_{z+1}$ respectively.
	Note that $\textsf{diff}(q_z, q_{z+1})$ is finite.

	We will show that the set $\{V^w[q_{z+1}] - V^w[q_z] \mid w \in K_3\} \subseteq \textsf{diff}(q_z, q_{z+1})$.

	For a word $w \in K_3$, let $\rho_z$ and $\rho_{z+1}$ be two victorious runs of $\cW$ over $w$ that end in $q_z$ and $q_{z+1}$ respectively.
	By Corollary~\ref{cor:wfaResidualVictorious}, we know that:

	\[
		\begin{array}{rcl}
			V^w[q_{z+1}] - V^w[q_z] & = & \omega_I(\rho_{z+1}) - \omega_I(\rho_z)\\				
		\end{array}	
	\]

	We have two cases:

	\begin{enumerate}
		\item If $|w| \leq |Q|^2-1$, then $V^w \in \textsf{diff}(q_z, q_{z+1})$.
		\item If $|w| > |Q|^2-1$, then using the pigeonhole principle, it is known that we can factorize the runs as follows (See Lemma 1 in~\cite{mohriTransducer}):
		
		\[
			\begin{array}{rcl}
				\rho_z \ & := & \ q_z^I \, \xrightarrow{w_1/ v_1} \, p \, \xrightarrow{w_2/ v_2} \ p \ \xrightarrow{w_3/v_3} \, q_z \\
				\rho_{z+1} \ & := & \ q_{z+1}^I \, \xrightarrow{w_1/ v_1} \, q \, \xrightarrow{w_2/ v_2} \ q \ \xrightarrow{w_3/ v_3} \, q_{z+1} \\
				|w_2| & > &  0 \\
				|w_1| + |w_3| & \leq&  |Q|^2 - 1 \\
				 
			\end{array}	
		\]

		Then we can define the following runs:

		\[
			\begin{array}{rcl}
				\rho_z' \ & := & \ q_z^I \, \xrightarrow{w_1/ v_1} \, \ p \ \xrightarrow{w_3/ v_3} \, q_z \\
				\rho_{z+1}' \ & := & \ q_{z+1}^I \, \xrightarrow{w_1/ v_1} \, \ q \ \xrightarrow{w_3/ v_3} \, q_{z+1} \\
			\end{array}	
		\]

		Now we have that $\omega(\rho_z) = \omega(\rho_z') + \Delta_{w_2}[p, p]$ and $\omega(\rho_{z+1}) = \omega(\rho_{z+1}') + \Delta_{w_2}[q, q]$. 
		
		Since $\cW$ has the future twins property, $\Delta_{w_2}[p, p] = \Delta_{w_2}[q, q]$, then $\omega_I(\rho_{z+1}') - \omega_I(\rho_z') = V^w[q_{z+1}] - V^w[q_z]$.
		This results in $V^w[q_{z+1}] - V^w[q_z] \in \textsf{diff}(q_z, q_{z+1})$.
	\end{enumerate}

	We proved that the set $\{V^w[q_{z+1}] - V^w[q_z] \mid w \in K_3\}$ is finite. However, this contradicts the assumption that all residuals $V^w[q_{z+1}]$ are different.
	Hence $\cW_{\cF}$ is finite.
\end{proof}